% -*- fill-column: 80 -*-

\documentclass[11pt,a4paper]{article}
\usepackage[english]{babel} %
\usepackage[T1]{fontenc}
\usepackage{graphicx,xspace,amsmath,enumitem}
\usepackage{wrapfig,dsfont,subfig}
\usepackage[colorlinks=true,urlcolor=blue,citecolor=red]{hyperref}
\usepackage{booktabs}
\usepackage[nofancy]{svninfo}
\usepackage{xcolor}

%% use concrete math fonts (including CMSS demibold condensed as bfseries)
\usepackage[boldsans]{concmath} %
\usepackage{euler} %

\usepackage{amsfonts,amssymb,bm,amsthm,fullpage,thm-restate}

\newtheorem{lemma}[equation]{Lemma}

\newtheorem{theorem}[equation]{Theorem}

\newcommand{\minipar}[1]{\smallskip\textbf{#1}} %
\newcommand{\N}{\ensuremath{\mathds N}} %
\newcommand{\R}{\ensuremath{\mathds R}} %
\newcommand{\figscale}{1.1}

%----------------------- Macros and Definitions --------------------------

\graphicspath{{figures/}}

%----------------------- Title -------------------------------------------

\begin{document}

\title{Arc diagrams, flip distances, and Hamiltonian triangulations\footnote{An
    extended abstract~\cite{chktw-adfdht-15} of this work was presented at the
    32nd Symposium on Theoretical Aspects of Computer Science (STACS~2015).}}

\author{%
  Jean Cardinal\thanks{Universit\'e libre de Bruxelles (ULB), Belgium,
    \texttt{jcardin@ulb.ac.be}. Partially supported by the ESF EUROCORES
    programme EuroGIGA, CRP ComPoSe.} %
  \and %
  Michael Hoffmann\thanks{Department of Computer Science, ETH Z\"urich,
    Switzerland, \texttt{\{hoffmann,kustersv,manuelwe\}@inf.ethz.ch}.}
  \thanks{Partially supported by the ESF EUROCORES programme EuroGIGA, CRP GraDR
    and the Swiss National Science Foundation, SNF Project 20GG21-134306.} %
  \and %
  Vincent Kusters\footnotemark[3] \footnotemark[4] %
  \and %
  Csaba D. T\'oth\thanks{California State University Northridge, Los Angeles,
    CA; and Tufts University, Medford, MA, USA, \texttt{cdtoth@acm.org}. 
    Partially supported by the NSF awards CCF-1422311 and CCF-1423615.}%
  \and %
  Manuel Wettstein\footnotemark[3] \thanks{Partially supported by the ESF
    EUROCORES programme EuroGIGA, CRP ComPoSe and the Swiss National Science
    Foundation, SNF Project 20GG21-134318/1.} %
}

\maketitle

\begin{abstract}
  We show that every triangulation (maximal planar graph) on $n\ge 6$ vertices
  can be flipped into a Hamiltonian triangulation using a sequence of less than
  $n/2$ combinatorial edge flips. The previously best upper bound uses
  $4$-connectivity as a means to establish Hamiltonicity. But in general about
  $3n/5$ flips are necessary to reach a $4$-connected triangulation. Our result
  improves the upper bound on the diameter of the flip graph of combinatorial
  triangulations on $n$ vertices from $5.2n-33.6$ to $5n-23$. We also show that
  for every triangulation on $n$ vertices there is a simultaneous flip of less
  than $2n/3$ edges to a $4$-connected triangulation. The bound on the number of
  edges is tight, up to an additive constant. As another application we show
  that every planar graph on $n$ vertices admits an arc diagram with less than
  $n/2$ biarcs, that is, after subdividing less than $n/2$ (of potentially
  $3n-6$) edges the resulting graph admits a $2$-page book embedding.
\end{abstract}

\section{Introduction}\label{sec:intro}

An \emph{arc diagram} (\figurename~\ref{fig:0}) is a drawing of a graph in which
vertices are represented by points on a horizontal line, called the
\emph{spine}, and edges are drawn either as one halfcircle (\emph{proper arc})
or as a sequence of halfcircles centered on the line (forming a smooth Jordan
arc). In a \emph{proper arc diagram} all arcs are proper.  Arc diagrams have
been used and studied in many contexts since their first appearance in the
mid-sixties~\cite{n-ppmnc-68,s-mnicg-64}. They constitute a well-studied
geometric representation in graph drawing~\cite{ddl-srd-13} that occurs, for
instance, in the study of crossing numbers~\cite{aafrs-sdccn-14,bk-mpccp-64} and
universal point sets for circular arc drawings~\cite{aefklmtw-upsdpgca-14}.

\begin{figure}[hbtp]
  \centering%
  \subfloat[]{\includegraphics{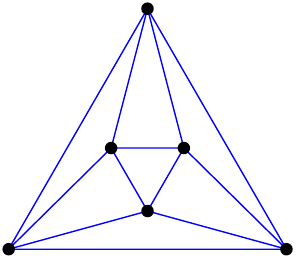}}\hfil
  \subfloat[]{\includegraphics{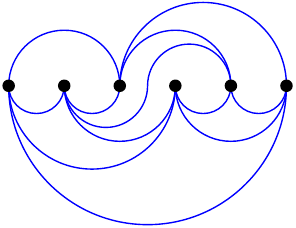}\label{fig:1:b}}\hfil
  \subfloat[]{\includegraphics{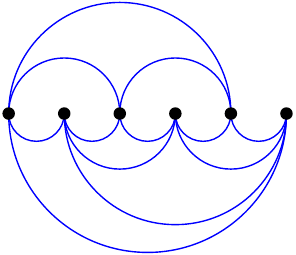}\label{fig:1:c}}\hfil
  \caption{A plane straight-line drawing (a), an arc diagram (b) and a
    proper arc-diagram (c) of the same graph.\label{fig:0}}
\end{figure}

Bernhart and Kainen~\cite{bk-btg-79} proved that a planar graph admits a
\emph{plane} (i.e., crossing-free) proper arc diagram if and only if it can be augmented
to a Hamiltonian planar graph by adding new edges. Such planar graphs are also
called \emph{subhamiltonian}, and they are NP-hard to recognize~\cite{w-chcpm-82}. A
Hamiltonian cycle in the augmented graph directly yields a feasible order for
the vertices on the spine. Every planar graph can be subdivided into a
subhamiltonian graph with at most one subdivision vertex per
edge~\cite{pw-epgfv-01}. Consequently, every planar graph admits a plane
\emph{biarc diagram} in which each edge is either a proper arc or the union of
two halfcircles (a \emph{biarc}); one above and one below the spine. Di Giacomo
et al.~\cite{ddlw-ccdpg-05} showed that every planar graph even admits a
\emph{monotone} plane biarc diagram in which every biarc is $x$-monotone---such an
embedding is also called a \emph{2-page topological book embedding}.
See~\cite{ddl-srd-13} for various other applications of subhamiltonian
subdivisions of planar graphs.

Eppstein~\cite{e-upsdpgca-14} said: ``Arc diagrams (with one arc per edge) are
very usable and practical but can only handle a subset of planar graphs.'' Using
biarcs allows us to represent all planar graphs, but adds to the complexity of the
drawing. Hence it is a natural question to ask: How close can we get to a proper
arc diagram, while still being able to represent all planar graphs?  A natural
measure of complexity is the number of biarcs used.

Previous methods for subdividing an $n$-vertex planar graph into a
subhamiltonian graph use at most one subdivision per
edge~\cite{ddl-srd-13,ddlw-ccdpg-05,kw-evpfb-02,pw-epgfv-01}, consequently the
number of biarcs in an arc diagram is bounded by the number of edges. Our main
goal in this paper is to tighten the upper and lower bounds on the minimum
number of biarcs in an arc diagram (or, alternatively, the number of
subdivision vertices in a subhamiltonian subdivision) of a planar graph with
$n$ vertices. Minimizing the number of biarcs is clearly NP-hard, since the
number of biarcs is zero if and only if the graph is subhamiltonian.

\minipar{Our results.}  In Section \ref{sec:monotone} we show that the number of
biarcs can be bounded by $n$, even when they are restricted to be monotone. Although
previous methods can be shown to yield less than the trivial $3n-6$
biarcs~\cite{kw-evpfb-02}, or ensure monotonicity~\cite{ddlw-ccdpg-05}, we
give the first proof that both properties can be guaranteed simultaneously. The
algorithm is similar to the canonical ordering-based method of Di Giacomo et
al.~\cite{ddlw-ccdpg-05}.

\begin{restatable}{theorem}{biarcmonotone}\label{thm:biarcmonotone}
  Every planar graph on $n\ge 4$ vertices admits a plane biarc diagram using at
  most $n-4$ biarcs, all of which are monotone. Moreover, such a diagram can be
  computed in $O(n)$ time.
\end{restatable}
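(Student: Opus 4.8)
The plan is to reduce the statement to the case of triangulations and then construct the diagram incrementally along a canonical ordering, following closely the canonical-ordering method of Di~Giacomo et al.~\cite{ddlw-ccdpg-05}. First I would add edges to $G$ to obtain a triangulation (maximal planar graph) $T$ on the same $n$ vertices; this takes $O(n)$ time, and a plane biarc diagram of $T$ restricts to one of $G$ with no more biarcs, since deleting edges never creates a biarc. So it suffices to prove the statement for a triangulation $T$. Fix a canonical ordering $v_1,v_2,\dots,v_n$ of $T$, computable in $O(n)$ time, with outer triangle $v_1v_2v_n$: for $3\le k\le n$ the graph $T_k=T[\{v_1,\dots,v_k\}]$ is $2$-connected, its outer face is a cycle $C_k$ through the edge $v_1v_2$, and the neighbours of $v_k$ in $T_{k-1}$ form a subpath $c_p,c_{p+1},\dots,c_q$ (with $p<q$) of $C_{k-1}$ (for $k=n$ this subpath is all of $C_{n-1}$).

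I would then insert $v_3,\dots,v_{n-1}$ in this order while maintaining a monotone plane biarc diagram of $T_k$ with the following invariants: $v_1$ is the leftmost and $v_2$ the rightmost point on the spine; the path $C_k\setminus\{v_1v_2\}$ visits the remaining vertices in left-to-right order along the spine, each of its edges being drawn as a proper arc in the top page, so that this path forms an $x$-monotone ``staircase''; the outer edge $v_1v_2$ is drawn as a proper arc in the bottom page; and everything not on $C_k$ is ``frozen'' and never touched again. When $v_{k+1}$ with neighbourhood $c_p,\dots,c_q$ is processed, I place it on the spine immediately to the left of $c_q$, draw the fan $v_{k+1}c_p,\dots,v_{k+1}c_q$ as nested proper arcs in the top page (so $v_{k+1}c_p$ and $v_{k+1}c_q$ become the two new staircase edges spanning this interval), and leave the old staircase edges $c_pc_{p+1},\dots,c_{q-2}c_{q-1}$ where they are. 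The key point is that $c_{q-1}c_q$ is the \emph{only} already-drawn edge that is now illegally placed, because it must ``hop over'' the newly inserted $v_{k+1}$; I reroute it as a single proper arc in the bottom page, which is possible unless it is blocked there by an edge rerouted in an earlier step, in which case I instead draw $c_{q-1}c_q$ as a biarc whose subdivision vertex lies on the spine strictly between $c_{q-1}$ and $c_q$, with one half-circle in each page, chosen so as to avoid the obstruction. Finally I insert $v_n$ at the far right end of the spine; since its neighbourhood is all of $C_{n-1}$, its incident edges nest over the whole current drawing in the top page and no edge has to hop, so this step adds no biarc.

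Given the construction, the count is immediate. The insertion of $v_3$ only forces the outer edge $v_1v_2$ to ``hop'', which is already accounted for by drawing it in the bottom page from the start, so it adds no biarc; $v_n$ adds none; and each of the $n-4$ insertions $v_4,\dots,v_{n-1}$ creates at most one biarc --- namely $c_{q-1}c_q$, and only when it is blocked. Hence the diagram uses at most $n-4$ biarcs. Every proper arc (a single half-circle) is $x$-monotone, and every biarc is $x$-monotone because its subdivision vertex lies between its two endpoints, so the whole diagram is monotone. Representing the spine order by a doubly linked list and assigning $x$-coordinates only at the end, each insertion costs time proportional to the degree of the inserted vertex, for $O(n)$ total; triangulation and the canonical ordering also take $O(n)$ time.

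I expect the main obstacle to be verifying that the construction actually produces a plane drawing, i.e.\ the precise bookkeeping behind the invariants. One must pin down, for each staircase vertex, on which side of it the bottom page is already occupied by frozen edges and by earlier rerouted ``hop'' edges, in order to show that the rerouted edge $c_{q-1}c_q$ (or, when it is blocked, one half of its biarc) never crosses a frozen edge or a fan edge, and that two distinct ``hop'' edges can conflict only in a way that is resolved by turning at most one of them into a single biarc. Everything else --- the reduction to triangulations, drawing the fan, the treatment of $v_3$ and $v_n$, monotonicity, the biarc count, and the running time --- is routine.
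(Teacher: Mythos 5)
There is a genuine gap, and it is exactly at the point you flag as ``bookkeeping'': your key claim that after placing $v_{k+1}$ immediately to the left of $c_q$ the edge $c_{q-1}c_q$ is the \emph{only} already-drawn edge in conflict is false. When $c_q$ was itself inserted at an earlier step with neighbourhood $c'_{p'},\dots,c'_{q'}$, your rule drew the fan edges $c_qc'_t$ for $p'<t<q'$ as proper arcs in the top page going \emph{left} from $c_q$; these are now frozen, their left endpoints lie strictly between $c_{q-1}$ and $c_q$ on the spine, and they pass directly over the spot where you now place $v_{k+1}$. Hence every new fan edge $v_{k+1}c_u$ with $u\le q-2$ interleaves with such a frozen arc and crosses it in the top page, and rerouting $c_{q-1}c_q$ does nothing to repair this. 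A concrete instance: take the triangulation with canonical order in which $v_4$ is adjacent to $v_1,v_3,v_2$ and $v_5$ is adjacent to $v_1,v_4$ (completed by $v_6$ adjacent to all of $C_5$). After inserting $v_4$ you have the frozen top arc $v_4v_3$ and spine order $v_1,v_3,v_4,v_2$; inserting $v_5$ just left of $v_4$ forces the fan edge $v_5v_1$ to cross $v_4v_3$ in the top page, and it is also blocked below the spine by the rerouted edge $v_3v_2$, so it cannot be saved by moving it to the bottom page either.

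This is not a local fix: if you repair it by bending the obstructing frozen arcs at $c_q$ down into biarcs (which is essentially what the paper does, at the mirror-image insertion position next to $w_{\ell_i}$ or $w_{f_i}$), a single insertion can create many biarcs, so your ``at most one biarc per insertion of $v_4,\dots,v_{n-1}$'' count collapses and with it the $n-4$ bound. The paper's proof accepts multiple biarcs per step and instead bounds biarcs indirectly, by an amortized count showing that at least $2n-2$ edges remain proper arcs in the final drawing (each step contributes at least $r_i-\ell_i$ permanent proper arcs, thanks to a careful choice of the insertion point as the rightmost neighbour whose outgoing staircase edge lies below the spine), together with the invariant that every biarc is a down--up biarc so that the spine is never blocked at the insertion point. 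To salvage your route you would need both such a mechanism for the arcs incident to $c_q$ and a replacement for the per-step biarc count; the reduction to triangulations, monotonicity, and the $O(n)$ running time are indeed the routine parts.
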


For arbitrary (not necessarily monotone) biarcs we achieve better bounds.  Our
main tool is relating subhamiltonian planar graphs to edge flips and subdivisions in
triangulations.

A \emph{flip} in a triangulation involves switching the diagonal of a
quadrilateral made of two adjacent facial triangles. We consider
\emph{combinatorial} flips, which can be regarded as an operation on an abstract
graph. The \emph{flip graph} induced by flips on the set of all triangulations
on $n$ vertices, and the corresponding \emph{flip distance} between two
triangulations, have been the topic of extensive
research~\cite{bh-fpg-09,bv-hfct-12}. For instance, the flip diameter restricted
to the interior of a convex polygon is equivalent to the rotation distance of binary
trees~\cite{p-da-14,stt-rdthg-88}.

By \emph{subdividing} an edge $e$ we mean replacing $e$ with a new vertex
that is connected to both endpoints of $e$. The following theorem, which is proved
in Section \ref{sec:generalbiarcs}, relates biarcs to edge subdivisions and is
a simple generalization of the characterization of Bernhart and Kainen.

\begin{restatable}{theorem}{bernkaingeneral}\label{thm:bernkaingeneral}
  A planar graph $G$ admits a plane biarc diagram with at most $k$ biarcs if and
  only if there is a set of at most $k$ edges in $G$ so that subdividing these
  edges transforms $G$ into a subhamiltonian graph.
\end{restatable}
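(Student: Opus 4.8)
The statement is an equivalence, and both directions go through the Bernhart--Kainen characterization~\cite{bk-btg-79}: a planar graph admits a plane proper arc diagram if and only if it is subhamiltonian. Throughout, for a set $S$ of edges of $G$ let $G_S$ denote the graph obtained from $G$ by subdividing each edge of $S$ once, and for $e=xy\in S$ let $v_e$ be its subdivision vertex, so $v_e$ has degree $2$ with neighbours $x,y$ and every other vertex of $G_S$ keeps its $G$-neighbourhood. The plan is to match plane biarc diagrams of $G$ whose biarc set is $S$ against plane proper arc diagrams of $G_S$, and then invoke Bernhart--Kainen.

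For the ``only if'' direction, start from a plane biarc diagram of $G$ and let $B$, $|B|\le k$, be its set of biarcs. Each biarc is a union of two halfcircles meeting at a single point of the spine. Placing the subdivision vertex $v_e$ of $e=xy\in B$ at this meeting point turns the two halfcircles into two proper arcs incident to $v_e$ and introduces no new crossing, while all other edges stay proper arcs. We obtain a plane proper arc diagram of $G_B$, so $G_B$ is subhamiltonian by Bernhart--Kainen, and $|B|\le k$ as required.

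For the ``if'' direction, suppose $G_{E_0}$ is subhamiltonian for some $E_0$ with $|E_0|\le k$. By Bernhart--Kainen, $G_{E_0}$ has a plane proper arc diagram $D$. In $D$ the vertex $v_e$ carries exactly the two arcs $v_ex$ and $v_ey$, and no other arc is incident to it. If for every $e\in E_0$ these two arcs lie on opposite sides of the spine, then we delete each $v_e$ from the spine and join its two former arcs at the vacated point: this produces a biarc for $e$ (one halfcircle above, one below) and changes nothing else, giving a plane biarc diagram of $G$ with $|E_0|\le k$ biarcs. So the only remaining point is that $D$ can be chosen so that every subdivision vertex has its two arcs on opposite sides.

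To secure this I would use the Hamiltonian-supergraph form of Bernhart--Kainen: a plane proper arc diagram of $G_{E_0}$ is the same datum as a planar Hamiltonian supergraph $H$ of $G_{E_0}$ together with a Hamiltonian cycle $C$ of $H$, the two pages corresponding to the two sides of $C$ (and cycle edges of $G_{E_0}$ drawn as tiny arcs on either side). Locally at $v_e$, the arcs $v_ex,v_ey$ fail to be on opposite sides only when $C$ uses two edges at $v_e$ other than $v_ex,v_ey$ that are consecutive in the rotation at $v_e$; in particular, if $C$ uses at least one of $v_ex,v_ey$, then that edge joins two spine-consecutive vertices, may be drawn on either side, and $v_e$ can be made good. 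Hence it suffices to find a pair $(H,C)$ in which $C$ uses $v_ex$ or $v_ey$ for every $e\in E_0$, and I would obtain this from an arbitrary planar Hamiltonian supergraph of $G_{E_0}$ by a local rerouting of $C$ at each offending subdivision vertex (equivalently, by relocating the degree-$2$ vertex $v_e$ on the spine, which affects only its own two arcs and then lets the two sides be chosen). Carrying out such rerouting simultaneously at all subdivision vertices while keeping the supergraph planar and $C$ a single Hamiltonian cycle is the step I expect to be the main obstacle; everything else is straightforward.
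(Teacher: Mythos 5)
Your ``only if'' direction and the opening of your ``if'' direction are exactly the paper's argument: place a subdivision vertex at the spine point of each biarc to get a plane proper arc diagram of $G_B$ and invoke Bernhart--Kainen, respectively take a plane proper arc diagram $D$ of the subdivided graph $G_{E_0}$ and delete the subdivision vertices. The gap is the final step of the ``if'' direction, which you correctly identify but do not close: a subdivision vertex $v_e$ of $e=xy$ whose two arcs lie on the same side of the spine does not yield a biarc, and your proposed remedy---rerouting the Hamiltonian cycle (equivalently, relocating $v_e$ on the spine) so that at every subdivision vertex simultaneously the two arcs can be placed on opposite sides, while keeping the supergraph planar and the cycle Hamiltonian---is left as an acknowledged obstacle. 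As written, the ``if'' direction is therefore incomplete.

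The missing idea is much simpler and needs no control over which diagram $D$ Bernhart--Kainen hands you: if the two halfcircles $xv_e$ and $v_ey$ lie on the same side of the spine, do not try to form a biarc at all; replace their union by a single proper halfcircle from $x$ to $y$ on that same side. This introduces no crossing: two halfcircles on the same side of the spine cross exactly when their endpoints interleave along the spine, so any arc crossing the new halfcircle $xy$ would have an endpoint strictly between $x$ and $y$; since $v_e$ carries no other arcs, that endpoint lies strictly inside one of the spine intervals determined by $x$, $v_e$, $y$, and such an arc would already have crossed $xv_e$ or $v_ey$ in the plane diagram $D$ (this also covers the nested configuration in which $x$ and $y$ lie on the same side of $v_e$); arcs on the other side and vertices between $x$ and $y$ are unaffected. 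The edge $e$ then costs no biarc, every subdivision vertex with arcs on opposite sides yields a genuine biarc, and the total number of biarcs is at most $|E_0|\le k$. This is precisely the parenthetical remark in the paper's proof; with it, both directions follow from Bernhart--Kainen in a few lines, and the Hamiltonian-cycle rerouting you were attempting is unnecessary.
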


In Section \ref{sec:simflip} we prove that in every triangulation there exists a
set of less than $2n/3$ edges that can be flipped \emph{simultaneously} so that
the resulting triangulation is $4$-connected, and that this bound is tight up to
an additive constant. Since by Tutte's Theorem every
$4$-connected planar graph is Hamiltonian, we can transform every planar graph
into a subhamiltonian graph by subdividing at most $2n/3$ edges. The fact that a
single simultaneous flip can make a triangulation $4$-connected has already been
established by Bose et al.~\cite{bcgmw-sdfpt-07}. However, they do not give any
bound on the number of flipped edges.

\begin{restatable}{theorem}{upsimfour}\label{thm:upsim4}
  Every maximal planar graph on $n\ge 6$ vertices can be transformed into a
  $4$-connected maximal planar graph using a simultaneous flip of at most
  $\lfloor(2n-7)/3\rfloor$ edges. Moreover, such a set of simultaneously
  flippable edges can be computed in $O(n^2)$ time.
\end{restatable}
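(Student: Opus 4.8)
The plan is to construct the simultaneously flippable set by exploiting the structure of separating triangles. Recall that a maximal planar graph fails to be $4$-connected exactly when it contains a \emph{separating triangle}, i.e.\ a $3$-cycle that is not a face. First I would set up the basic machinery: for a separating triangle $T$, the two endpoints of a chord-free diagonal inside the region bounded by $T$ can be reached by a flip that replaces one of the three edges of $T$ by an edge joining a vertex strictly inside to a vertex strictly outside of $T$, thereby destroying $T$ without creating it again. The key point is to choose, for each separating triangle, \emph{which} of its three edges to flip, and to do so in a globally consistent way so that (i) no edge is selected for two different separating triangles, (ii) the selected diagonals do not share an endpoint with another selected edge in a way that would forbid a simultaneous flip, and (iii) flipping all of them at once leaves no separating triangle behind.

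The natural tool for the counting and the consistency is the laminar (nested) structure of separating triangles, together with the \emph{4-block tree} (the decomposition of the triangulation into its $4$-connected components along separating triangles). I would first observe that separating triangles can be organized into a forest under nesting, and that it suffices to handle each inclusion-minimal separating triangle together with its "parent" region; an inductive or bottom-up argument on this tree then lets me select one edge per separating triangle. To get the quantitative bound $\lfloor (2n-7)/3\rfloor$, I would bound the number of separating triangles: a classical fact is that an $n$-vertex triangulation has at most $n-4$ separating triangles, but the sharper bound needed here should come from a charging argument in which each selected flip "uses up" vertices or faces at a rate of roughly $3/2$ per flip, giving $\le \lfloor (2n-7)/3\rfloor$. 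I would make this precise by charging each flipped edge to the at least three new/destroyed incidences it controls and arguing these charges are disjoint across the selected set.

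For the simultaneity condition I would invoke the standard characterization (from the literature on simultaneous flips, e.g.\ Bose et al.) that a set $F$ of edges of a triangulation is simultaneously flippable if and only if the edges of $F$ are pairwise non-adjacent and no two of them lie on a common triangular face; equivalently $F$ forms an induced matching that is also "face-independent". The main work is therefore to show that the per-separating-triangle edge selection can always be made to satisfy this induced-matching-like property. Here is where I expect the main obstacle: separating triangles can be densely nested and can share edges, so a greedy choice may force two selected edges to be adjacent or cofacial. I would resolve this by processing the nesting forest from the leaves up, maintaining the invariant that the selected edge for a triangle is always an \emph{interior} edge of the region it delimits (not on the boundary of any enclosing region), which automatically separates it from selections made for ancestor triangles; siblings are separated because they live in disjoint interiors.

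Finally I would verify that after the simultaneous flip no separating triangle survives. Every original separating triangle is destroyed because one of its edges was flipped and the replacement diagonal crosses it. It remains to check that no \emph{new} separating triangle is created by the flips; since each flip replaces an edge of a separating triangle $T$ by a diagonal of the quadrilateral formed by the two faces on $e$, and by the choice of $e$ the new diagonal connects the inside of $T$ to the outside, one shows that any $3$-cycle through a new diagonal must bound a face. Combining the destruction of all old separating triangles with the non-creation of new ones yields $4$-connectivity, and the edge count established above gives the stated bound; the $O(n^2)$ running time follows by computing the separating triangles and the $4$-block tree in $O(n)$ or $O(n^2)$ time and then performing the selection in linear time per triangle.
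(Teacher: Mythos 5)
There is a genuine gap, and it sits exactly where your sketch is vaguest: the counting. Your plan selects one edge per separating triangle via the nesting forest, but a triangulation can have up to $n-4$ separating triangles (the paper's own lower-bound family $G_i$ on $n=3i+4$ vertices has $3i=n-4$ of them), so ``one selected edge per triangle'' with distinct selections can cost far more than $\lfloor(2n-7)/3\rfloor$; to meet the bound, a single flipped edge must in general destroy many separating triangles at once (in that family one edge is shared by $i$ of them), and nothing in your bottom-up selection or in the ``rate of roughly $3/2$ per flip'' charging sketch controls this. The quantity that actually needs to be bounded is the number of \emph{edges incident to separating triangles}, which the paper proves is at most $2n-7$ by a nontrivial induction (contracting an edge of a deepest separating triangle and checking no multi-edges arise); you never establish anything of this sort. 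The second missing mechanism is how to simultaneously guarantee (a) every separating triangle loses an edge and (b) no two chosen edges lie on a common triangle. The paper gets both for free from Tait's theorem: partition the edges into three perfect dual matchings (equivalently, a $4$-coloring, whence the $O(n^2)$ time), use a parity argument to show every perfect dual matching contains exactly one edge of \emph{every} triangle, including separating ones, restrict each matching to the at most $2n-7$ edges incident to separating triangles, and take the smallest of the three restrictions, which has at most $\lfloor(2n-7)/3\rfloor$ edges and satisfies Bose et al.'s sufficient condition for a simultaneous flip to a $4$-connected triangulation. Your greedy selection on the nesting forest does not obviously avoid two selections sharing a facial triangle (separating triangles in different parts of the forest can still have cofacial edges), and your invariant ``the selected edge is interior to the region it delimits'' is not even well formed, since the selected edge is an edge of the separating triangle itself.

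A smaller but real inaccuracy: the characterization of simultaneous flippability you invoke (``pairwise non-adjacent and not cofacial, iff'') is not the criterion from Bose et al.\ and is wrong in both directions as stated; pairwise non-adjacency is not required (edges sharing only an endpoint can be flipped together), and the stated conditions alone do not prevent two flips from creating the same edge or an edge already present. The condition actually used is that no two flipped edges are incident to a common triangle, every flipped edge is incident to a separating triangle, and every separating triangle is hit; this is also what delivers $4$-connectivity of the result without your separate (and incomplete) ``no new separating triangle'' argument.
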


\begin{restatable}{theorem}{upsimfourlower}\label{thm:upsim4lower}
  For every $i\in\N$, there is a maximal planar graph $G_i$ on $n_i=3i+4$
  vertices such that no simultaneous flip of less than $(2n_i-8)/3=2i$ edges
  results in a $4$-connected graph.
\end{restatable}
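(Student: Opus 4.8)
The plan is to reduce the problem to separating triangles and then exhibit an explicit family $G_i$. The reduction rests on the following fact, valid for every triangulation $T$: if $F$ is a simultaneously flippable edge set and the triangulation $T'$ resulting from performing all flips of $F$ is $4$-connected, then every separating triangle of $T$ contains an edge of $F$. Indeed, a flip is admissible only if it does not create a multi-edge, so each flip of $F$ introduces an edge that is \emph{not} present in $T$; hence every face of $T'$ that was not already a face of $T$ contains an edge missing from $T$, and consequently any $3$-cycle of $T$ that is not a face of $T$ and all of whose edges survive the simultaneous flip is still a non-facial $3$-cycle in $T'$ — a separating triangle, contradicting $4$-connectivity. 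Thus $F$ must be a transversal of the (edge sets of the) separating triangles of $G_i$, and moreover one whose associated simultaneous flip introduces no new separating triangle. The goal is to design $G_i$, on $3i+4$ vertices, so that every transversal with this extra property has at least $2i$ edges.

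I would build $G_i$ from a small rigid core (contributing the additive $4$) together with $i$ gadgets, each on three new vertices. The basic gadget, inserted into a triangular face $xyz$, adds a vertex $p$ joined to $x,y,z$, then $q$ joined to $x,y,p$, then $r$ joined to $x,p,q$; the separating triangles it creates are exactly $xyz$, $xyp$, and $xpq$, a ``chain'' in which consecutive triangles share an edge but all three have no common edge. This already forces $F$ to contain a flipped edge for each gadget, since no single flipped edge repairs a gadget. To obtain the factor two I would exploit, in addition, that the cheap ways of hitting a gadget's chain — flipping one of the two shared edges — create a new non-facial $3$-cycle through the gadget's innermost vertex $r$, i.e.\ a new separating triangle, which then requires a second flipped edge; and that the gadgets interact so weakly that a flipped edge can be usefully charged to at most one gadget and that flips in the core are of no help. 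Summed over the $i$ gadgets this yields $|F|\ge 2i=(2n_i-8)/3$.

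The main obstacle is precisely this last analysis: fitting $i$ gadgets onto only $3i+4$ vertices forces the gadgets to share some structure, so one cannot simply invoke edge-disjointness of their separating-triangle systems; instead one must fix the planar embedding of each gadget, enumerate its constantly many flippable edges, track for each flip the $3$-cycles it creates and destroys, and then run a global accounting argument showing that no combination of fewer than $2i$ simultaneous flips can destroy all separating triangles without creating a new one. Establishing this ``non-interference'' of gadgets, and ruling out clever flips that use core edges or serve two gadgets at once, is the technical heart of the proof; checking that $G_i$ is a maximal planar graph on exactly $3i+4$ vertices for every $i\in\N$ is then routine.
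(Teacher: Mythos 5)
Your reduction step is fine: if a simultaneous flip yields a $4$-connected triangulation, every separating triangle of $G_i$ must lose an edge (vertices strictly inside a surviving triangle keep all their neighbours inside or on it, so the triangle stays separating). But from there the proposal has a genuine gap, and you acknowledge it yourself: the entire lower-bound accounting is left as a plan. Concretely, (i) the construction is not actually specified for general $i$ --- the core $K_4$ has only four faces, so for $i>4$ gadgets must be nested inside earlier gadgets, and then their separating-triangle systems are \emph{not} edge-disjoint (e.g.\ placing a gadget in the face $xqr$ of your gadget creates a separating triangle sharing the edge $xq$ with the triangle $xpq$ of the host gadget), so the naive ``$2$ per gadget'' transversal bound no longer sums to $2i$ without the cross-gadget charging argument you defer; (ii) the mechanism you invoke for the factor two is dubious: flipping a shared edge of your gadget ($xp$, with faces $xzp$ and $xpr$, creating $zr$; or $xy$, creating an edge to the outside) produces only facial triangles through the new edge in the isolated-gadget picture, so no new separating triangle through $r$ arises, and the claimed ``second forced flip'' is unsubstantiated. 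In short, the exhibited family and its analysis are not established.

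It is also worth noting the lever you never pull, which is exactly what the paper's construction exploits: in a simultaneous flip, no two flipped edges may lie on a common facial triangle. The paper stacks vertices recursively in the three faces adjacent to one fixed face $f_0$ of $K_4$, so that all $3i$ separating triangles fall into three groups, each group consisting of $i$ triangles that pairwise share only their common edge on $f_0$. Since at most one of the three edges of the facial triangle $f_0$ can be in the flip set, two of the groups must be hit through their other edges, costing $i$ flips each (minus at most one saved via the single edge of $G_0$ shared by two groups), and the third group still needs one flip, giving $2i$. This ``concentrate everything on one facial triangle'' idea is what makes the bound fall out in a few lines, whereas your edge-disjoint-gadget route would, even if completable, require the global non-interference analysis you have not supplied.
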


Finally, in Section \ref{sec:hamflip} we prove an upper bound on the flip
distance of a triangulation to Hamiltonicity, that is, on the worst-case number
of successive flips required to reach a Hamiltonian triangulation.  Given the
hardness of determining whether a given planar graph is Hamiltonian, we should
not expect a nice characterization of (non-)Hamiltonicity. Hence, in the context
of planar graphs, $4$-connectivity is often used as a substitute because by
Tutte's Theorem it is a sufficient condition for Hamiltonicity.

Bose et al.~\cite{bjrsv-mt4uf-14} gave a tight bound (up to an additive
constant) of $3n/5$ on the number of flips that transform a given triangulation
on $n$ vertices into a $4$-connected triangulation. We show that fewer flips are
sufficient to guarantee Hamiltonicity. Obviously, the target triangulation is
not $4$-connected in general, which means it possibly contains separating
triangles.

\begin{restatable}{theorem}{hamflip}\label{thm:hamflip}
  Every maximal planar graph on $n\ge 6$ vertices can be transformed into a
  Hamiltonian maximal planar graph using a sequence of at most
  $\lfloor(n-3)/2\rfloor$ edge flips. Alternatively, it can be transformed into
  a subhamiltonian planar graph by subdividing a set of at most
  $\lfloor(n-3)/2\rfloor$ edges. Moreover, such a sequence of flips or
  subdivisions can be computed in $O(n^2)$ time.
\end{restatable}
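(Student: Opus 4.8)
The plan is to prove the flip version and then deduce the subdivision version. The two are linked: flipping an edge $e=uv$ of a triangulation to the diagonal $xy$ of the quadrilateral formed by the two triangles at $e$, and then subdividing $xy$, yields the same triangulation as subdividing $e$ itself and joining the new vertex to the two apices $x,y$; consequently, if all the flips we perform lie at pairwise different edges of the original graph and every flipped-in edge is traversed by the final Hamiltonian cycle, the same edges can be subdivided to produce a subhamiltonian graph — and in any case Theorem~\ref{thm:bernkaingeneral} translates between the two formulations. If $G$ is already $4$-connected it is Hamiltonian by Tutte's theorem and no flip is needed, so assume $G$ has separating triangles. The point is that we must not destroy all of them — as the comparison with the $3n/5$ bound of Bose et al.~\cite{bjrsv-mt4uf-14} shows, that is too expensive — but only enough of them, processed in a careful order, that the remaining ones no longer obstruct a Hamiltonian cycle, all the while maintaining an explicit cyclic structure on the part already handled.

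I would process separating triangles from the inside out. Pick a separating triangle $T=abc$ whose interior is inclusion-minimal; then the triangulation $G_{\mathrm{in}}$ that $T$ bounds contains no separating triangle, so it is $K_4$ or a $4$-connected triangulation, hence Hamiltonian-connected. The key single-flip lemma should assert that one suitably chosen edge of $T$ can be flipped so as to destroy $T$, leave the cyclic structure already built on the rest of the graph intact, and permit the vertices of $G_{\mathrm{in}}$ to be threaded into that structure; the threading is produced from a spanning path of $G_{\mathrm{in}}$ between two vertices of $\{a,b,c\}$ together with the strong Hamiltonicity properties of this innermost piece. A separating triangle with a single interior vertex, that is, a degree-$3$ vertex, is handled for free: the vertex is simply spliced onto an incident edge of the current cycle without any flip. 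When no non-trivial minimal piece remains, the cyclic structure closes up into a Hamiltonian cycle of the flipped triangulation.

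To bound the number of flips, charge each flip to the interior vertices of the piece it absorbs. A degree-$3$ piece costs nothing, so every flip-incurring piece has at least two interior vertices; these vertex sets are pairwise disjoint and avoid the three vertices of the outer triangle, so at most $\lfloor(n-3)/2\rfloor$ flips are made, and computing the decomposition and performing the flips greedily takes $O(n^2)$ time. I expect the main obstacle to be the single-flip lemma and the induction that carries it. The threading step needs more than a bare Hamiltonian path of $G_{\mathrm{in}}$: it must mesh with the cyclic structure on the rest of $G$ — in effect one side of each separating triangle has to supply a spanning path of a near-triangulation with one boundary vertex removed — so the invariant actually maintained through the recursion must be strengthened to a Hamiltonian-connectivity-type statement, and the flip itself requires a case analysis on the positions of the two apices of the flipped edge relative to $T$ and on how nested pieces interact. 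Finally, checking that all flips occur at distinct original edges and are used by the final cycle, so that the subdivision bound follows with no loss, is a further point that must be verified.
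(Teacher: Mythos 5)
There is a fatal gap in your accounting step: the claim that a separating triangle with a single interior vertex (a degree-$3$ vertex) ``is handled for free'' is false in general, and with it the charge of every flip to at least two interior vertices collapses. Splicing a degree-$3$ vertex $v$ with neighbours $a,b,c$ into the cycle requires the cycle to contain a subpath $avb$ (for some two of the neighbours), i.e.\ the cycle restricted to the rest of the graph must ``reserve'' one edge of the triangle $abc$ for $v$, and distinct degree-$3$ vertices compete for these reserved edges. The paper's own lower-bound construction (Theorem~\ref{thm:hamlower}, the Kleetopes) consists \emph{exclusively} of such pieces: every $4$-block other than the root is a $K_4$ with one interior vertex, so your scheme would certify that $0$ flips suffice, whereas in fact $(n-8)/3$ flips are necessary because the degree-$3$ vertices form an independent set of size larger than $n/2$. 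So the bound $\lfloor(n-3)/2\rfloor$ cannot be obtained by charging flips only to pieces with at least two interior vertices; the degree-$3$ pieces are precisely where the cost is concentrated, not where it vanishes.

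This is exactly the difficulty the paper's proof is organized around. There, the hard configuration is a $4$-block $G_i$ all of whose interior edges lie on child separating triangles (the ``checkerboard'' case, cf.\ Lemma~\ref{lem:predummy}); it is resolved not by splicing but by a \emph{dummy flip} (insert a vertex into a suitable face and flip its three edges), which is later simulated by at most two genuine flips or two subdivisions (Lemmata~\ref{lem:eliminate} and~\ref{lem:eliminate2}), and the global count $f+2d\le(n-3)/2$ is established by a credit invariant on free interior edges over the evolving $4$-block tree, using Tait's theorem to pick a cheap dual matching ($4$-connector) in each block (Lemma~\ref{lem:charging}). Your inside-out processing with a single flip per minimal separating triangle and Hamiltonian-connectivity threading is plausible for blocks with many interior vertices, but without a mechanism that pays roughly one flip per two degree-$3$ pieces in checkerboard-like configurations (or an equivalent of the dummy-flip bookkeeping), the induction cannot reach the claimed bound; any repair must confront the Kleetope examples, which your current invariant does not.
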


In this case we do not have a matching lower bound. The best lower bound we know
can be obtained using \emph{Kleetopes}~\cite{g-cp-03}. These are convex
polytopes that are generated from another convex polytope by replacing every
face by a small pyramid. In the language of planar graphs, we start from a
$3$-connected planar graph and for every face add a new vertex that is connected
to all vertices on the boundary of the face. If the graph we start from has
enough faces, then the added vertices form a large independent set so that the
resulting graph is not Hamiltonian. Aichholzer~et~al.~\cite{ahk-tpst-08}
describe such a construction explicitly in the context of flipping a
triangulation to a Hamiltonian triangulation, but state the asymptotics only.
A precise counting reveals the following figures.
\begin{restatable}{theorem}{hamlower}\label{thm:hamlower}
  For every $i\in\N$, there is a maximal planar graph $G_i$ on $n_i=3i+8$
  vertices such that no sequence of less than $(n_i-8)/3=i$ edge flips produces
  a Hamiltonian graph, and there is no set of less than $(n_i-8)/3=i$ edges
  whose subdivision produces a subhamiltonian graph.
\end{restatable}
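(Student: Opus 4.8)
The plan is to make the Kleetope sketch preceding the statement quantitative. Fix any maximal planar graph $T$ on $m=i+4$ vertices (such a triangulation exists for every $m\ge 4$), and let $G_i$ be the Kleetope of $T$: into each of the $2m-4$ faces $abc$ of $T$ insert a new vertex $v_{abc}$ adjacent to $a,b,c$, which splits $abc$ into the three triangles $v_{abc}ab$, $v_{abc}bc$, $v_{abc}ca$. Then $G_i$ is a maximal planar graph on $n_i=m+(2m-4)=3m-4=3i+8$ vertices. Let $S$ be the set of the $2m-4=2i+4$ inserted (``stacked'') vertices. Two stacked vertices lie in different faces of $T$ and hence are non-adjacent, so $S$ is independent in $G_i$; moreover, every face of $G_i$ carries exactly one stacked vertex on its boundary.

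The combinatorial heart is a counting observation: if a graph $H$ on $N$ vertices has a Hamiltonian cycle $C$ and $W\subseteq V(H)$ is a proper subset, then $C$ contains at least $2|W|-N$ edges with both endpoints in $W$. Indeed, $W$ splits into maximal runs along $C$; consecutive runs are separated by at least one vertex outside $W$, so there are at most $N-|W|$ runs, and a run on $\ell$ vertices carries $\ell-1$ edges inside $W$, for a total of at least $|W|-(N-|W|)=2|W|-N$. Applied with $W=S$ and $N=n_i$, this lower bound equals $2(2i+4)-(3i+8)=i$.

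For the flip part, let $G'$ be obtained from $G_i$ by $k$ flips. Each flip deletes one edge and inserts one, so the number of currently present edges that do not belong to $E(G_i)$ grows by at most one per flip; hence $|E(G')\setminus E(G_i)|\le k$. Since $S$ is independent in $G_i$, every edge of $G'$ inside $S$ lies in $E(G')\setminus E(G_i)$, so $G'$ has at most $k$ edges inside $S$. If $G'$ were Hamiltonian, the observation would force at least $i$ such edges; hence $k\ge i$, and no sequence of fewer than $i$ flips reaches a Hamiltonian graph.

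For the subdivision part, let $G'$ arise from $G_i$ by subdividing a set of $k'$ edges, and suppose $G'$ is subhamiltonian, i.e., a spanning subgraph of a Hamiltonian planar graph $H$ on $n_i+k'$ vertices. Since $G_i$ is $3$-connected, $G'$ has an essentially unique planar embedding, which the embedding of $H$ must extend, so every edge of $H$ is a chord of some face of $G'$. Subdividing an edge neither merges two faces nor deletes a stacked vertex from a face boundary, so each face of $G'$ still carries exactly one stacked vertex; therefore no edge of $H$ joins two vertices of $S$, i.e., $S$ stays independent in $H$. A Hamiltonian graph on $N$ vertices has no independent set larger than $N/2$, so $2i+4=|S|\le (n_i+k')/2=(3i+8+k')/2$, which rearranges to $k'\ge i$. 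Hence no set of fewer than $i$ edges of $G_i$ subdivides to a subhamiltonian graph. I expect the only delicate point to be this face bookkeeping in the subdivision case --- verifying that subdividing Kleetope edges keeps the two sides of every stacked vertex in separate faces, so that planarity really does rule out an edge between two stacked vertices in any planar augmentation; granting that, both lower bounds fall out of the same independent-set count.
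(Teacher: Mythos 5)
Your proof is correct and follows essentially the same route as the paper: the Kleetope construction, the independent set of stacked vertices, and a counting argument along a Hamiltonian cycle (one new edge per flip in the flip case, one extra vertex per subdivision in the subhamiltonian case). The only real difference is that you justify explicitly, via the essentially unique embedding of the subdivided Kleetope, why the stacked vertices stay independent in any planar Hamiltonian augmentation---a step the paper's proof asserts without elaboration.
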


Our proof of Theorem~\ref{thm:hamflip} is constructive, and each flip in the
sequence involves an edge of the initial graph $G$ that is incident to a
separating triangle of $G$. Several of these edges may be incident to a common
facial triangle, in which case the edges are not simultaneously flippable.

Theorem~\ref{thm:bernkaingeneral} allows us to translate Theorems
\ref{thm:hamflip} and \ref{thm:hamlower} to the context of biarc diagrams, where
we obtain bounds for the number of biarcs needed.

\begin{restatable}{corollary}{arcdiaupper}\label{cor:arcdiaupper}
  Every planar graph on $n \ge 6$ vertices admits a plane biarc diagram with at
  most $\lfloor(n-3)/2\rfloor$ biarcs. Moreover, such a diagram can be computed
  in $O(n^2)$ time.
\end{restatable}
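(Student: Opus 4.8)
The plan is to derive the corollary directly from Theorem~\ref{thm:hamflip} and Theorem~\ref{thm:bernkaingeneral}; the only gap to bridge is that Theorem~\ref{thm:hamflip} speaks about maximal planar graphs, whereas here $G$ is an arbitrary planar graph.

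First I would make $G$ maximal: fix a planar embedding of $G$ and insert edges until every face is a triangle. Since $n\ge 6$ this produces a maximal planar graph $\hat G$ on the same vertex set, with $G\subseteq\hat G$, in $O(n)$ time. Now apply the subdivision part of Theorem~\ref{thm:hamflip} to $\hat G$: there is a set $S$ of at most $\lfloor(n-3)/2\rfloor$ edges of $\hat G$ such that subdividing the edges of $S$ turns $\hat G$ into a subhamiltonian planar graph, and such a set (together with a Hamiltonian completion of the subdivided graph, obtained from the flip version of the same theorem) is computable in $O(n^2)$ time. By Theorem~\ref{thm:bernkaingeneral} applied to $\hat G$, it follows that $\hat G$ admits a plane biarc diagram $D$ with at most $\lfloor(n-3)/2\rfloor$ biarcs, and given the Hamiltonian completion this diagram can be laid out in $O(n)$ additional time.

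The final step is to pass from $\hat G$ back to $G$: delete from $D$ the curve of every edge that belongs to $\hat G$ but not to $G$. Deleting curves from a plane arc diagram yields a plane arc diagram of the remaining graph on the same spine, so we obtain a plane biarc diagram of $G$; since curves were only removed, it still uses at most $\lfloor(n-3)/2\rfloor$ biarcs, and the total running time is dominated by the $O(n^2)$ bound of Theorem~\ref{thm:hamflip}. (Equivalently, one can argue at the graph level: the graph obtained from $G$ by subdividing the edges of $S\cap E(G)$ is a subgraph of the subhamiltonian graph obtained from $\hat G$ by subdividing $S$, and being subhamiltonian is inherited by subgraphs, so Theorem~\ref{thm:bernkaingeneral} may be invoked for $G$ with the edge set $S\cap E(G)$, whose size is at most $\lfloor(n-3)/2\rfloor$.)

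I do not expect a serious obstacle: the corollary is essentially a translation of Theorem~\ref{thm:hamflip} through Theorem~\ref{thm:bernkaingeneral}. The one point that needs a (very short) argument is precisely the augmentation, since the edges subdivided by Theorem~\ref{thm:hamflip} need not lie in $G$; the ``restrict the diagram'' observation --- or, equivalently, the hereditary nature of subhamiltonicity under taking subgraphs --- is exactly what takes care of it.
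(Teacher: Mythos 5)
Your proposal is correct and follows essentially the same route as the paper: the corollary is obtained by augmenting $G$ to a maximal planar graph, invoking the subdivision statement of Theorem~\ref{thm:hamflip}, translating via Theorem~\ref{thm:bernkaingeneral}, and discarding the added edges (exactly as in the proof of Theorem~\ref{thm:biarcmonotone}). Your observation that subhamiltonicity is inherited by subgraphs correctly fills the only small gap (added edges may be among the subdivided ones), and the $O(n^2)$ bound carries over as you state.
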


\begin{restatable}{corollary}{arcdialower}\label{cor:arcdialower}
  For every $i\in\N$, there is a maximal planar graph $G_i$ on $n_i=3i+8$
  vertices that cannot be drawn as a plane biarc diagram using less than
  $(n_i-8)/3=i$ biarcs.
\end{restatable}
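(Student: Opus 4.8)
The plan is to derive Corollary~\ref{cor:arcdialower} as an immediate consequence of Theorem~\ref{thm:hamlower} via the characterization in Theorem~\ref{thm:bernkaingeneral}. First I would recall that Theorem~\ref{thm:bernkaingeneral} states that a planar graph $G$ admits a plane biarc diagram with at most $k$ biarcs if and only if some set of at most $k$ edges of $G$ can be subdivided so that the result is subhamiltonian. Contrapositively, if \emph{no} set of at most $k-1$ edges of $G$ has a subdivision that is subhamiltonian, then $G$ does not admit a plane biarc diagram with fewer than $k$ biarcs.

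Next I would take the family $G_i$ on $n_i = 3i+8$ vertices furnished by Theorem~\ref{thm:hamlower}. That theorem guarantees, in particular, that there is no set of fewer than $(n_i-8)/3 = i$ edges whose subdivision produces a subhamiltonian graph. Feeding this into the contrapositive of Theorem~\ref{thm:bernkaingeneral} with $k = i$ yields exactly that $G_i$ cannot be drawn as a plane biarc diagram using fewer than $i = (n_i-8)/3$ biarcs, which is the claim. (One should note that each $G_i$ is a maximal planar graph, hence in particular a planar graph, so Theorem~\ref{thm:bernkaingeneral} applies; and $n_i = 3i+8 \ge 6$ so the statement is non-vacuous for all $i \in \N$.)

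I do not anticipate a genuine obstacle here, since both ingredients are already in hand: the whole content of the corollary is the translation step. The only point requiring a little care is bookkeeping the inequality — making sure that ``no subdivision of fewer than $i$ edges is subhamiltonian'' matches up with ``at most $k$ biarcs'' for the right value of $k$, i.e.\ that the strict/non-strict boundaries line up so the bound $i$ is transferred verbatim rather than shifted by one. That is a routine check against the exact phrasing of Theorems~\ref{thm:hamlower} and~\ref{thm:bernkaingeneral}, and no further argument is needed.
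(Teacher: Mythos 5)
Your proposal is correct and matches the paper's approach: the paper obtains Corollary~\ref{cor:arcdialower} exactly by translating the subdivision lower bound of Theorem~\ref{thm:hamlower} through the equivalence of Theorem~\ref{thm:bernkaingeneral}. The boundary bookkeeping you flag indeed lines up, so nothing further is needed.
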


As another corollary, we establish a new upper bound on the diameter of the flip
graph of all triangulations on $n$ vertices, improving on the previous best
bound of $5.2n-33.6$ by Bose et al.~\cite{bjrsv-mt4uf-14}. Mori et
al.~\cite{mno-dfhts-03} showed that any two Hamiltonian triangulations on $n$
vertices can be transformed into each other by a sequence of at most
$\max\{4n-20,0\}$ flips. Combined with Theorem~\ref{thm:hamflip}, this implies
the following.
\begin{restatable}{corollary}{flipdist}\label{thm:flipdist}
  Every two triangulations on $n\ge 6$ vertices can be transformed into each
  other using a sequence of at most $5n-23$ edge flips.
\end{restatable}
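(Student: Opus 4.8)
The plan is to combine two ingredients in the obvious way. First I would invoke Theorem~\ref{thm:hamflip}: starting from an arbitrary triangulation $T_1$ on $n\ge 6$ vertices, there is a sequence of at most $\lfloor(n-3)/2\rfloor$ flips transforming $T_1$ into a Hamiltonian triangulation $H_1$. Symmetrically, for a second triangulation $T_2$ on the same vertex set there is a sequence of at most $\lfloor(n-3)/2\rfloor$ flips transforming $T_2$ into a Hamiltonian triangulation $H_2$. Since every flip is reversible, reversing the second sequence gives a path of the same length from $H_2$ back to $T_2$.

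The second ingredient is the result of Mori et al.~\cite{mno-dfhts-03}: any two Hamiltonian triangulations on $n$ vertices are connected by a sequence of at most $\max\{4n-20,0\}$ flips, so in particular $H_1$ can be flipped into $H_2$ in at most $4n-20$ flips (the max is irrelevant here since $n\ge 6$). Concatenating the three pieces — $T_1\to H_1$, $H_1\to H_2$, $H_2\to T_2$ — yields a flip sequence from $T_1$ to $T_2$ of total length at most
\[
  2\left\lfloor\frac{n-3}{2}\right\rfloor + (4n-20).
\]

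It remains to check the arithmetic: $2\lfloor(n-3)/2\rfloor \le n-3$, so the total is at most $(n-3)+(4n-20) = 5n-23$, which is exactly the claimed bound. (When $n-3$ is odd the floor saves one, giving $5n-24$, so the stated bound is safe in all cases.) The $O(n^2)$ running time is not part of the statement, but if one wished to track it, Theorem~\ref{thm:hamflip} supplies the two outer sequences in $O(n^2)$ time each, and the Mori et al.\ construction is also polynomial.

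There is no real obstacle here; the only thing to be careful about is that the flip operation on abstract/combinatorial triangulations is genuinely symmetric (a flip and its reverse are both flips), which is what lets us traverse the $T_2\to H_2$ sequence backwards. Everything else is a direct substitution of the two cited bounds into the triangle inequality for the flip metric.
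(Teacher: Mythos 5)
Your argument is correct and is exactly the paper's intended proof: flip each triangulation to a Hamiltonian one via Theorem~\ref{thm:hamflip} (using reversibility of flips for the second sequence), connect the two Hamiltonian triangulations with the Mori et al.\ bound of $4n-20$, and add up $2\lfloor(n-3)/2\rfloor+(4n-20)\le 5n-23$. Nothing is missing.
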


\section{Notation}\label{sec:notation}

A \emph{drawing} of a graph $G$ in $\R^2$ maps the vertices into distinct points
in the plane and maps each edge to a Jordan arc between (the images of) the two
vertices that is disjoint from (the image of) any other vertex. To avoid
notational clutter it is common to identify vertices and edges with their
geometric representation. A drawing is called \emph{plane} (or an
\emph{embedding}) if no two edges intersect except at a possible common
endpoint. Only planar graphs admit plane drawings, but not every drawing of a
planar graph is plane. A \emph{maximal planar} graph on $n$ vertices is a
planar graph with $3n-6$ edges. In this paper the term \emph{triangulation} is
used as a synonym for maximal planar graph.\footnote{In contrast, a maximal
  plane \emph{straight-line} drawing may have fewer edges, depending on the
  number of points on the convex hull.}

In a plane drawing of a triangulation $G$, every face (including the outer
face) is bounded by three edges. Hence, every triangulation with $n\geq 4$
vertices is 3-connected~\cite{Die16}[Lemma~4.4.5]. Every 3-connected planar
graph has a topologically unique plane drawing, apart from the choice of the
outer face. Specifically, the facial triangles are precisely the nonseparating
chordless cycles of $G$ in every plane drawing~\cite{Die16}[Proposition~4.2.7].
Consequently, $G$ has a well-defined dual graph $G^*$ (independent of the
drawing): the vertices of $G^*$ correspond to the faces of $G$, and two
vertices of $G^*$ are adjacent if and only if the corresponding faces share an
edge. A triangle of $G$ that is not facial is called a \emph{separating}
triangle, as its removal disconnects the graph.

A graph is \emph{Hamiltonian} if it contains a cycle through all vertices. By a
famous theorem of Tutte~\cite{Tut56,Whi31}, all $4$-connected planar graphs are Hamiltonian. For
triangulations, $4$-connectivity is equivalent to the absence of separating
triangles. %Therefore, we take a special interest in separating triangles.
A vertex or an edge is \emph{incident} to a triangle $T$ in a graph if it is a
vertex or edge of $T$.

A triangulation $G$ can be partitioned into a \emph{$4$-block tree}
$\mathcal{B}$. Each vertex of $\mathcal{B}$ is either a maximal $4$-connected
component of $G$ or a subgraph of $G$ that is isomorphic to $K_4$. Two vertices
of $\mathcal{B}$ are adjacent if they share a separating triangle of $G$. The
4-block tree is similar to the standard block-tree for $2$-connected components,
but the generalization of the notion ``component'' to higher connectivity is not
straightforward in general. % graphs.
For a triangulation, however, the $4$-block tree is well-defined and can be
computed in linear time and space~\cite{k-amcvr-97}.

\minipar{Flips.} Consider an edge $ab$ of a triangulation $G$ and let
$abc$ and $adb$ denote the two incident facial triangles. The \emph{flip} of
$ab$ replaces the edge $ab$ by the edge $cd$. If this operation produces a
triangulation (i.e., if $c\neq d$ and the edge $cd$ is not already present in $G$), we call
$ab$ \emph{flippable}\footnote{We consider \emph{combinatorial flips}, as
  opposed to \emph{geometric flips} defined for straight-line plane drawings,
  where an edge is flippable if and only if the quadrilateral formed by the two
  incident facial triangles is convex.}.

A closely related concept is the \emph{simultaneous flip} of a set $F$ of
flippable edges in a triangulation $G=(V,E)$, which is defined as follows. For
$e\in F$ denote by $\mathrm{c}(e)$ the edge created by flipping $e$ in $G$, and
let $\mathrm{C}(F)=\bigcup_{e\in F}\mathrm{c}(e)$. Then the simultaneous flip of
$F$ in $G$ results in the graph $G'=(V,(E\setminus F)\cup\mathrm{C}(F))$. Bose
et al.~\cite{bcgmw-sdfpt-07} introduced this notion and showed that the result
of a simultaneous flip is a triangulation if every facial triangle of $G$ is
incident to at most one edge from $F$ and the edges $c(e)$, for $e\in F$, are
all distinct and not present in $E$.

\section{Monotone Biarc Diagrams}\label{sec:monotone}

In this section we present a simple linear time algorithm to construct a biarc
diagram in which all biarcs are drawn as monotone curves (with respect to the
spine). The algorithm is based on the fundamental notion of a canonical
ordering, which is defined for an \emph{embedded} triangulation. As every
triangulation on $n\ge 4$ vertices is $3$-connected, embedding it into the plane
essentially amounts to selecting one facial triangle to be the \emph{outer
  face}. This choice also determines a unique outer face (cycle) for every
biconnected subgraph.

A \emph{canonical ordering}~\cite{fpp-hdpgg-90} for an embedded triangulation
$G$ on $n$ vertices is a total order of the vertices $v_1,\ldots,v_n$ such that
\begin{itemize}
\item for $i\in\{3,\ldots,n\}$, the induced subgraph $G_i=G[\{v_1,\ldots,
  v_i\}]$ is biconnected and internally triangulated (i.e., every face other
  than the outer face $C_i$ is a triangle);
\item for $i\in\{3,\ldots,n\}$, $v_1v_2$ is an edge of $C_i$;
\item for $i\in\{3,\ldots,n-1\}$, $v_{i+1}$ lies in the interior of $C_i$
  (the unbounded region of the plane bounded by $C_i$) and the neighbors of
  $v_{i+1}$ in $G_i$ form a sequence of consecutive vertices along the boundary of $C_i$.
\end{itemize}
It is well-known that every triangulation admits a canonical
ordering~\cite{fpp-hdpgg-90}, and such an ordering can be computed in $O(n)$
time~\cite{cp-ltadp-95}.

\biarcmonotone*
\begin{proof}
  Let $G$ be a planar graph on $n\ge 4$ vertices and suppose without loss of
  generality that $G$ is an embedded triangulation. If $G$ is not maximal
  planar, add edges to make $G$ maximal planar, choose any embedding,
  and simply remove the added edges from the final drawing.

  Consider a canonical ordering $v_1,\ldots,v_n$ of the vertices of $G$.  We
  construct a biarc diagram of $G$ incrementally by inserting the vertices in
  canonical order and embedding them on the $x$-axis (spine). Let
  $G_i=G[\{v_1,\ldots,v_i\}]$ and let $C_i$ denote the outer cycle of $G_i$, for
  $i=3,\ldots , n$.  During the algorithm, we maintain the following invariants.
  \begin{enumerate}[label={(I\arabic*)}]\setlength{\itemindent}{2\labelsep}
  \item\label{i:first} All edges of $C_i$ are proper arcs (none is a biarc). The
    vertices $v_1$ and $v_2$ are the leftmost and rightmost, respectively,
    vertices of $G_i$ on the spine. The edge $v_1v_2$ forms the lower envelope
    of $G_i$ (i.e., no point of the biarc diagram is vertically below). All edges of
    $C_i$ other than $v_1v_2$ are on the upper envelope of $G_i$ (i.e., no point
    of the biarc diagram  is vertically above).
  \item\label{i:last} Any biarc used in $G_i$ is a down-up biarc, that is, the
    semicircle incident to its left endpoint lies below the spine and the
    semicircle incident to its right endpoint lies above the spine.
  \end{enumerate}

  We embed the triangle $G_3$ by placing $v_1$, $v_3$, and $v_2$ on the spine in
  this order from left to right and by drawing all edges as proper arcs below the spine
  (\figurename~\ref{fig:1}). Clearly \ref{i:first}--\ref{i:last} hold for this
  embedding.
  \begin{figure}[htbp]
    \centering%
    \includegraphics[scale=\figscale]{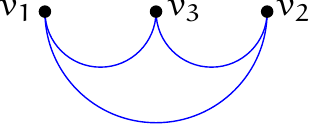}
    \caption{Start the incremental embedding with a triangle.\label{fig:1}}
  \end{figure}

  Now suppose that we have a biarc diagram for $G_i$ that satisfies the invariants
  and we want to add $v_{i+1}$. Let $w_1,\ldots,w_{k_i}$ be the vertices of
  $C_i$ labeled from left to right along the spine. By \ref{i:first} this order
  is compatible with the vertex order along $C_i$, with $v_1=w_1$ and
  $v_2=w_{k_i}$. As we work with a canonical ordering, the neighbors of
  $v_{i+1}$ on $C_i$ form a contiguous subsequence $w_{\ell_i},\ldots,w_{r_i}$
  of $C_i$, with $1\le\ell_i<r_i\le k_i$. In addition, \ref{i:last} guarantees
  that we can insert $v_{i+1}$ along the spine between $w_{\ell_i}$ and
  $w_{\ell_i+1}$, just to the right of $w_{\ell_i}$: Every biarc leaving
  $w_{\ell_i}$ to the right goes down first and, therefore, does not block the
  spine locally at $w_{\ell_i}$, whereas proper arcs above the spine leaving
  $w_{\ell_i}$ to the right can be bent down to become down-up biarcs while
  maintaining their vertical order (\figurename~\ref{fig:2}). After placing
  $v_{i+1}$, the edges to $w_{\ell_i},\ldots,w_{r_i}$ can be drawn as proper
  arcs above the spine. The edge $w_{\ell_i}v_{i+1}$ can even be drawn as a
  proper arc below the spine because the two vertices are neighbors along the
  spine by construction. It is easily checked that the invariants
  \ref{i:first}--\ref{i:last} are maintained. This completes the description of
  the first version of our algorithm.
  \begin{figure}[htbp]
    \centering%
    \subfloat[before]{\includegraphics[scale=\figscale]{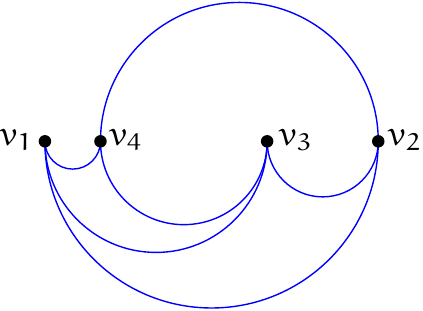}}\hfil
    \subfloat[after]{\includegraphics[scale=\figscale]{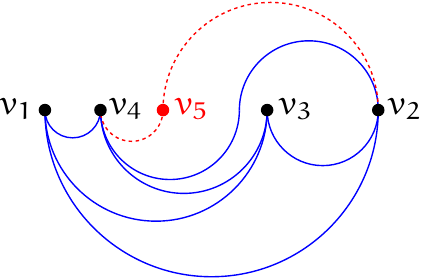}}\hfil
    \caption{Make room for a new vertex $v_5$.\label{fig:2}}
  \end{figure}

  \paragraph{First lower bound on the number of proper arcs.}  It remains
  to bound the number of biarcs used by the algorithm. As a first
  observation, note that all edges are drawn as proper arcs initially (when
  they first appear). An edge $e$ may become a biarc in a later step
  only if it is bent down to make room for a vertex inserted
  immediately to the right of the left endpoint of $e$. In particular,
  every edge drawn below the spine, such as the three edges of $G_3$
  and the edges $w_{\ell_i}v_{i+1}$ drawn at steps
  $i\in\{3,\ldots,n-2\}$, remain proper arcs throughout the algorithm.
  %% Also, if the left endpoint of a proper arc disappears from the outer cycle,
  %% then it will remain a proper arc in the final drawing.
  %% (True, but confusing here since it's not needed for the initial bound)
  Finally, at least three new edges will be drawn in the last step $i=n-1$
  (i.e., when inserting $v_n$) as proper arcs. This yields a
  first lower bound of at least $3 + (n-4) + 3 = n+2$ proper arcs and, therefore,
  at most $3n-6-(n+2)=2n-8$ biarcs.

  \paragraph{A refined algorithm and lower bound.}  In order to obtain the
  claimed bound, let us consider in more detail the insertion of a vertex
  $v_{i+1}$ where $i \in \{3,\dots,n-1\}$. We claim that for any vertex
  $v_{i+1}$ we can obtain $r_i-\ell_i$ proper arcs in the final drawing, rather
  than just one. However, we also have to adapt our algorithm slightly, as
  described in the following paragraph.

  The improvement is based on two simple but crucial observations.  First,
  observe that a vertex $v_{i+1}$ can be inserted just to the right of any of
  the vertices $w_{\ell_i},\ldots,w_{r_i-1}$, not only $w_{\ell_i}$.  The
  invariants \ref{i:first}--\ref{i:last} can be maintained for any such choice.
  Second, observe that none of the edges of the path
  $w_{\ell_i+1},\ldots,w_{r_i}$ appear on $C_{i+1}$ anymore and neither do the
  left endpoints of these edges.  In particular, it follows that every proper
  arc among those edges will remain a proper arc in the final drawing. Now we
  have to be careful when counting these edges because some of them might be
  drawn below the spine and we accounted for them already. Here is where the
  first observation comes to our help. We modify the algorithm to insert
  $v_{i+1}$ just to the right of the last vertex $w_{f_i}$ in
  $w_{\ell_i},\ldots,w_{r_i-1}$ such that the edge $w_{f_i}w_{f_i+1}$ is drawn
  below the spine. If no such edge exists, then we insert $v_{i+1}$ just to the
  right of $w_{\ell_i}$, as before.

  For the analysis we consider two cases. If $w_{f_i}$ exists
  (\figurename~\ref{fig:wfiexists}), then the insertion of $v_{i+1}$ does not
  create any biarcs. All edges along the path $w_{f_i+1},\ldots,w_{r_i}$ are
  proper arcs drawn above the spine and have not yet been counted. As none of
  these arcs appears on $C_{i+1}$, they will not be counted again. In addition,
  all edges from $v_{i+1}$ to $w_{\ell_i+1},\ldots,w_{f_i}$ are proper arcs of
  $G_{i+1}$ whose left endpoints do not appear on $C_{i+1}$. Lastly, the edge
  from $v_{i+1}$ to $w_{f_i+1}$ can be drawn below the spine. Therefore, all
  these edges remain proper arcs throughout the algorithm. The total number of
  new proper arcs in the final drawing is, therefore, at least
  $(r_i-f_i-1)+(f_i-\ell_i) + 1=r_i-\ell_i$.
  \begin{figure}[htbp]
    \centering%
    \subfloat[before]{\includegraphics[scale=\figscale]{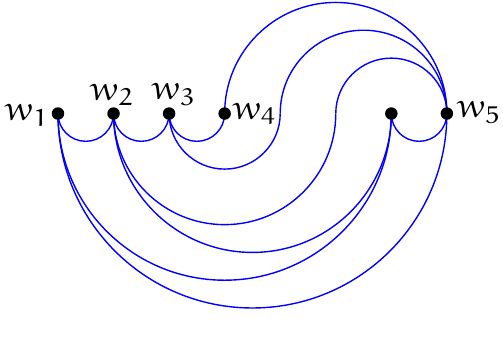}}\hfil
    \subfloat[after]{\includegraphics[scale=\figscale]{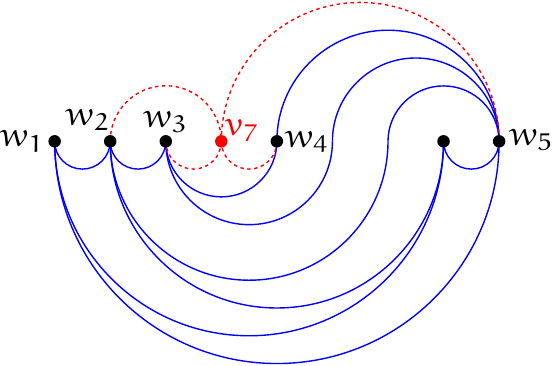}}\hfil
    \caption{Inserting a new vertex $v_7$ with $\ell_6=2$, $r_6=5$, and
      $f_6=3$.\label{fig:wfiexists}}
  \end{figure}

  In the second case there is no $w_{f_i}$ and $v_{i+1}$ is inserted
  just to the right of $w_{\ell_i}$. But we also know that none of the
  edges in the path $w_{\ell_i+1},\ldots,w_{r_i}$ of $C_i$ are below
  the spine. Therefore, all these edges will be proper arcs in the final drawing
  that have not been counted yet. Together with the new edge
  $v_{i+1}w_{\ell_i}$, which is drawn below the spine, we get
  again $r_i-\ell_i$ new proper arcs in the final drawing.

  In summary, we \emph{always} get at least $r_i-\ell_i$ new proper arcs in the
  final drawing when inserting a vertex $v_{i+1}$. In the last step, when
  inserting $v_n$, we even get $r_{n-1}-\ell_{n-1}+1$ new proper
  arcs. Therefore, the total number of proper arcs is bounded from below by
  $4+\sum_{i=3}^{n-1}(r_i-\ell_i)$. The total number of edges in $G$ is
  \[
  3n-6=3+\sum_{i=3}^{n-1}(r_i-\ell_i+1)\,.
  \]
  Combining both expressions yields at least
  \[
  4+\sum_{i=3}^{n-1}(r_i-\ell_i)=4+2n-6=2n-2
  \]
  proper arcs and, therefore, at most $3n-6-(2n-2)=n-4$ biarcs in the final
  drawing.

  Regarding the runtime bound, observe that when inserting a new vertex we
  inspect all its neighbors on the current outer cycle to select the right spot
  for insertion. Therefore the time spent for each vertex is proportional to its
  degree. As the graph is planar, the sum of all vertex degrees is linear. The
  arc diagram under construction can be represented as a tree using standard
  techniques~\cite{cp-ltadp-95}, where in addition we also store for every edge
  whether it is a proper arc or a down-up biarc.
\end{proof}

\section{General Biarc Diagrams}\label{sec:generalbiarcs}

In this section we discuss the connection between biarc diagrams and edge flips
and subdivisions in triangulations. Recall that Bernhart and
Kainen~\cite{bk-btg-79} characterized planar graphs that admit a plane proper
arc diagram as all subhamiltonian planar graphs.  The following theorem
generalizes this characterization in the context of biarc diagrams (the original
Theorem is obtained by setting $k=0$).

\bernkaingeneral*
\begin{proof}
  First, suppose there is a biarc diagram of $G$ with at most $k$ biarcs. Then
  we can simply subdivide these at most $k$ biarcs in order to obtain a proper
  arc diagram of some graph $G'$. By the characterization of Bernhart and
  Kainen, $G'$ is subhamiltonian.

  Second, fix a set of at most $k$ edges in $G$ so that subdividing them results
  in a subhamiltonian graph $G'$. By the characterization of Bernhart and Kainen
  we know that $G'$ admits a proper arc diagram. Removing the new vertices from
  the subdivided edges in that arc diagram results in a biarc diagram of $G$
  with at most $k$ biarcs (if both arcs incident to a subdivision vertex are on
  the same side of the spine, then the biarc can be replaced by a single proper
  arc).
\end{proof}

A similar statement can be obtained for simultaneous edge flips, where the edges
to be manipulated must not share a triangle. As this is a more restricted
setting, we get a correspondence in one direction only. But this is enough for
the purpose of getting upper bounds on the number of biarcs.

\begin{lemma}\label{lem:simflipbiarc}
  If a maximal planar graph $G$ can be transformed into a Hamiltonian graph
  with a simultaneous flip of $k$ edges, then $G$ admits a plane biarc diagram
  with at most $k$ biarcs.
\end{lemma}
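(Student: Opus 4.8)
The plan is to combine the simultaneous-flip hypothesis with Theorem~\ref{thm:bernkaingeneral}, so the real work is to show that a simultaneous flip of $k$ edges can be ``simulated'' by subdividing at most $k$ edges of $G$, in such a way that the resulting graph becomes subhamiltonian. Let $F=\{e_1,\dots,e_k\}$ be the simultaneously flippable set, let $H$ be the triangulation obtained from $G$ by simultaneously flipping $F$, and let $c(e_i)=x_iy_i$ be the edge created by flipping $e_i=a_ib_i$; so $H=(V,(E\setminus F)\cup\{c(e_1),\dots,c(e_k)\})$. By assumption $H$ is Hamiltonian, hence certainly subhamiltonian, and fix a Hamiltonian cycle $\Gamma$ in $H$.

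First I would subdivide in $G$ exactly those edges $e_i\in F$ whose corresponding new edge $c(e_i)$ is used by $\Gamma$; call this sub-collection $F'\subseteq F$ and write $G'$ for the graph obtained from $G$ by subdividing each edge in $F'$ with a fresh vertex $m_i$. Clearly $|F'|\le k$. The key claim is that $G'$ is subhamiltonian, which I would establish by exhibiting a planar supergraph of $G'$ that is Hamiltonian. The point is that the quadrilateral $a_ib_ix_iy_i$ bounded by the two faces incident to $e_i$ contains \emph{both} diagonals' worth of structure: in $G'$ the edge $e_i=a_ib_i$ has been replaced by the path $a_im_ib_i$, and since the flip is a local operation inside that quadrilateral and $F$ is simultaneously flippable (each facial triangle incident to at most one edge of $F$), these quadrilaterals are ``independent'' and the edge $m_ix_i$ (or $m_iy_i$, as needed) can be added to $G'$ without crossings, keeping the graph planar. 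For $e_i\in F\setminus F'$ nothing is done. After adding, for each $i$ with $e_i\in F'$, the appropriate chord from $m_i$ to $x_i$ or $y_i$, I obtain a planar supergraph $G''$ of $G'$, and I would check that $\Gamma$ lifts to a Hamiltonian cycle of $G''$: wherever $\Gamma$ traversed a flip edge $c(e_i)=x_iy_i$, reroute it as $x_i\,m_i\,y_i$ (or simply $x_i\,m_i$ then $m_i\,\cdots$ following $\Gamma$ appropriately), so that every subdivision vertex $m_i$ is visited exactly once and every original vertex keeps its incidences in the cycle. Since $\Gamma$ never used any edge of $F$ other than through $c(e_i)$, and $G''$ retains all edges of $E\setminus F$ plus the paths/chords around each $m_i$, this reroute is well-defined and passes through all of $V\cup\{m_i : e_i\in F'\}=V(G'')$.

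Having produced a Hamiltonian planar supergraph $G''$ of $G'$, it follows that $G'$ is subhamiltonian. Then Theorem~\ref{thm:bernkaingeneral}, applied with the edge set $F'$ (of size at most $k$), immediately yields a plane biarc diagram of $G$ with at most $k$ biarcs, which is exactly the assertion.

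The main obstacle I anticipate is the planarity/non-crossing verification in the middle step: one has to argue carefully that, after replacing $a_ib_i$ by $a_im_ib_i$ inside the quadrilateral $a_ix_ib_iy_i$, the extra chord $m_ix_i$ (resp.\ $m_iy_i$) can be drawn inside that quadrilateral without meeting anything else, and that the simultaneous-flip condition (no facial triangle incident to two edges of $F$, and the $c(e_i)$ distinct and not already present) guarantees these quadrilaterals can be handled independently so that all the added chords coexist without crossings. A secondary subtlety is bookkeeping for the rerouting of $\Gamma$ so that it visits each $m_i$ exactly once while still being a single cycle through all vertices; this is routine once the planar embedding is pinned down, but it needs to be stated precisely.
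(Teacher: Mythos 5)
Your proposal is correct and follows essentially the same route as the paper: the paper also reduces to Theorem~\ref{thm:bernkaingeneral} by subdividing flipped edges of $G$ and identifying each subdivision vertex with the subdivision vertex of the corresponding new edge $c(e_i)$ (i.e., adding both halves of $c(e_i)$ inside the quadrilateral of the two facial triangles at $e_i$, which is plane because no facial triangle meets two flip edges), so that the union is a plane Hamiltonian supergraph; the only difference is bookkeeping, since the paper assumes the flip set minimal so every Hamiltonian cycle uses all flipped edges, while you subdivide only those $e_i$ whose $c(e_i)$ lies on the chosen cycle, which works equally well. One small precision: the reroute $x_i\,m_i\,y_i$ needs \emph{both} chords $m_ix_i$ and $m_iy_i$ (the subdivided $c(e_i)$), not just one of them as your wording suggests.
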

\begin{proof}
  Let $H$ be a Hamiltonian graph obtained from $G$ by simultaneously
  flipping an edge set $E_1$ to $E_2$ with $|E_1|=k$. Without loss of
  generality, assume that $E_1$ is a minimal set of edges that must be
  flipped in order to obtain a Hamiltonian graph.  Consequently, every
  Hamiltonian cycle in $H$ passes through all $k$ edges in $E_2$.
  If we subdivide each edge in $E_2$, we obtain another Hamiltonian graph
  $H'$. Now consider the graph $G'$ obtained from $G$ by subdividing
  each edge in $E_1$, and identify the subdivision vertices of the
  corresponding edges in $G'$ and $H'$. Notice that the union of $G'$
  and $H'$ is a plane graph that contains $H'$, hence it is
  Hamiltonian. Consequently $G'$ is subhamiltonian. By
  Theorem~\ref{thm:bernkaingeneral}, $G$ admits a plane biarc diagram
  with at most $k$ biarcs, as claimed.
\end{proof}

In order to obtain a general statement about arc diagrams from
Lemma~\ref{lem:simflipbiarc}, we need a bound on the number of edges to
simultaneously flip in a given graph in order to make it Hamiltonian.  Even the
existence of such a simultaneous flip---regardless of the number of edges
involved---is not obvious to begin with. For instance, consider triangulations
$G_1$ and $G_2$ where $G_1$ has a vertex with linear degree and all vertices in
$G_2$ have constant degree (e.g., a nested triangle graph). As a single
simultaneous flip can only change about half of the edges incident to a vertex,
at least a logarithmic number of simultaneous flips is required to transform
$G_1$ into $G_2$~\cite{bcgmw-sdfpt-07}.

Bose et al.~\cite{bcgmw-sdfpt-07} showed that every triangulation on
$n\geq 6$ vertices can be transformed to a $4$-connected (hence
Hamiltonian) triangulation by a single simultaneous flip. However, no
bound is known on the number of flipped edges, which leaves us with
the trivial bound of $(2n-4)/2=n-2$. Note that the corresponding bound on
the number of biarcs is similar to the one from
Theorem~\ref{thm:biarcmonotone}, but there we could guarantee that all
biarcs are monotone. Using Lemma~\ref{lem:simflipbiarc} we do not have
any control over the type of biarcs used.

\section{Simultaneous Flip Distance to 4-connectivity}\label{sec:simflip}

In this section we determine the maximum number of edges needed to transform an
$n$-vertex triangulation into a $4$-connected triangulation using a single
simultaneous flip. Consider a triangulation $G=(V,E)$. As there is no
$4$-connected triangulation on fewer than six vertices, suppose that $G$ has at
least six vertices. We would like to transform $G$ into a $4$-connected
triangulation by simultaneously flipping a set $F\subset E$ of edges such that
all separating triangles are destroyed and none created. We use the following
criterion to ensure that the resulting triangulation is $4$-connected.
\begin{lemma}[Bose et al.~\cite{bcgmw-sdfpt-07}]\label{lem:simflip}
  Let $F$ be a set of edges in a triangulation $G$ such that no two edges in $F$
  are incident to a common triangle, every edge in $F$ is incident to a
  separating triangle, and for every separating triangle $T$ there is at least
  one edge in $F$ that is incident to $T$. Then $F$ is simultaneously flippable
  in $G$ and the resulting triangulation is $4$-connected.
\end{lemma}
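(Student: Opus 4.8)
The plan is to establish the three assertions bundled into the statement: (i) each edge of $F$ is individually flippable; (ii) the simultaneous flip of $F$ produces a triangulation $G'$; and (iii) $G'$ has no separating triangle, hence is $4$-connected. For (ii) I would invoke the criterion of Bose et al.\ quoted above — a simultaneous flip of $F$ yields a triangulation provided every facial triangle carries at most one edge of $F$, the new edges are pairwise distinct, and none of them already lies in $E$ — while the rest rests on the planar topology of separating triangles. Throughout, for the edge $ab$ under consideration I write $uab$ and $vab$ for its two incident facial triangles, and for $e\in F$ I write $T_e$ for a separating triangle incident to $e$.

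\emph{Flippability and the triangulation property.} Fix $e=ab\in F$ with $T_e=abx$. Since $abx$ is a non-facial cycle, $x\notin\{u,v\}$, and in the planar embedding the faces $uab$, $vab$ lie on opposite sides of the edge $ab$, hence on opposite sides of the cycle $abx$; so one of $u,v$ is strictly inside $abx$ and the other strictly outside. In particular $u\ne v$, and the prospective edge $\mathrm c(ab)=uv$ would join a vertex strictly inside $abx$ to one strictly outside, which is impossible in a plane graph; thus $uv\notin E$ and $ab$ is flippable. The ``at most one edge of $F$ per facial triangle'' condition is immediate from the hypothesis that no two edges of $F$ share a triangle, and ``$\mathrm c(e)\notin E$'' is exactly what we just proved. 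For distinctness, suppose $\mathrm c(e_1)=\mathrm c(e_2)=uv$ with $e_1\ne e_2$; then $u$ and $v$ are the apices of the two faces along both $e_1$ and $e_2$, so each of $u,v$ is adjacent to all four endpoints of $e_1,e_2$ while $uv\notin E$. Inspecting the rotations at $u$ and at $v$ forces $e_1$ and $e_2$ to share an endpoint $w$ with $\deg_G(w)=4$ whose link is a $4$-cycle through $u$ and $v$; but then the only triangle that $e_1$ bounds besides its two faces uses the two non-$\{u,v\}$ neighbours of $w$, and that triangle also contains $e_2$, contradicting the antichain hypothesis (and if that triangle is absent, $e_1$ lies on no separating triangle, again a contradiction). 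Hence, by the quoted criterion, $G'$ is a triangulation, and its faces are those faces of $G$ not incident to any edge of $F$, together with, for each $e=ab\in F$, the two triangles $uva$ and $uvb$ obtained by re-splitting the quadrilateral $u\,a\,v\,b$.

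\emph{$4$-connectivity of $G'$.} It suffices to show $G'$ has no separating triangle. First, every separating triangle $T$ of $G$ is destroyed: by hypothesis it has an edge in $F$, which is deleted and, since no $\mathrm c(e)$ lies in $E$, is never recreated. Now suppose for contradiction that $T'=pqr$ is a separating triangle of $G'$. If all three edges of $T'$ lie in $E\setminus F$, then $T'$ is a triangle of $G$, which was not separating in $G$ (else one of its edges would be in $F$), and a flip only alters the two faces along the flipped edge (none of which is an edge of $T'$), so $T'$ remains a face in $G'$ — contradiction. Hence some edge of $T'$ is new, say $pq=\mathrm c(ab)=uv$; if the third vertex $r\in\{a,b\}$ then $T'$ is the face $uva$ or $uvb$, so $r\notin\{a,b\}$ and $ur,vr\in E(G')$. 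The $4$-cycle $\gamma=u\,a\,v\,b$ persists in $G'$ — its edges lie in the faces $uab$, $vab$, each of which already carries the $F$-edge $ab$, so none of them is in $F$ — and its interior contains exactly the new edge $uv$; thus $r$ lies outside $\gamma$. If $ur,vr\in E\setminus F$, then using the edges $ax,bx\in E$ of $T_{ab}=abx$ (with $x$ outside $\gamma$) a planar argument confines $x$ to the path $u\,r\,v$, forcing $r=x$. Then $uvx$ would be separating in $G'$; but since $ua,ub,va,vb$ share a facial triangle with $ab$ and $ax,bx$ share the triangle $T_{ab}$ with $ab$, none of these six edges can be in $F$, so if $uvx$ is to be a triangle in $G'$ (i.e.\ $ux,vx\notin F$), then the triangles $uax,ubx,vax,vbx$ must all be facial in $G$ (otherwise they are separating triangles of $G$ missed by $F$). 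That forces $\deg_G u=\deg_G v=3$ with both links equal to $abx$, so the three inner and three outer faces exhaust the plane and $G=K_5-uv$ on five vertices, contradicting $n\ge 6$. The remaining case, in which $ur$ or $vr$ is itself a new edge $\mathrm c(a'b')$, is handled by the same analysis applied to the quadrilateral of $a'b'$, using that $a'b'$ too is incident to a separating triangle. This exhausts the cases, so $G'$ is $4$-connected.

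The step I expect to be the real obstacle is the last one: ruling out separating triangles \emph{created} by the flip. This is exactly where the two global hypotheses earn their keep — ``every edge of $F$ is incident to a separating triangle'' pins down the third vertex of a would-be new separating triangle, and ``$F$ meets every separating triangle of $G$'' then forces a degeneracy that only occurs below the threshold $n\ge 6$. The planar bookkeeping around the quadrilaterals $u\,a\,v\,b$ (in particular the subcase where a triangle spans two distinct flipped regions) is the technical heart; by contrast, flippability and the triangulation property are comparatively routine once the Bose et al.\ criterion is available.
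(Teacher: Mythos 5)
The paper does not actually prove this lemma: it is quoted from Bose et al.\ and used as a black box, so there is no internal proof to compare against; I am therefore judging your argument on its own. Much of it is sound: the flippability argument (the separating triangle $abx$ separates the two apexes, so $u\ne v$ and $uv\notin E$), the ``one $F$-edge per face'' observation, the fact that every separating triangle of $G$ loses an edge, your Case~A, and the one-new-edge case of Case~B, where pinning $r=x$ and then forcing $uax,ubx,vax,vbx$ to be facial (hence $\deg u=\deg v=3$ and $n=5$) is correct. Note that you silently use $n\ge 6$, which is not in the statement but is the section's standing assumption, and is genuinely needed: on the $5$-vertex triangulation the hypotheses are satisfiable and the conclusion fails.

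There are, however, two gaps. The smaller one is in the distinctness of the created edges: it is not true that ``inspecting the rotations forces $e_1$ and $e_2$ to share an endpoint'' --- in the octahedron two \emph{disjoint} edges have the same created edge. Excluding the disjoint configuration requires the separating-triangle hypothesis (in that configuration the two quadrilaterals meet only in the apexes, so the endpoints of $e_1$ have no common neighbour besides the apexes and $e_1$ lies on no separating triangle); this is fixable but not what you wrote. The serious gap is the final sentence ``the remaining case \dots is handled by the same analysis applied to the quadrilateral of $a'b'$''. If $T'=uvr$ with $uv=\mathrm{c}(ab)$ and $ur=\mathrm{c}(a'b')$, the pivotal step of your analysis --- a common neighbour \emph{in $G$} of the two apexes must lie on the cycle $abx$ --- is unavailable for either quadrilateral, because the relevant connecting edge ($ur$ for $Q_{ab}$, $uv$ for $Q_{a'b'}$) is not an edge of $G$. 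Nor is the case vacuous: take a degree-$4$ vertex $u$ with link $a\,b\,a'\,b'$, apexes $v$ over $ab$ and $r$ over $a'b'$, and $vr$ an edge; simultaneously flipping $ab$ and $a'b'$ creates the separating triangle $uvr$ in a $7$-vertex triangulation. That example violates only the hypothesis that every $F$-edge lies on a separating triangle, so that hypothesis must be deployed by a genuinely new argument --- for instance a Jordan-curve parity count showing the cycle $uvr$ would cross the triangle $abx$ an odd number of times unless $r=x$ (and symmetrically $v=x'$), after which the edge $vr$ would have to lie on more than two faces; and the case where all three edges of $T'$ are new needs yet another finish (the three apexes become adjacent to every endpoint of the three flipped edges, yielding a $K_{3,3}$ in a planar graph). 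None of this is a repetition of the one-new-edge analysis, so the $4$-connectivity half of your proof is incomplete precisely at the step you yourself identify as the crux.
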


Recall that the edges of a triangulation $G$ and its dual $G^*$ are in
one-to-one correspondence. Consequently, the set $F^*$ of edges dual to those in
$F$ forms a matching in $G^*$. As all faces of a triangulation are triangles,
$G^*$ is cubic ($3$-regular). Moreover, every triangulation on $n\ge 4$ vertices
is $3$-connected and so its dual is bridgeless ($2$-edge-connected). By a famous
theorem of Tait the following statement is equivalent to the Four-Color Theorem:
\begin{theorem}[Tait~\cite{bm-gt-2008}]\label{thm:tait4c}
  Every bridgeless cubic planar graph admits a partition of the edge set into
  three perfect matchings.
\end{theorem}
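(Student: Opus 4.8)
The plan is to derive this from the Four-Color Theorem via Tait's classical translation between edge $3$-colorings and face $4$-colorings. First I would fix a plane embedding of the bridgeless cubic planar graph $G$. By the Four-Color Theorem, the faces of $G$ (including the unbounded one) admit a proper coloring $\phi$ with the four colors of the group $\Z_2\times\Z_2=\{00,01,10,11\}$, meaning that any two faces that share an edge receive different colors. Since $G$ need not be $3$-connected here, the right notion throughout is edge-adjacency of faces rather than a well-defined dual; two faces may share several edges, which only helps.

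Next I would turn $\phi$ into an edge coloring. For an edge $e$ incident to faces $f$ and $g$, set $c(e)=\phi(f)+\phi(g)\in\Z_2\times\Z_2$. The crucial point is that $c(e)\neq 00$ for every edge: if the two sides of $e$ belonged to the same face then $e$ would be a bridge, so bridgelessness forces $f\neq g$, hence $\phi(f)\neq\phi(g)$ and $c(e)\neq 00$. Thus $c$ uses only the three nonzero colors $\{01,10,11\}$, and it remains to verify that $c$ is proper, i.e.\ that the three edges meeting at each vertex get pairwise distinct colors. Consider a vertex $v$ with incident edges $e_1,e_2,e_3$ in cyclic order, and let $f_1,f_2,f_3$ be the faces occupying the three corners at $v$ (these faces need not be globally distinct). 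Then $c(e_1)+c(e_2)+c(e_3)=(\phi(f_3)+\phi(f_1))+(\phi(f_1)+\phi(f_2))+(\phi(f_2)+\phi(f_3))=00$ in $\Z_2\times\Z_2$. Since each $c(e_i)$ is nonzero, the three values cannot repeat: if $c(e_i)=c(e_j)$ for $i\neq j$, then the remaining value would be their sum, namely $00$, a contradiction. Hence $c(e_1),c(e_2),c(e_3)$ are exactly the three nonzero elements of $\Z_2\times\Z_2$. Therefore each color class of $c$ meets every vertex in exactly one edge, i.e.\ is a perfect matching, and the edge set of $G$ is partitioned into three perfect matchings, as required. (The converse implication — recovering a face $4$-coloring from such a partition — is what makes the statement equivalent to the Four-Color Theorem, but it is not needed here.)

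I expect the only genuine obstacle to be the Four-Color Theorem itself, which supplies the face $4$-coloring and is the deep ingredient (the statement is explicitly advertised as equivalent to it); the remainder — the passage through $\Z_2\times\Z_2$ and the local check at each vertex — is elementary linear algebra over $\Z_2$. A secondary subtlety to handle carefully is exactly the fact that $G$ is only assumed bridgeless and cubic, not $3$-connected or simple: one must phrase face-adjacency in terms of shared edges, must not assume the three faces around a vertex are distinct, and should note that even a loop has two distinct sides in the plane; the parity computation above is robust to all of these.
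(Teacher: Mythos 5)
Your argument is correct: it is the standard Tait translation (face $4$-coloring with $\Z_2\times\Z_2$, edge color = sum of the two incident face colors, bridgelessness giving nonzero colors, and the telescoping sum at each cubic vertex forcing the three colors to be distinct), and the local verification handles the non-$3$-connected case properly. The paper itself gives no proof of this statement — it is quoted as a classical result with a citation — but your derivation is exactly the direction of the Tait equivalence that the paper implicitly relies on (e.g.\ in the runtime analysis, where a $4$-coloring is converted into the partition into perfect dual matchings), so there is nothing further to compare.
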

In particular, this applies to the dual of a triangulation. Call a set
$F\subseteq E$ of edges of a triangulation $G=(V,E)$ a (perfect) \emph{dual
  matching} if $F^*$ forms a (perfect) matching of $G^*$. While it is clear that
a perfect dual matching contains exactly one edge of each facial triangle, this
is not obvious for separating triangles. But it follows from a simple parity
argument, as the following lemma shows.\footnote{Bose et
  al.~\cite{bcgmw-sdfpt-07} derive this property from the explicit Tait
  coloring. The statement here is slightly more general because it holds for
  every perfect dual matching.}

\begin{restatable}{lemma}{ldualhittriangle}\label{p:dualhittriangle}
 Every perfect dual matching of a triangulation $G$ contains an edge of
 every triangle of $G$.
\end{restatable}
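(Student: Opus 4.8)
The plan is to use a parity argument based on the dual graph. Let $F$ be a perfect dual matching of $G$, so that $F^*$ is a perfect matching of the cubic planar graph $G^*$. Let $T$ be any triangle of $G$; if $T$ is facial this is immediate, so assume $T$ is a separating triangle. Then $T$ is a $3$-cycle in $G$ whose removal splits the plane into an ``inside'' region and an ``outside'' region, each containing at least one vertex of $G$. The three edges of $T$ correspond to three edges of $G^*$ that together form an edge cut of $G^*$: deleting the three dual edges separates the dual vertices corresponding to faces inside $T$ from those corresponding to faces outside $T$.

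First I would make this cut precise. Let $A$ be the set of vertices of $G^*$ corresponding to faces of $G$ lying inside $T$, and let $B$ be the remaining dual vertices. The edges of $G^*$ with one endpoint in $A$ and one in $B$ are exactly the three edges dual to the edges of $T$, since a face of $G$ can only be adjacent (across an edge of $G$) to a face on the other side of $T$ if that shared edge is one of the three edges of $T$. Now count edge-endpoints of the perfect matching $F^*$ incident to $A$. Since $F^*$ is a perfect matching, every vertex of $A$ is covered exactly once, so the number of $F^*$-edges meeting $A$, counted with multiplicity, relates $|A|$ to the number of matching edges lying wholly inside $A$ and the number crossing the cut: writing $m_A$ for the number of $F^*$-edges with both endpoints in $A$ and $c$ for the number of $F^*$-edges crossing the $(A,B)$-cut, we have $|A| = 2m_A + c$.

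The crux is then a parity observation. I claim $|A|$ is odd. The faces of $G$ inside the separating triangle $T$, together with $T$ itself viewed as the outer boundary, form a triangulation of a topological disk with all interior faces being triangles and the outer boundary a triangle; such a near-triangulation on, say, $p$ interior vertices has $2p+1$ triangular faces (this is the standard count: a triangulated disk with $b$ boundary vertices and $p$ interior vertices has $2p + b - 2$ triangles, here $b=3$). Hence $|A| = 2p+1$ is odd, so from $|A| = 2m_A + c$ we get that $c$ is odd, and in particular $c \ge 1$. Thus at least one edge of $F^*$ crosses the cut, i.e., $F$ contains at least one of the three edges dual to the edges of $T$ — equivalently, $F$ contains an edge of $T$. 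The main obstacle is getting the face-count for the interior of a separating triangle exactly right and arguing cleanly that the three dual edges are precisely the cut; once that combinatorial/topological bookkeeping is in place, the parity step is immediate. Note the argument nowhere uses an explicit Tait coloring, only that $F^*$ is a perfect matching, which is why the statement is slightly more general than the version in Bose et al.~\cite{bcgmw-sdfpt-07}.
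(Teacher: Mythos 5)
Your proof is correct and follows essentially the same route as the paper's: both argue that the number of faces of $G$ inside a separating triangle is odd (you via the triangulated-disk count $2p+b-2$ with $b=3$, the paper via the face count $2|\mathrm{V}(H)|-4$ of the maximal planar subgraph $H$), so a perfect matching of $G^*$ must use one of the three dual cut edges, i.e., an edge of $T$. The only difference is bookkeeping, not substance.
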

\begin{proof}
 For facial triangles the statement holds by definition. So consider a
 separating triangle $T$ of $G$ and the subgraphs $H$ and $H'$ of $G$ induced by $T$
 together with the two respective components of $G\setminus T$. As $H$ is a maximal
 planar graph, it has $2|\mathrm{V}(H)|-4$ faces including the facial triangle
 $T$. Hence the number of faces of $H$ different from $T$ is odd and so every perfect
 matching of $G^*$ contains at least one edge that connects a face of $H$ with
 a face of $H'$. The corresponding primal edge of the dual matching
 is an edge of $T$, as required.
\end{proof}

The combination of Theorem~\ref{thm:tait4c} with Lemma~\ref{p:dualhittriangle}
immediately yields the following
\begin{restatable}{corollary}{cdualhittriangle}\label{c:dualhittriangle}
  Every triangulation $G$ admits a partition of the edge set into three perfect
  dual matchings such that every triangle of $G$ is incident to exactly one edge
  from each of the three matchings.
\end{restatable}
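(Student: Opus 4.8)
The plan is to combine the two facts just established. First I would recall the structural observations already made about the dual: for a triangulation $G$ on $n\ge 4$ vertices, the dual graph $G^{*}$ is planar, cubic (every face of $G$, including the outer one, is a triangle), and bridgeless (since $G$ is $3$-connected); the only remaining case $n=3$, i.e.\ $G=K_{3}$, is immediate by inspection. Hence Theorem~\ref{thm:tait4c} applies to $G^{*}$ and yields a partition $E(G^{*})=M_{1}\cupdot M_{2}\cupdot M_{3}$ of its edge set into three perfect matchings.

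Next I would pull this partition back to the primal side. Because the edges of $G$ and of $G^{*}$ are in one-to-one correspondence, setting $F_{i}\subseteq E$ to be the set of primal edges dual to the edges of $M_{i}$ produces a partition $E=F_{1}\cupdot F_{2}\cupdot F_{3}$, and by construction each $F_{i}$ is a perfect dual matching.

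Finally I would apply Lemma~\ref{p:dualhittriangle} to each $F_{i}$ separately: every perfect dual matching contains at least one edge of every triangle $T$ of $G$. Since $T$ has exactly three edges and they are distributed among the three parts $F_{1},F_{2},F_{3}$, each of which already claims at least one of them, each $F_{i}$ must contain exactly one edge of $T$. That is precisely the assertion of the corollary.

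There is no real obstacle in this argument: all the substance sits in the two results we are allowed to invoke---Theorem~\ref{thm:tait4c}, which is equivalent to the Four-Color Theorem, and the parity argument behind Lemma~\ref{p:dualhittriangle}. The only points needing a word of care are the trivial base case $n=3$ and the observation that, under primal--dual duality, a partition of $E(G^{*})$ transfers verbatim to a partition of $E$.
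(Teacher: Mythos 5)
Your proof is correct and matches the paper's argument: the paper obtains the corollary as an immediate combination of Theorem~\ref{thm:tait4c} (Tait's partition of the bridgeless cubic dual into three perfect matchings) with Lemma~\ref{p:dualhittriangle}, exactly the pull-back-and-pigeonhole reasoning you give. The extra remarks about the $n=3$ case and the edge correspondence are fine but not needed beyond what the paper already sets up.
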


The last missing bit to prove Theorem~\ref{thm:upsim4} is an upper bound on the
number of edges in a triangulation that can be incident to separating triangles.
\begin{restatable}{lemma}{edgebound}\label{p:edgebound}
  At most $2n-7$ edges of a maximal planar graph on $n\ge 4$ vertices are
  incident to separating triangles. This bound is the best possible.
\end{restatable}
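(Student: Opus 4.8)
The plan is to prove the upper bound by a charging/counting argument over the $4$-block tree, and the lower bound by an explicit construction. For the upper bound, note that an edge of $G$ is incident to a separating triangle if and only if it lies in some block of $\mathcal{B}$ that is a $K_4$, \emph{or} it is one of the three edges of a separating triangle (these separating triangles are exactly the shared boundaries between adjacent blocks). So the edges incident to separating triangles are precisely the edges contained in the union of (i) all $K_4$-blocks and (ii) all separating triangles. I would bound this union by relating it to the structure of $\mathcal{B}$. Let $t$ be the number of separating triangles of $G$; these correspond to the edges of the tree $\mathcal{B}$, so $\mathcal{B}$ has $t+1$ nodes. Each separating triangle contributes at most $3$ edges, but these are shared between the two incident blocks, so a naive bound of $3t$ already overcounts; more importantly, the $4$-connected blocks contribute \emph{no} interior edges to the count, only their boundary separating triangles. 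Hence all edges incident to separating triangles lie in the $K_4$-blocks together with the separating triangles themselves, and I would argue that the vertex sets of distinct $K_4$-blocks and the ``new'' vertices introduced along the tree force $n$ to grow: each non-root node of $\mathcal{B}$ contributes at least one fresh vertex (the apex not on the shared triangle, in the $K_4$ case) while sharing a triangle with its parent. A careful accounting of how many edges can be packed given $n$ vertices, maximizing the number of $K_4$-blocks, should give exactly $2n-7$.

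Concretely, I expect the extremal configuration to be a \emph{stacked triangulation} (Apollonian network): start from $K_4$ and repeatedly insert a vertex inside a facial triangle, joined to its three vertices. Every edge of such a $G$ is incident to a separating triangle once $n \ge 5$ (indeed every triangle except the outer face and the three innermost ones around the last-inserted vertex is separating — one needs to check the boundary cases carefully). Such a graph on $n$ vertices has $3n-6$ edges, and I would verify that exactly $2n-7$ of them are incident to separating triangles, matching the claimed bound and proving tightness. So the lower-bound half reduces to exhibiting this family and doing an exact edge count, distinguishing edges on the outer face and edges meeting only facial triangles.

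For the upper-bound half, here is the cleaner route I would actually pursue. Let $S$ be the set of edges incident to a separating triangle. Contract each $4$-connected block of $\mathcal{B}$ that is not a $K_4$ to understand which of its edges lie in $S$: only its at-most-boundary edges, i.e. edges shared with a separating triangle. So $S$ is covered by the $K_4$-blocks plus the separating triangles. Now double count incidences, or better: observe that the subgraph $G[S]$ (or rather the subgraph with edge set $S$) is itself a planar graph, so $|S| \le 3n' - 6$ where $n'$ is the number of vertices it spans; but this only gives $3n-6$, too weak. The improvement must come from the fact that $G[S]$ cannot be a triangulation — its faces are not all triangles because the $4$-connected blocks leave ``holes.'' The hard part, and the main obstacle, will be turning this intuition into a clean inequality: I would set it up as an induction on the number of nodes of $\mathcal{B}$, peeling off a leaf block $B$ of $\mathcal{B}$ (sharing a separating triangle $T$ with its neighbor), bounding the edges of $S$ inside $B$ by $2|\mathrm{V}(B)| - 4$ when $B$ is a $K_4$ (all six edges, with $|\mathrm{V}(B)|=4$, gives $6 = 2\cdot 4 - 2$, so one must be more careful and account for the three shared edges of $T$ only once) and by $3$ (just the triangle $T$) when $B$ is $4$-connected, then gluing along $T$ and checking the arithmetic closes with $2n-7$. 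Making the shared-triangle bookkeeping add up exactly — rather than off by a small constant — is where the real work lies.
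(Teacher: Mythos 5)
Your plan has genuine gaps in both directions. For the upper bound, the opening characterization is off: by the paper's definition, an edge is incident to a separating triangle exactly when it is an edge of one, so the set $S$ to be bounded is simply the union of the edge sets of the separating triangles; enlarging it to ``all edges of $K_4$-blocks plus separating triangles'' destroys the bound rather than helping (in an Apollonian network every block is a $K_4$ and the blocks cover all $3n-6$ edges, while $|S|\le 2n-7$; a leaf $K_4$-block contributes only the $3$ edges of its boundary triangle to $S$, never its three apex edges). More importantly, your ``cleaner route'' --- peel a leaf block $B$ of the $4$-block tree off along its boundary triangle $T$ and induct --- is essentially the paper's induction (the paper peels the component inside a \emph{minimal} separating triangle), but you explicitly defer the only hard part, and that part does not take care of itself. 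The critical case is a leaf consisting of a single stacked (degree-$3$) vertex where, after deletion, no edge of $T$ lies on any other separating triangle: peeling removes one vertex but forces you to add the three edges of $T$ to the count, giving $2(n-1)-7+3=2n-6$, off by one. The paper closes this gap with an extra idea: it contracts an edge $ab$ of $T$ in the peeled graph (this is legitimate precisely because $ab$ is on no separating triangle there, so $a$ and $b$ have only two common neighbors and no parallel edges arise), drops to $n-2$ vertices, and then accounts for the edges identified by the contraction to land exactly on $2n-7$. Without some substitute for this step, ``careful bookkeeping'' alone will not make the induction close.

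The lower bound half is also not correct as stated. It is false that in a stacked triangulation on $n\ge 5$ vertices every edge is incident to a separating triangle (that would give $3n-6>2n-7$ such edges; already for $n=5$ only $3$ of the $9$ edges qualify, since the edges to a stacked vertex never lie on a separating triangle), and an \emph{arbitrary} Apollonian network does not achieve $2n-7$: insert one vertex into each of three faces of $K_4$ and you get $n=7$ with only $6$ qualifying edges (and a face all of whose edges are already on separating triangles, into which further insertions add nothing). To reach $2n-7$ you must follow the paper's insertion rule --- always stack into a face having exactly one edge on a separating triangle, so that each insertion adds exactly two such edges --- starting from a base example on $6$ vertices with $5$ such edges. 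So your family is the right kind of object, but the exactness claim needs this specific construction, not stacked triangulations in general.
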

\begin{proof}
  We proceed by induction on the number of separating triangles. For a maximal planar
  graph without separating triangles the statement is trivial. For $n=4$,
  the only maximal planar graph is $K_4$ and it has no separating
  triangle. For $n=5$, there is only one maximal planar graph up to isomorphism,
  and it contains exactly one separating triangle, bounded by $3=2\cdot 5-7$ edges.

  Consider a maximal planar graph $G$ on $n\ge 6$ vertices and a minimal
  separating triangle $T$ of $G$, that is, a separating triangle such that for
  at least one component $C$ of $G\setminus T$ the subgraph $H:=G[C\cup T]$ does
  not contain a separating triangle (equivalently, $H=K_4$ or $H$ is
  $4$-connected). Put $k=|C|\in\{1,\ldots,n-4\}$. The graph $G'=G\setminus C$
  has $n-k$ vertices and contains exactly one fewer separating triangle than
  $G$. By the inductive hypothesis, at most $2(n-k)-7$ edges of $G'$ are
  incident to separating triangles of $G'$. As far as the corresponding count
  for $G$ is concerned, only the three edges of $T$ have to be accounted for in
  addition.

  If some edge of $T$ also bounds a separating triangle
  in $G'$, then this edge has already been counted inductively in $G'$.
  Including the remaining at most two edges of $T$, we see that
  at most $2(n-k)-7+2\le 2n-7$ edges of $G$ are incident to separating
  triangles of $G$. Also if $k\ge 2$, then at most $2(n-k)-7+3\le 2n-8$ edges of
  $G$ are incident to separating triangles of $G$.

  Otherwise, $k=1$ and none of the edges of $T$ is incident to any separating
  triangle in $G'$. Denote the vertices of $T$ by $T=(a,b,c)$, and let $T$ and
  $(b,a,d)$ be the two faces of $G'$ incident to the edge $ab$. By contracting
  the edge $ab$ in $G'$, we obtain a graph $G''$ on $n-2$ vertices. The
  contraction identifies the two edges $ac$ and $bc$ into a single edge.
  Similarly the two edges $ad$ and $bd$ are identified into a single edge.

  We claim that after this contraction $G''$ is simple, that is, no multi-edge
  is introduced (other than the two edge pairs already mentioned and
  handled). This is because the vertices $a$ and $b$ have exactly two common
  neighbors in $G'$, which are $c$ and $d$. If $a$ and $b$ had any other common
  neighbor $w\notin\{c,d\}$, then the triangle $abw$ would be a separating
  triangle in $G'$, contrary to our assumption that $ab$ is not incident to any
  separating triangle in $G'$. Hence $b$ and $d$ are the only common neighbors
  of $a$ and $b$, and so no multi-edge is created by contracting $ab$, as
  claimed.

  Finally we observe that by the inductive hypothesis at most $2(n-2)-7=2n-11$
  edges of $G''$ are incident to separating triangles of $G''$. In addition to
  the three edges of $T$ we also have to account for changes caused by the
  contraction of the edge $ab$. Edges $ac$ and $bc$ are identified in $G''$, but
  neither is incident to any separating triangle in $G'$ by assumption. Edges
  $ad$ and $bd$ are also identified. They each may be incident to separating
  triangles in $G'$ but they are counted once only in $G''$. Consequently, we
  count the three edges of $T$ and one additional edge for a total of at
  most $(2n-11)+3+1=2n-7$ edges incident to separating triangles in $G$.

  For a matching lower bound, consider the graphs depicted in
  \figurename~\ref{fig:edgebound}.  The solid edges are incident to separating
  triangles. On the left, we have $n=6$ and $2n-7=5$ and exactly $5$ edges
  incident to separating triangles. To obtain larger examples, repeatedly insert
  a new vertex into a face with exactly one solid edge. The remaining two edges
  of this face become solid. Note that this operation creates a face with
  exactly one edge that is incident to a separating triangle, and so the
  operation can be repeated indefinitely. After $k$ such operations we have
  $n=6+k$ vertices and precisely $5+2k = 2n-7$ edges incident to separating
  triangles, as desired.
  \begin{figure}[htbp]
    \centering
    \includegraphics{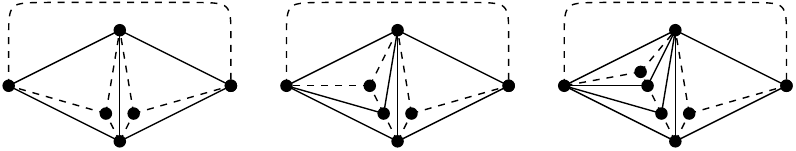}
    \caption{Tight examples for Lemma~\ref{p:edgebound}, for $n=6,7,8$.}
    \label{fig:edgebound}
  \end{figure}
\end{proof}
Now we have all pieces together to prove Theorem~\ref{thm:upsim4}.
\upsimfour*
\begin{proof}
  Consider a maximal planar graph $G$ on $n$ vertices.  By
  Corollary~\ref{c:dualhittriangle} the $3n-6$ edges of $G$ can be partitioned
  into three perfect dual matchings $D_1$, $D_2$, and $D_3$, of $n-2$ edges
  each, such that each separating triangle is incident to one edge from each.
  Let $M_i$, for $i\in\{1,2,3\}$, denote the dual matching that results from
  removing all edges from $D_i$ that are not incident to any separating
  triangle. By Lemma~\ref{p:edgebound} at most $2n-7$ edges of $G$ are incident
  to separating triangles. Therefore, one of $M_1$, $M_2$, and $M_3$ contains at
  most $\lfloor(2n-7)/3\rfloor$ edges. By Lemma~\ref{lem:simflip} these edges
  are simultaneously flippable and the resulting graph is $4$-connected.

  All separating triangles (and incident edges) can be found in $O(n)$
  time~\cite{cn-asla-85}. Theorem~\ref{thm:tait4c} is known to be equivalent to
  the Four Color Theorem~\cite{bm-gt-2008}, and a proper $4$-coloring of $G$
  yields an edge partition into dual matchings in all 4-connected subgraphs in
  $O(n)$ time.  The current best algorithm for $4$-coloring a planar graph with
  $n$ vertices runs in $O(n^2)$ time~\cite{rsst-fct-97}. Consequently, we can
  find a smallest dual matching from $\{M_1,M_2,M_3\}$ in $O(n^2)$ time.
\end{proof}

The following construction shows that the bound in Theorem~\ref{thm:upsim4} is
tight up to an additive constant of $\pm 1$. %
\upsimfourlower*
\begin{proof}
  Start with $G_0=K_4$ and select a face $f_0$ of $G_0$. For $i\in\N$, the graph
  $G_i$ is recursively obtained from $G_{i-1}$ as follows (see \figurename~\ref{fig:3}
  where $f_0$ is the outer face): For each face $f$ adjacent to $f_0$ in $G_{i-1}$, insert
  a new vertex of degree 3 into $f$ and connect it to all three vertices of $f$.
  Since $f_0$ is adjacent to three distinct faces, the number of vertices in $G_i$ is
  $n_i=3i+4$. By construction, $G_i$ has three groups of separating triangles.
  Each group contains $i$ separating triangles that lie in one of the three
  subdivided faces of $G_0$ and share a common edge with $f_0$.

  As the face $f_0$ is incident to all $3i$ separating triangles in $G_i$, it is
  tempting to just flip the three edges of $f_0$. However, a simultaneous flip
  can include at most one of the edges incident to $f_0$. Consequently, at least
  two edges of $f_0$ remain untouched, each of which is incident to a group of
  $i$ separating triangles. As no two triangles within a group share any other edge,
  one flip per triangle is needed to destroy them all simultaneously. Also two
  separating triangles from different groups are edge-disjoint---except for the three
  largest separating triangles, which are bounded by the edges of $G_0$.
  But any two groups share only one such edge and so at most one flip can be saved
  in this way. Therefore, in order to handle the two groups whose edge incident to
  $f_0$ is not flipped at least $2i-1$ edges need to be flipped. Clearly, at least
  one more edge flip is required to handle the third group, which leaves us with
  the claimed bound of at least $2i$ edges.
\end{proof}
\begin{figure}[htbp]
  \centering%
  \subfloat[$G_1$]{\includegraphics[scale=\figscale]{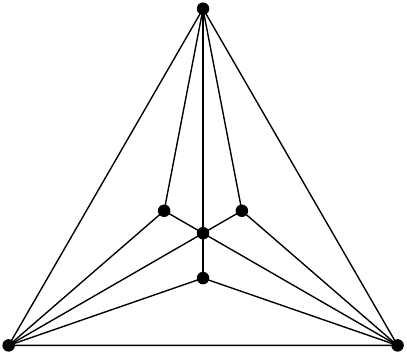}}\hfil
  \subfloat[$G_2$]{\includegraphics[scale=\figscale]{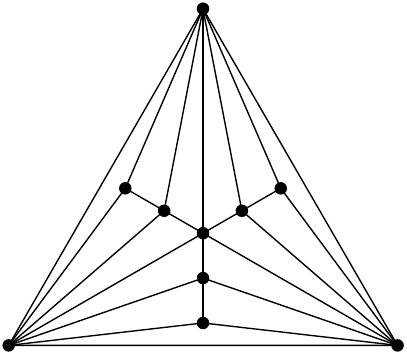}}\hfil
  \subfloat[$G_4$]{\includegraphics[scale=\figscale]{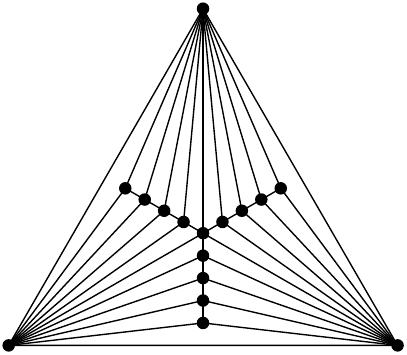}}\hfil
  \caption{The first members of a family of triangulations that require a
    simultaneous flip of at least $(2n-8)/3$ edges to become
    $4$-connected.\label{fig:3}}
\end{figure}

\section{Flip Distance to Hamiltonicity}\label{sec:hamflip}

With regard to arc diagrams, there is actually no reason to insist that the
triangulation be $4$-connected. In order to apply Lemma~\ref{lem:simflipbiarc}
we need only that the triangulation is Hamiltonian. In this section we go one
step further and in addition lift the restriction that the flip be
simultaneous. Instead, an arbitrary sequence of edge flips is allowed. In this
case tight bounds are known if the goal is to obtain a $4$-connected
triangulation. Bose at al.~\cite{bjrsv-mt4uf-14} showed that
$\lfloor(3n-9)/5\rfloor$ flips are always sufficient and sometimes $(3n-10)/5$
flips are necessary to transform a given triangulation on $n$ vertices into a
$4$-connected triangulation.

In general, a sequence  of flips has no direct implication for arc
diagrams. But if only edges of the original triangulation are flipped, then we
can subdivide those edges rather than flipping them. In the resulting arc
diagram only the subdivided edges may appear as biarcs. But a bound on the flip
distance to a Hamiltonian triangulation is of independent interest. For
instance, it is directly related to the current best upper bound on the
diameter of the flip graph of combinatorial
triangulations~\cite{bjrsv-mt4uf-14,k-dfts-96,mno-dfhts-03}. The argument uses a
single so-called canonical triangulation and shows that every triangulation can
be transformed into this canonical triangulation in two steps: First at most
$\lfloor(3n-9)/5\rfloor$ flips are needed to obtain a $4$-connected
triangulation and then an additional at most $2n-15$ flips are needed to
transform any $4$-connected triangulation into the canonical one. Combining two
such flip sequences yields an upper bound of $5.2n-33.6$ on the diameter of the
flip graph~\cite{bjrsv-mt4uf-14}. The bound of $2n-15$ flips for the second step
is actually tight~\cite{k-dfts-96}. The corresponding bound for a triangulation
that is Hamiltonian (but not necessarily $4$-connected) is slightly worse only:
It can be transformed into the canonical triangulation using at most $2n-10$
flips~\cite{mno-dfhts-03}. Hence our focus is to improve the first step
by showing that fewer flips are needed to guarantee a Hamiltonian triangulation
than a $4$-connected one.

\hamflip*

\minipar{Proof outline.} The proof is constructive and consists of two
steps. In a first step we apply a sequence of elementary operations that
transform a triangulation $G$ into a $4$-connected triangulation $G'$. An
elementary operation is either a usual edge flip or a \emph{dummy flip}, where
a facial triangle $T$ is subdivided into three triangles by inserting a new
(dummy) vertex and then all three edges of $T$ are flipped. All this will be
done in such a way that $G'$ becomes $4$-connected and, therefore, contains a
Hamiltonian cycle $H'$. We then remove all dummy vertices and construct a
Hamiltonian cycle $H''$ resembling $H'$ in the resulting triangulation $G''$.
Finally, we argue that $G''$ can be obtained from $G$ with at most $n/2$ (usual)
edge flips. Specifically, we show that each dummy flip can be implemented using
at most two edge flips.

\minipar{Dummy flips.} Given a triangulation $G$ on $n\ge 4$ vertices and a
facial triangle $T$ of $G$, a dummy flip of $T$ transforms $G$ as follows
(\figurename~\ref{fig:5}): First, insert a new (dummy) vertex $v$ in the interior of face
$T$ and connect it to all three vertices of $T$. Note that $T$ becomes a separating
triangle in the resulting graph. Second, flip all three edges of $T$ in an arbitrary order.
\begin{figure}[htbp]
  \centering%\renewcommand{\figscale}{0.8}%
  \subfloat[before]{\includegraphics[scale=\figscale]{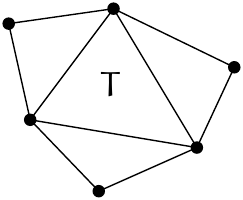}}\hfil
  \subfloat[Step~1]{\includegraphics[scale=\figscale]{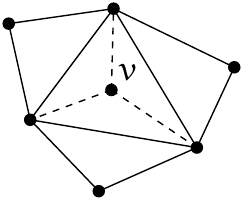}}\hfil
  \subfloat[Step~2]{\includegraphics[scale=\figscale]{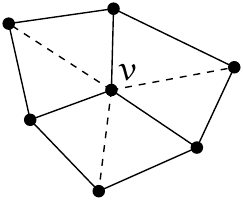}}\hfil
  \caption{Example of a dummy flip.\label{fig:5}}
\end{figure}
Similarly to the usual flip operation, a dummy flip may create multiple edges.
But we will use this operation in specific situations only---as specified in the
lemma below---where we can show that it produces a triangulation (that is, no
multiple edges).
\begin{restatable}{lemma}{dummyflip}\label{prop:dummy-flip}
  Let $G$ be a maximal planar graph and let $T$ be a facial triangle of $G$ such
  that every edge of $T$ is incident to a separating triangle of $G$. Then the
  dummy flip operation of $T$ in $G$ produces no parallel edges and no new
  separating triangles.
\end{restatable}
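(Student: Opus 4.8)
Write $T=xyz$ and let $w_1,w_2,w_3$ be the third vertices of the faces of $G$ sharing the edges $xy$, $yz$, $zx$ with $T$, respectively; since each of these faces is distinct from $T$, none of $w_1,w_2,w_3$ lies in $\{x,y,z\}$. I would first record what a dummy flip does combinatorially: the four faces $T$, $xyw_1$, $yzw_2$, $zxw_3$ together form a hexagonal region whose boundary is the cycle $C_v=(x,w_1,y,w_2,z,w_3)$ and whose interior consists precisely of the three edges $xy$, $yz$, $zx$ (and no vertices); inserting $v$ and flipping these three edges deletes $xy,yz,zx$ and replaces the interior of the hexagon by $v$ joined to all six vertices of $C_v$, the six faces at $v$ being the triangles spanned by $v$ and the consecutive pairs of $C_v$. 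Consequently the dummy flip creates a parallel edge exactly when two of $w_1,w_2,w_3$ coincide, and — once $w_1,w_2,w_3$ are known to be distinct — the only triangles of the resulting graph through $v$ are those six faces together with the triangles $vab$ for which $ab$ is a chord of $C_v$ present in the new graph; moreover a routine comparison of the face sets of $G$ and of the new graph shows that every separating triangle of the new graph not using $v$ is already a separating triangle of $G$. Since the chords $xy$, $yz$, $zx$ of $C_v$ are exactly the edges destroyed by the flip, the whole statement reduces to showing (i) $w_1,w_2,w_3$ are pairwise distinct, and (ii) none of the chords $xw_2$, $yw_3$, $zw_1$, $w_1w_2$, $w_2w_3$, $w_3w_1$ is an edge of $G$.

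The tool for both parts is the hypothesis combined with the Jordan curve theorem. For an edge $ab$ of $T$, fix a separating triangle $S_{ab}$ of $G$ incident to $ab$; its apex differs from the two vertices completing $ab$ to the faces of $G$ at $ab$ (otherwise $S_{ab}$ would be facial), and since $S_{ab}$ is a simple closed curve through the edge $ab$ and those two vertices lie on opposite sides of $ab$, they lie strictly on opposite sides of $S_{ab}$. Applied to the three edges of $T$ this shows that $S_{xy}$ separates $z$ from $w_1$, that $S_{yz}$ separates $x$ from $w_2$, and that $S_{zx}$ separates $y$ from $w_3$. In particular the long chords $zw_1$, $xw_2$, $yw_3$ cannot be edges of $G$, since such an edge, running between the two sides, would have to cross the separating curve, which is impossible in a plane drawing. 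The same separation also yields (i): if, say, $w_1=w_2$, then $z$ would be adjacent to $w_1$ via the edge $zw_2$ of the face $yzw_2$, contradicting that $S_{xy}$ separates $z$ and $w_1$; the cases $w_2=w_3$ and $w_3=w_1$ are symmetric.

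It remains to rule out the short chords, say $w_1w_2$, the other two following by the cyclic symmetry of $T$. Suppose $w_1w_2\in E(G)$ and look at $S=S_{xy}$, which separates $z$ from $w_1$; exactly one of them lies in the bounded region of $S$. The vertex $w_2$ is adjacent to $z$ (edge of the face $yzw_2$) and to $w_1$ (by assumption), hence it cannot lie strictly on the side opposite to whichever of $z,w_1$ is bounded — unless $w_2$ is a vertex of $S$, but then $S=xyw_2$ would be a triangle, forcing $xw_2\in E(G)$, already excluded. So $w_2$ lies strictly on the bounded side of $S$, while the other of $z,w_1$ lies strictly on the unbounded side and is adjacent to $w_2$; that edge would cross $S$, a contradiction. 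This establishes (ii) and finishes the proof. The points I expect to need the most care are the Jordan-curve separation claims (that the two face apexes at an edge of $T$ genuinely land on opposite sides of the incident separating triangle, and that a vertex adjacent to a vertex in the bounded region is itself not in the unbounded region) and the small check that $w_2$ cannot be a vertex of $S$; everything else is bookkeeping.
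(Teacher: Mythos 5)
Your proof is correct and follows essentially the same route as the paper: for each edge of $T$ the incident separating triangle is used as a Jordan curve that strictly separates the two face apexes at that edge, and planarity then forbids exactly the edges (your hexagon chords and coincidences $w_i=w_j$) that could cause parallel edges or a new separating triangle through the dummy vertex. The only difference is organizational---you recast the statement statically as ``no chord of the hexagon around $v$'' rather than arguing flippability edge by edge and then excluding triangles $vwx$ as the paper does---and your explicit treatment of the case where an apex $w_j$ lies on the separating triangle $S$ (excluded via the long-chord case) is a welcome detail that the paper's corresponding step passes over tersely.
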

\begin{proof}
  Let $T=abc$ be a facial triangle of $G$ as specified above and insert a new
  vertex $v$ into $T$. First we claim that every edge of $T$ is flippable.
  Consider the edge $ab$ and assume that it is incident to faces $T'=abv$ and
  $T''=abd$. The only obstruction to flippability of $ab$ is the presence of an
  edge $vd$ in $G$. By assumption there is a separating triangle $S=abe$ in $G$,
  for some vertex $e$. Given that both $T'$ and $T''$ are facial, the vertices
  $v$ and $d$ are separated by $S$ (they are in different components of
  $G\setminus S$). Therefore, by planarity of $G$, the edge $vd$ is not present
  in $G$ and so $ab$ is flippable, as claimed.

  Noting that any two distinct triangles in a (simple) graph share at most one
  edge, we observe that no separating triangle shares two edges with $T$. In
  particular, flipping the edge $ab$ does not destroy any separating triangle
  incident to the edges $bc$ or $ca$. Hence even after flipping one or two edges
  of $T$, we can still apply the above reasoning to show that the other edge(s)
  of $T$ remain flippable. It follows that all three edges of $T$ can be flipped
  in any order.

  It remains to show that these flips do not introduce any separating
  triangle. Denote by $G'$ the graph that results from the dummy flip of $T$ in
  $G$. As all newly introduced edges are incident to $v$, any new separating
  triangle must also be incident to $v$. So suppose $S=vwx$ is a separating
  triangle in $G'$. In particular, this means that the edge $wx$ was present in
  $G$ already. At most one of $w$ and $x$ can be vertices of $T$, otherwise
  $wx$ would not be an edge of $G'$ (exactly the edges of $T$ were flipped away,
  after all). So we may suppose without loss of generality that $w$ is a vertex
  of some triangle $T'=abw$ in $G$. However, by assumption there is a separating
  triangle incident to the edge $ab$ in $G$, which separates $w$ (in $G$) from
  all neighbors of $v$ in $G'$ other than $a$ and $b$. It follows that
  $x\in\{a,b\}$, but the triangles $vwa$ and $vwb$ are facial in $G'$ by
  construction. Therefore, there is no separating triangle in $G'$ that is
  incident to $v$ and so no separating triangle is introduced by the dummy flip
  of $T$ in $G$.
\end{proof}

\subsection{First Step: Establish 4-Connectedness}

Our main lemma to establish Theorem~\ref{thm:hamflip} is the following.
\begin{lemma}\label{lem:charging}
  Every maximal planar graph on $n\ge 6$ vertices can be transformed into a
  $4$-connected maximal planar graph by a sequence of $f$ flip and $d$
  dummy flip operations, for some $f,d\in\N$, such that $f+2d\le (n-3)/2$.
\end{lemma}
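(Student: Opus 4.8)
The plan is to charge flip and dummy-flip operations to the separating triangles of $G$ via the $4$-block tree $\mathcal{B}$, and to bound the total charge by $(n-3)/2$ using the structure of $\mathcal{B}$. First I would recall that a triangulation $G$ is $4$-connected iff it has no separating triangle, and that the separating triangles of $G$ correspond exactly to the edges of the $4$-block tree $\mathcal{B}$, whose nodes are the $K_4$'s and maximal $4$-connected blocks of $G$. The strategy is to process the leaves of $\mathcal{B}$ one at a time, peeling off one separating triangle per step until $\mathcal{B}$ has a single node (i.e.\ $G$ has become $4$-connected). Concretely, if $T$ is a separating triangle corresponding to a leaf edge of $\mathcal{B}$, then one side of $T$ is either a $K_4$ or a $4$-connected block $H$ containing no further separating triangle. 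I would pick such a minimal separating triangle $T$ and destroy it: if $T$ has an edge $e$ that is \emph{not} incident to any \emph{other} separating triangle, a single flip of $e$ suffices (and Lemma~\ref{prop:dummy-flip}-style reasoning shows the flip is legal and creates no new separating triangle); otherwise all three edges of $T$ are incident to other separating triangles, and I perform a dummy flip of the facial triangle on the ``thin'' side of $T$—note that when the minimal side is $K_4$, its central vertex sits in a facial triangle whose three edges are exactly the edges of $T$, so Lemma~\ref{prop:dummy-flip} applies and the dummy flip legally destroys $T$ without introducing separating triangles.

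The heart of the argument is the charging/counting. Each ordinary flip destroys at least one separating triangle and costs $1$; each dummy flip costs $2$ (in the budget $f+2d$) but, crucially, must also pay for the dummy vertex it introduces, since that vertex will be removed in the second step. I would set up an amortized bound: assign to each separating triangle of $G$ a budget, and show that the total number of separating triangles plus dummy vertices used is controlled by $n$. The key observation is that a dummy flip is performed only when all three edges of $T$ are ``saturated'' by other separating triangles; in the $4$-block tree this means the leaf $K_4$-node at $T$ is adjacent (through $T$) to a node each of whose three boundary edges carries a separating triangle—so the dummy flip is amortized against \emph{three} units of progress in $\mathcal{B}$ rather than one. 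After accounting, the number $f$ of ordinary flips plus $2d$ is at most the number of separating triangles handled cheaply, plus $2\cdot(\text{expensive ones})$, and since expensive ones come in triples this telescopes. A clean way to see the final bound $f+2d\le (n-3)/2$: every operation (flip or dummy flip) reduces the quantity (number of vertices $-$ number of edges of $\mathcal{B}$, suitably weighted) by a fixed amount, and the initial value is at most $(n-3)/2$ because a triangulation on $n$ vertices has at most $n-4$ separating triangles while each dummy flip is credited against $2$ vertices it could have used.

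The main obstacle I anticipate is the bookkeeping that ensures dummy flips are ``rare enough''—i.e.\ showing that one cannot be forced into many dummy flips in a row. The danger is a deeply nested stack of $K_4$'s (a nested-triangle-like graph) where every minimal separating triangle naively looks like it needs a dummy flip. The resolution must be that after peeling off one leaf $K_4$ by a dummy flip, the newly exposed triangle has an edge free of other separating triangles, so the \emph{next} step is a cheap single flip; thus dummy flips and cheap flips alternate in the worst case, giving the factor $2$ in $f+2d$ exactly the right size. I would formalize this by a careful case analysis on the $4$-block tree leaf structure: distinguishing whether the sibling subtree at $T$ forces saturation of one, two, or three edges of $T$, and arguing that a dummy flip (cost $2$) is only ever needed in the ``all three saturated'' case, which consumes at least two separating triangles worth of structure. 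Combined with the at-most-$(n-4)$ bound on separating triangles and an off-by-constant adjustment for the base cases $n=6$, this yields $f+2d\le \lfloor(n-3)/2\rfloor$, and the $O(n^2)$ running time follows from the $O(n)$-time $4$-block tree computation~\cite{k-amcvr-97} together with the $O(n^2)$-time $4$-coloring needed to realize flips as before.
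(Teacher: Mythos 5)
There is a genuine gap, and it lies exactly where the lemma is hard: your charging cannot reach $(n-3)/2$, and your rule for when to use a dummy flip is inverted. A triangulation can have up to $n-4$ separating triangles (one per edge of the $4$-block tree), and every operation in your scheme destroys essentially one of them: a cheap flip of an edge lying on no other separating triangle kills exactly one, and your dummy flip---placed on a face on the ``thin'' side of the minimal triangle $T$---flips three edges of which at most one (the edge of $T$) is incident to any separating triangle, so it kills only what a single cost-$1$ flip of that edge would kill, at cost $2$. Even granting your hoped-for alternation of cost-$1$ and cost-$2$ steps, one triangle per operation yields a bound of order $n$, not $n/2$. The decisive counterexamples are the instances mentioned right after the lemma in the paper: triangulations containing $\lfloor(3n-9)/5\rfloor$ pairwise edge-disjoint separating triangles. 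There (e.g., when a vertex is stacked into each face of an independent family of faces of a $4$-connected triangulation) no edge of any separating triangle lies on another separating triangle, so your trigger for dummy flips never fires and you are forced into one flip per triangle, i.e., roughly $3n/5>(n-3)/2$ operations. Any correct proof must let a dummy flip destroy \emph{three distinct, edge-disjoint} separating triangles at cost $2$, which means applying it to a \emph{face adjacent to three separating triangles} (the paper's checkerboard situation, Lemma~\ref{lem:predummy})---the opposite of your criterion ``all three edges of $T$ lie on other separating triangles'', which is precisely the situation where a single flip is most efficient because one flip kills several triangles through a shared edge.

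Two further problems: the claim that the central vertex of a leaf $K_4$ ``sits in a facial triangle whose three edges are exactly the edges of $T$'' is false ($T$ is separating, hence not facial), so Lemma~\ref{prop:dummy-flip} does not legitimize your dummy flip (its hypothesis requires \emph{every} edge of the dummy-flipped face to be incident to a separating triangle, whereas two of the three edges of any face inside a leaf $K_4$ are incident to none); and you give no argument that your plain flips create no new separating triangles (the paper gets this from Lemma~\ref{lem:simflip} applied to dual matchings). More broadly, the paper's engine is different from, and not replaceable by, your leaf-peeling count: it processes a penultimate node $G_i$ of the $4$-block tree, flips an \emph{optimal $4$-connector} obtained by restricting a Tait partition into three perfect dual matchings (Theorem~\ref{thm:tait4c}) and choosing the best of the three, and pays for operations not with separating triangles but with \emph{free interior edges} via the invariants (F1)--(F3) (six credits per flip, fifteen per dummy flip), the checkerboard case being the only place a dummy flip is used. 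Without this averaging over three dual matchings and the free-edge amortization (or some equivalent mechanism), the bound $f+2d\le(n-3)/2$ is not established by your sketch.
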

Recall that there are triangulations on $n$ vertices that contain
$\lfloor(3n-9)/5\rfloor$ pairwise edge-disjoint separating
triangles~\cite{bjrsv-mt4uf-14,hhs-s3pt-01}. In this case, we need to flip away
at least one edge from each separating triangle to reach $4$-connectivity.
Considering that a dummy flip operation flips three edges, we must have
$f+3d\ge\lfloor(3n-9)/5\rfloor$. The crucial claim in Lemma~\ref{lem:charging}
is that $f+2d\le(n-3)/2$ is possible, and later we will show how to replace each
dummy flip by two usual flips rather than three (Lemma~\ref{lem:eliminate}).

The rest of this section is devoted to the proof of Lemma~\ref{lem:charging}. We
describe an algorithm that, given a triangulation $G$ on $n\geq 6$ vertices,
returns a sequence of $f$ flip and $d$ dummy flip operations that produces a
$4$-connected graph. The bound is written equivalently as $6f+12d\le
3n-9$ and is established via the following
charging scheme. Each edge of $G$, with the exception of the three edges of the
outer face, receives one unit of credit. Each edge flip costs six units.
Each dummy flip costs fifteen units and produces three new edges, each of
which receives one unit of credit.

\minipar{4-Block Decomposition.} In our algorithm, we recursively process
$4$-connected subgraphs using the $4$-block tree $\mathcal{B}$ of $G$ (see
\figurename~\ref{fig:4block} for an example). By fixing an (arbitrary) plane
embedding of $G$, we make $\mathcal{B}$ a rooted tree such that the root is the
$4$-block that contains the boundary of the outer face of $G$. Every separating
triangle $T$ of $G$ corresponds to an edge between two $4$-blocks, where the
parent lies in the exterior of $T$ (plus $T$) and the child lies in the interior
of $T$ (plus $T$). For a $4$-block $G_i$ in $\mathcal{B}$ denote by $T_i$ the
outer face of $G_i$, and denote by $n_i$ the number of vertices of $G_i$ minus
three (the vertices of $T_i$). As a maximal planar graph, $G_i$ has
$3(n_i+3)-6=3(n_i+1)$ edges and $2(n_i+3)-4=2(n_i+1)$ faces. An edge of $G_i$ is
called an \emph{interior} edge if it is not incident to the outer face
$T_i$. For each $4$-block $G_i$ in $\mathcal{B}$ we maintain counters $f_i$ and
$d_i$ that denote the number of flips and dummy flips, respectively, that were
used within $G_i$ during the course of the algorithm. Initially $f_i=d_i=0$, for
every vertex $G_i$ of $\mathcal{B}$.

\begin{figure}[htbp]
  \centering%
  \subfloat[before flipping $e$]{\includegraphics{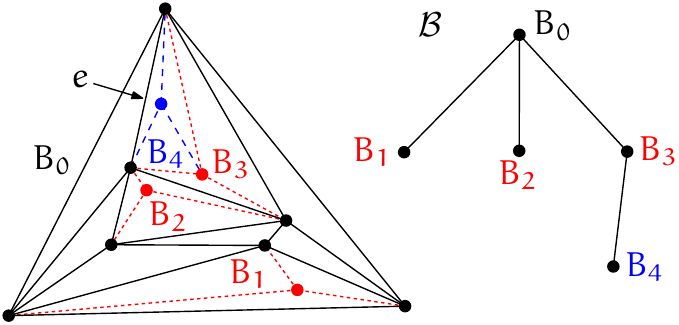}}\hfil
  \subfloat[after flipping $e$]{\includegraphics{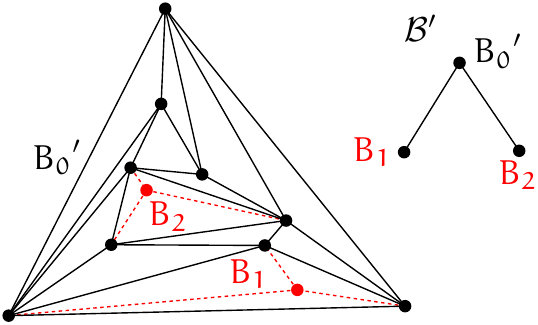}}
  \caption{Example of a 4-block decomposition and how a flip of the edge $e$
    merges blocks. The vertices and edges of the root (level zero) are shown
    solid black, the vertices and edges on level one are shown dotted red, and
    the vertices and edges on level two are shown dashed
    blue.\label{fig:4block}}
\end{figure}

The algorithm computes the sequence of flip and dummy flip operations
incrementally, and maintains a current triangulation produced by the
operations. Both the graph $G$ and the $4$-block decomposition $\mathcal{B}$
change dynamically during the algorithm: when we flip an edge $e$ of some
separating triangle(s), all $4$-blocks containing edge $e$ merge into a single
$4$-block.  At the end of the algorithm, the tree $\mathcal{B}$ consists of a
single $4$-block that corresponds to the $4$-connected graph $G'$.
In order to avoid notational clutter, we always denote the current $4$-block
tree by $\mathcal{B}$. As an invariant (detailed below) we maintain that at each
node of $\mathcal{B}$ the number of interior edges (ignoring dummy edges)
balances the cost of operations that were spent in this $4$-block. As
$\mathcal{B}$ evolves, so does the graph $\mathcal{G(B)}$ \emph{represented} by
$\mathcal{B}$. This graph is the union of all nodes ($4$-blocks) in
$\mathcal{B}$, where for any edge of $\mathcal{B}$ the vertices and edges of the
common triangle in the two endpoints ($4$-blocks) are identified.

%-----------------------------------
\minipar{Main loop.} At every step, we take an arbitrary $4$-block $G_i$ on the
penultimate level of $\mathcal{B}$, that is, $G_i$ is not a leaf but all of its
children are leaves. Let $C_i$ denote the set of indices $c$ such that $G_c$ is
a child of $G_i$ in $\mathcal{B}$, and denote $\mathcal{T}_i=\{T_c\mid c\in
C_i\}$. The algorithm selects a sequence of edges of $G_i$ to be flipped (or
dummy flipped) in order to merge $G_i$ with $G_c$, for all $c\in C_i$, into a
new $4$-block $G_z$. Denote the resulting $4$-block tree by $\mathcal{B}'$. If
no edge of $T_i$ is flipped, then $G_z$ is a leaf of $\mathcal{B}'$. But if an
edge of $T_i$ is flipped, then $G_z$ may be an interior node of $\mathcal{B}'$.

\minipar{Algorithmic preliminaries.}  In each iteration, we flip the edges of a
dual matching of $G_i$ (a $4$-connector, defined below), but if $\mathcal{T}_i$
forms a checkerboard (defined below), we substitute three of these flip
operations by one dummy flip.

A \emph{$4$-connector for $G_i$} is a dual matching of $G_i$ that contains
precisely one edge from every triangle in $\mathcal{T}_i$. By
Lemma~\ref{lem:simflip} we can flip the edges of a $4$-connector in an arbitrary
order, and the 4-blocks $G_c$, for all $c\in C_i$, merge into $G_i$. Note that a
perfect dual matching for $G_i$ consists of $2(n_i+1)/2=n_i+1$ edges and so
every $4$-connector contains at most this many edges.

Consider a partition of the edge set of $G_i$ into three perfect dual matchings
$D_1$, $D_2$, and $D_3$ (Theorem~\ref{thm:tait4c}). For each $D_i$,
$i\in\{1,2,3\}$, the subset $M_i$ of edges that are incident to some triangle
from $\mathcal{T}_i$ is a $4$-connector for $G_i$.  We select
$M\in\{M_1,M_2,M_3\}$ according to the following criteria:
\begin{itemize}
\item $M$ has minimum cardinality and
\item if possible (among the sets of minimum cardinality), then $M$ contains an
  edge of $T_i$.
\end{itemize}
Every $4$-connector that is obtained from some partition $D_1$, $D_2$, $D_3$ in
the described way is an \emph{optimal $4$-connector} for $G_i$ in
$\mathcal{B}$.

We say that $\mathcal{T}_i$ is a \emph{checkerboard} if every interior edge of
$G_i$ belongs to exactly one triangle of $\mathcal{T}_i$. If $\mathcal{T}_i$ is
a checkerboard, then we perform a dummy flip on a triangle $F$ that is selected
according to the following lemma (see \figurename~\ref{fig:lemma17} for
illustration).

\begin{restatable}{lemma}{prelimdummy}\label{lem:predummy}
  If $\mathcal{T}_i$ is a checkerboard, then $G_i$ contains two triangles, $F$
  and $H$, such that $F$ is a bounded facial triangle adjacent to three
  triangles in $\mathcal{T}_i$ and $H$ is adjacent to $T_i$ but not to $F$.
\end{restatable}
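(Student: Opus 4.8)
The goal is to exhibit, inside a $4$-block $G_i$ whose set $\mathcal{T}_i$ of child-triangles forms a checkerboard, a bounded facial triangle $F$ all three of whose neighbors in the dual $G_i^*$ lie in $\mathcal{T}_i$, together with a triangle $H$ that is adjacent to the outer face $T_i$ but not adjacent to $F$. I would set up the proof by working in the dual $G_i^*$, which is a cubic, $2$-connected planar graph on $2(n_i+1)$ vertices, and translate the checkerboard hypothesis into a statement about it. The checkerboard condition says: every interior edge of $G_i$ (equivalently, every edge of $G_i^*$ not incident to the vertex $t_i^*$ dual to the outer face $T_i$) belongs to exactly one triangle of $\mathcal{T}_i$. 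Interpreting triangles of $\mathcal{T}_i$ as a matching $F^*$ in $G_i^*$ (each child-triangle is a separating triangle, so its three dual edges form a triangle in the planar dual, but here I only need the dual edge crossing it), the condition is equivalent to saying that the triangles of $\mathcal{T}_i$ form a perfect matching on the interior vertices — or more precisely, that the dual edges crossing the separating triangles of $\mathcal{T}_i$, restricted to the faces of $G_i$, hit every interior edge exactly once.

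The plan is a two-part counting/parity argument. For the existence of $F$: I want a bounded face of $G_i$ whose all three incident edges are interior and each lies in a triangle of $\mathcal{T}_i$. Call a bounded face \emph{good} if it has this property. Suppose for contradiction that no bounded face is good. Every bounded face that touches $T_i$ fails automatically; for the remaining bounded faces (those with all three edges interior), "not good" means at least one of its three edges is \emph{not} covered by $\mathcal{T}_i$ — but the checkerboard hypothesis says every interior edge \emph{is} covered. So actually the only obstruction is a face having an edge on $T_i$. Thus I must show there is at least one bounded face all of whose edges are interior; equivalently, $G_i$ has a bounded face not incident to $T_i$. Since $G_i$ is a triangulation on $n_i+3 \ge 4$ vertices and $T_i$ is one face, the three vertices of $T_i$ together with the faces incident to them do not exhaust everything once $n_i+3 \ge 5$; the case $n_i = 1$ (i.e.\ $G_i = K_4$) has to be checked separately, but then $\mathcal{T}_i$ is empty, the checkerboard condition is vacuous only if there are no interior edges — in $K_4$ every edge is incident to the outer triangle, so $\mathcal{T}_i$ being a checkerboard forces $\mathcal{T}_i=\emptyset$, and then $G_i$ has no children, contradicting that $G_i$ is on the penultimate level. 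So $n_i \ge 2$, and a short argument (e.g.\ via the dual being $3$-regular with at least $6$ vertices, or directly via Euler) produces an interior face, which is then automatically good; this is my candidate $F$.

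For the triangle $H$: I need a triangle (facial or separating) adjacent to $T_i$ but not to $F$. The face $T_i$ has three vertices; in $G_i$ there are at least three distinct bounded faces incident to the edges of $T_i$ (one per edge of $T_i$, all distinct since $G_i$ is $3$-connected with $n_i+3\ge 5$, so the two faces sharing an edge of $T_i$ are $T_i$ itself and one bounded face). At least one of these three faces adjacent to $T_i$ is not adjacent to $F$: if all three were adjacent to $F$, then $F$ would share an edge with each of three faces all of which share an edge with $T_i$, forcing $G_i$ to be very small (essentially $G_i$ on $5$ or $6$ vertices), and those tiny cases can be ruled out directly because a checkerboard structure needs enough interior edges. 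So I take $H$ to be such a face adjacent to $T_i$ but not to $F$; the statement as phrased allows $H$ to be a separating triangle as well, which only makes the choice easier. I expect the \textbf{main obstacle} to be the bookkeeping in the small cases ($n_i \in \{1,2,3\}$, say), where "checkerboard" imposes rigid constraints that must be reconciled with "$G_i$ is on the penultimate level of $\mathcal{B}$" (hence has at least one child, so $\mathcal{T}_i \ne \emptyset$); once those are dispatched, the generic argument is essentially an Euler-formula count plus the observation that the checkerboard hypothesis makes \emph{every} fully-interior face good, so the only real work is producing one fully-interior face and one $T_i$-adjacent face avoiding it.
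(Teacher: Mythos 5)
There is a genuine gap in the step that produces $F$. You define a bounded face to be ``good'' if all three of its edges are interior and each lies in a triangle of $\mathcal{T}_i$, and you then argue that the checkerboard hypothesis makes every fully-interior bounded face good, so that any face not touching $T_i$ (obtained by an Euler count) can serve as $F$. But ``each edge of $F$ lies in a triangle of $\mathcal{T}_i$'' is strictly weaker than what the lemma demands, namely that $F$ is \emph{adjacent to} (shares an edge with) three triangles of $\mathcal{T}_i$. The two notions coincide only when $F\notin\mathcal{T}_i$: if $F\in\mathcal{T}_i$, then each of its edges is covered by $F$ itself, and in fact the checkerboard condition forbids two members of $\mathcal{T}_i$ from sharing an edge, so such an $F$ is adjacent to \emph{no} triangle of $\mathcal{T}_i$. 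A concrete failure of your selection rule: take $G_i$ to be the octahedron with outer face $T_i=abc$, interior vertices $d,e,f$, and $\mathcal{T}_i=\{abd,bce,acf,def\}$; this is a checkerboard, and the fully-interior face $def$ — a perfectly legitimate output of your Euler argument — lies in $\mathcal{T}_i$ and is adjacent to none of its members, hence is not a valid $F$ (here the valid choices are $bde$, $cef$, $adf$). So what you actually have to prove is the existence of a fully-interior face that is \emph{not} in $\mathcal{T}_i$, and this is where the real content of the lemma sits: the paper gets it from the checkerboard-induced bipartition of the dual into two classes of equal size $n_i+1$, followed by a case distinction on whether the three faces adjacent to $T_i$ lie in $\mathcal{T}_i$ or in $\mathcal{F}_i$ (in the latter case one must first show $|\mathcal{T}_i|\ge 3$). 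None of this counting appears in your plan, and an Euler count alone cannot distinguish the two kinds of fully-interior faces.

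Two smaller remarks. First, your dismissal of $K_4$ rests on the false claim that every edge of $K_4$ is incident to the outer triangle (the three edges at the interior vertex are interior); the conclusion that $K_4$ admits no checkerboard with $\mathcal{T}_i\neq\emptyset$ is nevertheless true, but needs the correct reason (any nonempty choice either double-covers or misses an interior edge). Second, your argument for $H$ (``if all three faces adjacent to $T_i$ were adjacent to $F$, then $G_i$ would be very small'') is left as a gesture; it can be repaired, since a fully-interior facial triangle can never share an edge with all three faces adjacent to $T_i$ (each such shared edge has exactly one endpoint on $T_i$, and a triangle cannot have exactly three edge--endpoint incidences on $T_i$ by parity), whereas the paper instead obtains non-adjacency either from $K_{3,3}$-freeness of the planar dual (choosing $F$ suitably) or for free from the bipartition. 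But even with that repair, the missing existence argument for a valid $F$ is the decisive hole.
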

\begin{proof}
  We partition the set of facial triangles of $G_i$ into two subsets: the set
  $\mathcal{T}_i$ (which are separating triangles in
  $\mathcal{G}(\mathcal{B})$), and the set of all other faces that we denote by
  $\mathcal{F}_i$.  The dual graph $G_i^*$ is a $3$-regular planar graph on
  $2n_i+2$ nodes, one of which corresponds to the outer face $T_i$.

  If $\mathcal{T}_i$ is a checkerboard, then the $2n_i+1$ bounded faces of $G_i$
  induce a bipartite subgraph in $G_i^*$ between $\mathcal{T}_i$ and the bounded
  faces in $\mathcal{F}_i$. This subgraph has precisely three vertices of degree
  two (adjacent to the outer face), all other degrees are three. Since the sum
  of degrees in the two vertex classes are equal, all three neighbors of the
  outer face must be in the same vertex class. Therefore, $G_i^*$ is a bipartite
  graph on all faces, where the two classes are either $\mathcal{T}_i$ and
  $\mathcal{F}_i$, or $\mathcal{T}_i\cup\{T_i\}$ and
  $\mathcal{F}_i\setminus \{T_i\}$. Given that $G_i$ has $2n_i+2$ faces
  (including the outer face $T_i$), the two classes each have size $n_i+1$.

  In particular, $T_i$ is adjacent to three distinct facial triangles of $G_i$
  that are either all in $\mathcal{T}_i$ or all in $\mathcal{F}_i$. We
  distinguish two cases.  First assume $T_i$ is adjacent to three triangles in
  $\mathcal{T}_i$.  Then $\mathcal{F}_i$ also contains at least three
  triangles. Since $G_i^*$ is planar, it does not contain $K_{3,3}$ as a
  subgraph, and so there exists a bounded face $F\in \mathcal{F}_i$ that is not
  adjacent to all three triangles adjacent to $T_i$, and a face
  $H\in \mathcal{T}_i$ adjacent to $T_i$ but not to $F$.  Next assume $T_i$ is
  adjacent to three triangles in $\mathcal{F}_i$, let one of them be $H$.  Since
  these triangles are edge-disjoint, each vertex of $T_i$ is incident to a
  distinct triangle in $\mathcal{T}_i$. This implies that
  $|\mathcal{T}_i\cup \{T_i\}|\geq 4$, and so there is a fourth triangle
  $F\in \mathcal{F}_i\setminus \{T_i\}$ that is adjacent to three triangles in
  $\mathcal{T}_i$.
\end{proof}

\begin{figure}[htbp]
  \centering%
  \subfloat[]{\includegraphics{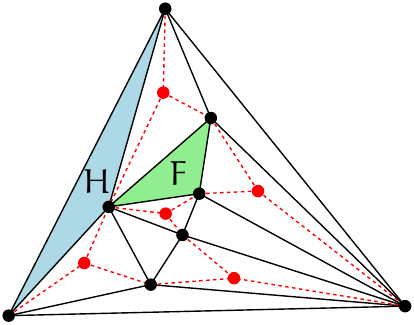}}\hfil
  \subfloat[]{\includegraphics{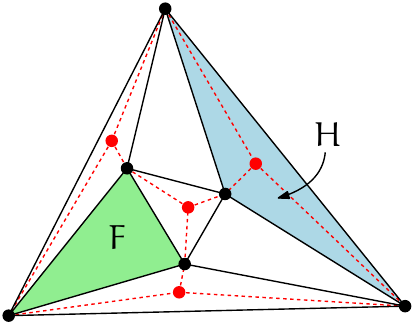}}\hfil
  \subfloat[]{\includegraphics{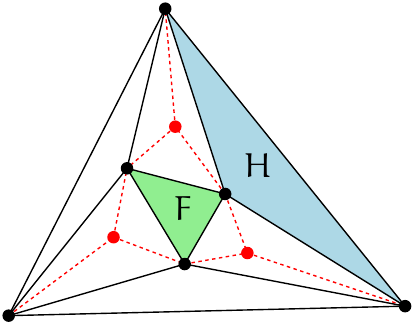}}
  \caption{Three examples to illustrate Lemma~\ref{lem:predummy}. The vertices and
    edges of $G_i$ are shown solid black, the vertices and edges of its children
    are shown dotted red.\label{fig:lemma17}}
\end{figure}

\minipar{Algorithm {\sc 4Connect}$(G)$.}  Given a triangulation $G$, fix an
arbitrary embedding of $G$.  This embedding defines a rooted $4$-block tree
$\mathcal{B}$. While $\mathcal{B}$ is not a singleton, do:
\begin{enumerate}[label={(\arabic*)}]\setlength{\itemindent}{2\labelsep}
\item Consider an arbitrary vertex $G_i$ at the penultimate level of
  $\mathcal{B}$.
\item If $\mathcal{T}_i$ is \emph{not} a checkerboard, then find an optimal
  $4$-connector $M$ for $G_i$ and flip the edges of $M$ in an arbitrary order.
\item Otherwise, let $F$ and $H$ be two triangles of $G_i$ as in
  Lemma~\ref{lem:predummy}.  Let $D\in \{D_1,D_2,D_3\}$ be the dual perfect
  matching that contains the common edge of $H$ and $T_i$.  First apply a dummy
  flip to $F$. Then consider all triangles in $\mathcal{T}_i$ that are not
  adjacent to $F$, in an arbitrary order. For every such triangle, flip the
  incident edge in $D$.
\item Finally, update $\mathcal{B}$ and $\mathcal{G}(\mathcal{B})$.
\end{enumerate}

\minipar{Correctness of the Algorithm.}  We show that the above algorithm
transforms an input triangulation $G$ on $n$ vertices into a $4$-connected
triangulation using a sequence of $f$ flips and $d$ dummy flips, for some
$f,d\in\N_0$, such that $f+2d\leq (n-3)/2$.
By Lemmata~\ref{lem:simflip}, \ref{prop:dummy-flip}, and \ref{lem:predummy}, the
operations described in the algorithm can be performed. In every step of the
algorithm at least two nodes of the $4$-block tree are merged. Therefore, after
a finite number of steps we are left with a block tree that consists of a single
$4$-block $G'$.

\minipar{Independent dummy vertices.} The following observation is crucial for
the second step of our algorithm (Section~\ref{sec:algstep2}) where we eliminate
dummy vertices and simulate dummy flips using regular edge flips.
\begin{restatable}{observation}{obsdummy}\label{obs:dummy}
  For each vertex $v$ created by a dummy flip operation in {\sc 4Connect}$(G)$,
  subsequent operations do not modify the six facial triangles incident to $v$.
\end{restatable}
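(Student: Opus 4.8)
The plan is to follow the twelve edges that bound the six faces incident to a dummy vertex $v$, and to show that none of them ever becomes an edge of a separating triangle of the current graph $\mathcal{G}(\mathcal{B})$ for the remainder of {\sc 4Connect}$(G)$. Since every edge flipped by the algorithm is an edge of a triangle in the relevant $\mathcal{T}_i$ — hence an edge of a separating triangle — and since a dummy flip is applied only to a face all three of whose edges are edges of separating triangles, this will immediately yield the claim: no subsequent operation can flip one of the twelve edges, and no subsequent dummy flip can subdivide one of the six faces (each of which contains an edge at $v$).

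For the base case, suppose $v$ is created by the dummy flip of a facial triangle $F=xyz$ that, as guaranteed by Lemma~\ref{lem:predummy}, is adjacent to three triangles of $\mathcal{T}_i$; these three triangles are the outer faces of three children $G_{c_1},G_{c_2},G_{c_3}$ of the block $G_i$ currently being processed. After inserting $v$ and flipping the three edges of $F$, the vertex $v$ is incident to exactly six faces $vxu_1$, $vu_1y$, $vyu_2$, $vu_2z$, $vzu_3$, $vu_3x$, where $u_1,u_2,u_3$ are the vertices created by the three flips; by Lemma~\ref{prop:dummy-flip} all six are facial and, since $v$ is new, no separating triangle runs through $v$. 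The three flips also merge $G_{c_1},G_{c_2},G_{c_3}$ into $G_i$, so each $u_j$ is an interior vertex of one of these children (or its apex, if the child is a $K_4$), and each of the six ``rim'' edges $xu_1, u_1y, yu_2, u_2z, zu_3, u_3x$ separates one of the six faces at $v$ from a bounded face that lies strictly inside one of $G_{c_1},G_{c_2},G_{c_3}$.

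For the maintenance step, I would invoke the standard invariant of {\sc 4Connect} that at every point the separating triangles of $\mathcal{G}(\mathcal{B})$ are exactly the outer faces of the non-root $4$-blocks of the current tree $\mathcal{B}$; this follows from Lemma~\ref{lem:simflip} (a $4$-connector flip destroys precisely the relevant separating triangles and merges the corresponding blocks), from Lemma~\ref{prop:dummy-flip} (a dummy flip creates no new separating triangle), and from the fact that a $4$-block is internally $4$-connected. Consequently, a vertex in the interior of a $4$-block lies on no separating triangle, and since merges only enlarge blocks ``outward'', $v$ stays interior to its block for the rest of the algorithm, so none of the six edges at $v$ is ever an edge of a separating triangle. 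Each rim edge, in turn, is incident to a bounded face of some leaf $4$-block $G_{c_j}$ distinct from its outer face; such a face is non-separating, lies in the interior of the merged block, and remains non-separating by the invariant, so the rim edges never lie on separating triangles either. Together with the first paragraph this finishes the argument.

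I expect the main obstacle to be the geometric bookkeeping around $v$ in the maintenance step: one has to verify that after the three edges of $F$ are flipped and the three surrounding children merged in, the rim vertices and rim edges really do end up in the interior of the enlarged block — in particular that none of the twelve edges around $v$ coincides with an outer-face edge of any block it ever belongs to — and that the remaining $4$-connector flips carried out in the same iteration \emph{after} the dummy flip neither create nor destroy any of these twelve edges. Both points reduce once more to Lemma~\ref{prop:dummy-flip} together with the merge structure of the $4$-block tree, but they require some care with the embedding.
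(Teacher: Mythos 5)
Your reduction is the right one, and it is the same as the paper's: since every edge the algorithm flips (including the three flips inside a dummy flip) is an edge of a separating triangle of the current graph, and a dummy flip is only applied to a face all three of whose edges lie on separating triangles, it suffices to show that the twelve edges of the six faces around $v$ never lie on a separating triangle of $\mathcal{G}(\mathcal{B})$. Your base case is also essentially the paper's: the three new neighbours of $v$ and the six rim edges lie strictly inside triangles of $\mathcal{T}_i$, whose interiors are leaves of $\mathcal{B}$ and hence contain no separating triangle of $\mathcal{G}(\mathcal{B})$, and no separating triangle passes through the freshly inserted vertex $v$ by Lemma~\ref{prop:dummy-flip}. (Minor slip: the flips do not ``create'' the vertices $u_1,u_2,u_3$; these are the apexes of the faces across the edges of $F$, lying inside the adjacent children.)

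The genuine gap is in your maintenance step. The inference ``a vertex in the interior of a $4$-block lies on no separating triangle'' is false even granting your invariant: the separating triangles are the outer faces of the non-root blocks, and the outer face of a child of a block $G_j$ is a bounded face of $G_j$ whose vertices and edges are in general \emph{interior} vertices and edges of $G_j$ (take an octahedron with one face subdivided by a degree-3 vertex and choose a disjoint outer face: the separating triangle consists of interior vertices of the root block). This matters here because the merged block $G_z$ need not be a leaf---the paper explicitly allows $G_z$ to become an interior node of $\mathcal{B}'$ with children---so ``$v$ stays interior to its block'' does not by itself keep $v$ or the rim edges off separating triangles. Likewise, ``the rim edge is incident to a non-separating bounded face'' does not imply it lies on no separating triangle: every edge of a separating triangle is also an edge of two facial (hence non-separating) triangles, so the existence of one non-separating face at an edge proves nothing. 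The repair is the monotonicity the paper actually uses, and which you already cite as an ingredient of your invariant: no operation of \textsc{4Connect} ever creates a separating triangle (Lemmata~\ref{lem:simflip} and~\ref{prop:dummy-flip}), so every separating triangle present at any later time was already present the moment the dummy flip of $F$ was completed; since at that moment none of the twelve edges lies on a separating triangle (your base case), none ever does. This same observation disposes of the worry in your last paragraph about the $4$-connector flips performed later in the same iteration, since those edges belong to triangles of $\mathcal{T}_i$ and are disjoint from the twelve edges around $v$.
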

\begin{proof}
  The claim directly follows from the following properties of the operations
  performed by the algorithm. (i) When the algorithm flips an edge (including
  the three flips of a dummy flip), this edge is incident to a separating
  triangle of the current graph. (ii) The algorithm never creates new separating
  triangles. (iii) For every vertex $v$ created by a dummy flip, at the end of
  this dummy flip none of the edges of the six triangles incident to $v$ is
  incident to any separating triangle in $\mathcal{G(B)}$.

  The first two properties are obvious, but the third may need a bit of
  justification: Each of the three edges of the face $F$ where $v$ is inserted
  is incident to a triangle from $\mathcal{T}_i$. In particular, the three
  neighbors of $v$ other than the vertices of $F$ lie inside these triangles and
  so do all edges between them and the vertices of $F$. By choice of $G_i$, the
  graph inside any triangle from $\mathcal{T}_i$ is a leaf of $\mathcal{B}$ and,
  therefore, does not contain any separating triangle of $\mathcal{G(B)}$.
\end{proof}

\minipar{Free and trapped edges.}  It remains to bound the number of flip and
dummy flip operations performed by the algorithm. An edge within some $4$-block
$G_i$ of $\mathcal{B}$ is \emph{free} if it is not incident to any separating
triangle of $\mathcal{G(B)}$. Free edges are a good measure of progress for our
algorithm because our final goal is to arrive at a state where all edges of
$\mathcal{G(B)}$ are free. An edge of $G_i$ that is not free is incident to one
or two triangles from $\mathcal{T}_i$. We refer to these edges as \emph{singly
  trapped} and \emph{doubly trapped}, respectively.

\minipar{Invariants.} As an invariant we maintain that every vertex $G_i$ of
$\mathcal{B}$ satisfies the following conditions:
\begin{enumerate}[label={(F\arabic*)}]\setlength{\itemindent}{3\labelsep}
\item\label{i:root} If $G_i$ is the only vertex of $\mathcal{B}$, then it has at
  least $6f_i+15d_i+3$ free edges.
\item\label{i:gen} If $G_i$ is a leaf of $\mathcal{B}$ that is not the root of
  $\mathcal{B}$, then $G_i$ has at least $6f_i+15d_i+3$ free interior edges.
\item\label{i:k4} If $G_i$ is an interior vertex of $\mathcal{B}$, then either
  $f_i=d_i=0$ or $G_i$ has at least $6f_i+15d_i+1$ free interior edges.
\end{enumerate}
Initially, \ref{i:root} holds since $\mathcal{B}$ has at least two vertices.
\ref{i:gen} holds for every leaf $G_i$ of $\mathcal{B}$ because all of the
interior $3(n_i+1)-3=3n_i$ edges are free, $n_i\ge 1$, and $f_i=d_i=0$. Finally,
\ref{i:k4} holds for every interior vertex $G_i$ of $\mathcal{B}$ because
$f_i=d_i=0$. Having a certain number of edges in a plane graph implies having a
certain number of vertices, as quantified by the following lemma.
\begin{restatable}{lemma}{pni}\label{p:ni}
  If $\mathcal{B}$ has at least two nodes, then $n_i\ge 2f_i+5d_i+1$,
  for every $4$-block $G_i$ in $\mathcal{B}$.
\end{restatable}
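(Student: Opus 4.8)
The plan is to read the bound off directly from the structural invariants \ref{i:root}--\ref{i:k4}, using only the elementary fact that a $4$-block has exactly $3n_i$ interior edges. Indeed, $G_i$ is a maximal planar graph on $n_i+3$ vertices, so it has $3(n_i+3)-6 = 3(n_i+1)$ edges; discarding the three edges of its outer triangle $T_i$ leaves exactly $3n_i$ interior edges, and the free interior edges counted in the invariants form a subset of these. Hence it suffices to show that $G_i$ has at least $6f_i+15d_i+1$ free interior edges and then conclude by a divisibility step.

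First I would dispose of the degenerate case $f_i=d_i=0$ separately: there $2f_i+5d_i+1=1$, and since every $4$-block is either $K_4$ or a $4$-connected maximal planar graph it has at least four vertices, i.e.\ $n_i\ge 1$, which settles it. So assume $(f_i,d_i)\neq(0,0)$. Because $\mathcal{B}$ has at least two nodes, $G_i$ is not the unique vertex of $\mathcal{B}$, so invariant \ref{i:root} never applies; and since the root of a rooted tree with at least two nodes has a child, the root is an interior vertex, which means $G_i$ falls under \ref{i:gen} (if it is a non-root leaf) or under \ref{i:k4} (if it is an interior vertex). In the first case \ref{i:gen} yields at least $6f_i+15d_i+3$ free interior edges; in the second, \ref{i:k4} together with $(f_i,d_i)\neq(0,0)$ yields at least $6f_i+15d_i+1$. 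Either way, combining with the edge count above, $3n_i\ge 6f_i+15d_i+1$.

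To finish, observe that $3n_i-(6f_i+15d_i) = 3(n_i-2f_i-5d_i)$ is a positive multiple of $3$, hence at least $3$; dividing by $3$ gives $n_i \ge 2f_i+5d_i+1$, as claimed.

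I do not expect any genuine difficulty in this lemma: its content is entirely carried by the previously established invariants, and the only points requiring a moment's care are (a) confirming that under the hypothesis ``$\mathcal{B}$ has at least two nodes'' the root is always an interior node, so that the correct invariant applies to $G_i$, and (b) the one-line rounding argument at the end. The genuinely delicate part of this portion of the paper is maintaining \ref{i:root}--\ref{i:k4} across flips and dummy flips, but that is handled elsewhere and is not needed here.
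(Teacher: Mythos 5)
Your proof is correct and follows essentially the same route as the paper: invoke invariants \ref{i:gen} and \ref{i:k4} (the case $f_i=d_i=0$ being trivial), compare the guaranteed number of free interior edges with the total edge count of the maximal planar graph $G_i$, and finish with an integrality/divisibility step. The only cosmetic difference is that you count interior edges ($3n_i$) directly while the paper counts all $3(n_i+1)$ edges including those of $T_i$; the arithmetic is identical.
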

\begin{proof}
  For a leaf $G_i$ of $\mathcal{B}$, condition \ref{i:gen} implies that $G_i$
  has at least $6f_i+15d_i+6$ edges (the three edges of $T_i$ are not
  interior). As $G_i$ has exactly $3(n_i+1)$ edges, it follows that $n_i\ge
  2f_i+5d_i+1$. Similarly for an interior vertex $G_i$ of $\mathcal{B}$ with
  $f_i+d_i>0$, condition \ref{i:k4} implies that $G_i$ has at least
  $6f_i+15d_i+4$ edges and so $n_i\ge 2f_i+5d_i+1/3$. As $n_i$ is integral, we
  again obtain $n_i\ge 2f_i+5d_i+1$. Finally, if $f_i=d_i=0$, then the statement
  becomes $n_i\ge 1$, which is trivial.
\end{proof}

\minipar{Invariant maintenance.} It remains to show that each step of the
algorithm maintains invariants \ref{i:root}--\ref{i:k4}.  If an edge $e$ of
$T_i$ is flipped and $G_i$ is not the root of $\mathcal{B}$, then more blocks
may merge into $G_z$: The edge $e$ is definitely shared with the parent of $G_i$
in $\mathcal{B}$, but it may be shared with further ancestors as well. In
addition, the edge $e$ may belong to (at most) one sibling $G_s$ of $G_i$ and
possibly some descendants of $G_s$. We denote by $J$ the set of all $j$ such
that $G_j$ is a leaf of $\mathcal{B}$ that is merged into $G_z$. Similarly,
denote by $Q$ the set of all $q$ such that $G_q$ is an interior vertex of
$\mathcal{B}$ that is merged into $G_z$, and denote by $Q^+$ the set of indices
$q\in Q$ such that $f_q+d_q>0$. Note that neither $J$ nor $Q$ are empty, because
$C_i\subseteq J$ and $i\in Q$. However, we may have $Q^+=\emptyset$.

At the end of a step that merged all $G_j$, for $j\in J\cup Q$, into $G_z$ we
have $f_z=f+\sum_{j\in J\cup Q}f_j$ and $d_z=d+\sum_{j\in J\cup Q}d_j$, where
$f$ and $d$ denote the number of flips and dummy flips, respectively, that were
executed during this step. The following two lemmata do not make specific
assumptions about the set of operations (other than that they are valid, that
is, yield a triangulation).  In particular, the set of edges flipped need not
form a \emph{optimal} $4$-connector.
\begin{restatable}{lemma}{propcounttwo}\label{prop:count2}
  Suppose that $G_i$ together with all its children in $\mathcal{B}$ is merged
  into a \textbf{leaf} $G_z$ of $\mathcal{B}'$ using $f$ flips and $d$ dummy
  flips.  Then $G_z$ contains at least
  $6(f_z-f_i-f)+15(d_z-d_i-d)+3n_i+3|C_i|+3|Q|-3$ free interior edges.
\end{restatable}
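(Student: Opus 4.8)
The plan is a direct counting argument, with no new structural idea needed. The first step is to notice that because $G_z$ is a \emph{leaf} of $\mathcal{B}'$, every interior edge of $G_z$ is automatically free: the only edge of $\mathcal{B}'$ incident to $G_z$ is the one joining it to its parent (and there is no such edge if $G_z$ is the root of $\mathcal{B}'$), so the only separating triangle of $\mathcal{G}(\mathcal{B}')$ that shares an edge with $G_z$ is the outer triangle $T_z$, whose three edges are not interior. Hence the number of free interior edges of $G_z$ equals its total number of interior edges, namely $3(n_z+1)-3=3n_z$, and it remains to give a lower bound on $n_z$.

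The second step is to count the vertices of $G_z$. The block $G_z$ is exactly the union of the blocks $G_k$ with $k\in J\cup Q$, glued along the $|J|+|Q|-1$ common triangles corresponding to the edges of the merged subtree of $\mathcal{B}$, together with the $d$ new vertices created by the dummy flips (ordinary flips do not change the vertex set). Each $G_k$ has $n_k+3$ vertices and each gluing identifies three of them, so $n_z=\sum_{k\in J\cup Q}n_k+d$; in particular $3n_z=3n_i+3d+3\sum_{k\in(J\cup Q)\setminus\{i\}}n_k$.

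The third step is to estimate the last sum using Lemma~\ref{p:ni}, which applies to every block of $\mathcal{B}$ since $\mathcal{B}$ has at least two nodes at the start of this step: $n_k\ge 2f_k+5d_k+1$, that is, $3n_k\ge 6f_k+15d_k+3$. Since $f_z=f+\sum_{k\in J\cup Q}f_k$ and $d_z=d+\sum_{k\in J\cup Q}d_k$ and $i\in Q$, we have $f_z-f_i-f=\sum_{k\in(J\cup Q)\setminus\{i\}}f_k$ and likewise for $d$; summing the per-block inequality over $k\in(J\cup Q)\setminus\{i\}$, using $C_i\subseteq J$ so that $|J|+|Q|-1\ge|C_i|+|Q|-1$, and discarding the nonnegative term $3d$, yields $3n_z\ge 6(f_z-f_i-f)+15(d_z-d_i-d)+3n_i+3|C_i|+3|Q|-3$, which is the claim.

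There is no real obstacle here beyond keeping the bookkeeping straight. The two points that deserve a careful sentence are the first observation (that a leaf block contributes only free interior edges, which is precisely where we use that $G_z$ has no children in $\mathcal{B}'$) and the vertex count (counting each dummy vertex once and each shared-triangle vertex once across the merged subtree), after which everything is a mechanical application of Lemma~\ref{p:ni}.
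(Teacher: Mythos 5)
Your proof is correct and takes essentially the same route as the paper's: observe that a leaf $G_z$ has all $3n_z$ interior edges free, then bound $n_z$ from below by summing Lemma~\ref{p:ni} over the merged blocks and using $C_i\subseteq J$ (with $J$, $Q$ disjoint). Your vertex count is in fact slightly more careful than the paper's, since you keep the $+d$ dummy vertices explicitly before discarding them, but this is a refinement of the same bookkeeping rather than a different argument.
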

\begin{proof}
  Combining Lemma~\ref{p:ni} with $C_i\subseteq J$ we obtain
  \begin{align}
    n_z = \sum_{j\in J\cup Q}n_j &\ge n_i+\sum_{j\in(J\cup
      Q)\setminus\{i\}}(2f_j+5d_j+1)\nonumber\\
    &\ge n_i+|J\cup Q|-1+2(f_z-f_i-f)+5(d_z-d_i-d)\nonumber\\
    &\ge 2(f_z-f_i-f)+5(d_z-d_i-d)+n_i+|C_i|+|Q|-1\,.\nonumber
  \end{align}
  Given that $G_z$ is a leaf of $\mathcal{B}'$, all its $3(n_z+3)-9=3n_z$
  interior edges are free.
\end{proof}
\begin{restatable}{lemma}{propcountthree}\label{prop:count3}
  Suppose that $G_i$ along with all its children in $\mathcal{B}$ is merged into
  an \textbf{interior node} $G_z$ of $\mathcal{B}'$ using $f$ flips and $d$
  dummy flips.  Then $G_z$ contains at least
  $6(f_z-f_i-f)+15(d_z-d_i-d)+3n_i+3|C_i|+1$ free interior edges.
\end{restatable}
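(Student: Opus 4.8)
The plan is to follow the template of the proof of Lemma~\ref{prop:count2} and then pay for the interior edges of $G_z$ that are ``lost'' because $G_z$ has children in $\mathcal{B}'$. First I would reproduce the estimate from the proof of Lemma~\ref{prop:count2}: combining Lemma~\ref{p:ni} with $C_i\subseteq J$ gives
\[
  n_z=\sum_{j\in J\cup Q}n_j\ \ge\ 2(f_z-f_i-f)+5(d_z-d_i-d)+n_i+|C_i|+|Q|-1 .
\]
As $G_z$ is a maximal planar graph on $n_z+3$ vertices it has $3n_z$ interior edges. The separating triangles of $\mathcal{G}(\mathcal{B}')$ that are subgraphs of $G_z$ are exactly the boundary triangles of the children of $G_z$ in $\mathcal{B}'$ (the only other triangle of $\mathcal{B}'$ meeting $G_z$ is its parent triangle, which is the outer face $T_z$ and carries no interior edge). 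Hence every non-free interior edge of $G_z$ lies on one of these $r$ boundary triangles, and it remains to bound how many interior edges that forces us to surrender.

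The structural fact I would exploit is that in one step the algorithm flips at most one edge of $T_i$ — the outer triangle of any block contributes at most one edge to a dual matching, and in the checkerboard case that edge is precisely the $H$--$T_i$ edge of Lemma~\ref{lem:predummy}. Consequently, apart from $G_i$ and its children $C_i$, the blocks that merge into $G_z$ form a path in $\mathcal{B}$ (the chain of blocks sharing the flipped $T_i$-edge, which is a path because the separating triangles through a fixed edge are linearly nested). The merged blocks therefore form a subtree $S\subseteq\mathcal{B}$ with $|S|=|J|+|Q|$ nodes; counting parent--child pairs inside $S$ gives $\sum_{q\in Q}\mu_q=|J|+|Q|-1$ for the numbers $\mu_q$ of merged children of the interior blocks $G_q$, while the unmerged children of the $G_q$ are exactly the children of $G_z$, so $\sum_{q\in Q}\nu_q=r$ and hence $\sum_{q\in Q}\chi_q=|J|+|Q|-1+r$, where $\chi_q=\mu_q+\nu_q$ is the total number of children of $G_q$. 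Now for each interior $G_q$ I would use two facts in parallel: the bound $n_q\ge 2f_q+5d_q+1$ of Lemma~\ref{p:ni} (which carries the flip/dummy terms), and the face count — a maximal planar graph on $n_q+3$ vertices has $2n_q+2$ faces, so a block with $\chi_q$ children and an outer face satisfies $2n_q+2\ge\chi_q+1$, i.e.\ $n_q\ge\lceil(\chi_q-1)/2\rceil$ — which, summed over $q\in Q$, forces $n_z$ to be larger than the naive estimate above by roughly $\tfrac12(r-|Q|)$. Feeding this back, together with the bound that at most $3r$ interior edges are non-free, should yield $3n_z-(\text{non-free})\ge 6(f_z-f_i-f)+15(d_z-d_i-d)+3n_i+3|C_i|+1$.

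The hard part will be the final reconciliation, because the face count only gives $n_q\gtrsim\chi_q/2$ while a crude union bound charges $3$ trapped edges per child, so a straightforward combination loses a factor of two and misses the target. Closing this gap requires observing that the boundary triangles of $G_z$'s children are facial triangles of the (almost) $4$-connected graph $G_z$: distinct such triangles overlap pairwise along edges and along $T_z$, which brings the actual number of non-free interior edges down from $3r$ toward $\tfrac32 r$; one then has to push the integrality of all quantities and the fact that $|Q|\ge 2$ (forced because an interior $G_z$ means an edge of $T_i$ was flipped, hence the parent of $G_i$ merged in) through carefully enough that the $+3|Q|-3$ of Lemma~\ref{prop:count2} degrades exactly to $+1$. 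An alternative route that may avoid some of this bookkeeping is to lower-bound the free interior edges of $G_z$ block by block — for a merged leaf its $3n_j$ interior edges stay free, for a merged interior block with $f_q+d_q>0$ its invariant \ref{i:k4} supply of $6f_q+15d_q+1$ free interior edges survives the merge, and for an untouched merged interior block its $3n_q$ interior edges minus the $3\nu_q$ still trapped by its unmerged children — and then add the edges newly freed by this step's flips; the obstacle there is checking that nothing is double counted and that edges on the boundaries $T_j$ of merged leaves are handled correctly.
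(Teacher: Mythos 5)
Your primary route---take the $n_z$ estimate from Lemma~\ref{prop:count2}, count all $3n_z$ interior edges of $G_z$, and subtract the edges trapped by the $r$ children of $G_z$ in $\mathcal{B}'$---has a genuine gap, which you flag but do not close, and the patch you suggest is not available. The number $r$ (the unmerged children of the merged interior blocks $G_q$, $q\in Q$) is not controlled by any quantity in the target bound: a merged interior block with $f_q=d_q=0$ can have essentially all of its $3n_q$ interior edges trapped by its own children (checkerboard-like blocks), so the loss of up to $3r$ edges cannot be paid for by the face count $n_q\ge(\chi_q-1)/2$, which recovers only about $\tfrac{3}{2}$ per child; moreover that face count cannot simply be added on top of $n_q\ge 2f_q+5d_q+1$ from Lemma~\ref{p:ni}, since the invariants \ref{i:gen} and \ref{i:k4} say nothing about how free and trapped edges are distributed. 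The proposed rescue---that boundary triangles of distinct children of $G_z$ ``overlap pairwise along edges and along $T_z$''---is false in general: two triangles share at most one edge, and the boundary triangles of distinct children are typically pairwise edge-disjoint (exactly the checkerboard situation), so no reduction from $3r$ to roughly $\tfrac{3}{2}r$ exists.

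The missing idea, and the way the paper argues, is to count only regions that are guaranteed to be free instead of $3n_z$ minus losses. Because every child of $G_i$ is merged and an edge of $T_i$ is flipped, after the step everything strictly inside $T_i$ is free: that region contains $n_i+\sum_{c\in C_i}n_c$ vertices, hence $3n_i+3\sum_{c\in C_i}n_c\ge 3n_i+3|C_i|+\sum_{c\in C_i}(6f_c+15d_c)$ free interior edges by Lemma~\ref{p:ni}---note that the $3|C_i|$ term you were unsure about comes exactly from this count, which includes the edges of the boundary triangles $T_c$ of the merged leaves. The new edge created by flipping the $T_i$-edge supplies the $+1$. The remaining merged blocks contribute their invariant-guaranteed free interior edges, $6f_j+15d_j+3$ for $j\in J\setminus C_i$ by \ref{i:gen} and $6f_q+15d_q+1$ for $q\in Q^+$ by \ref{i:k4}; these stay free because only edges incident to separating triangles are flipped and no new separating triangles are created, and there is no double counting because the interior of $T_i$, the interiors of the leaves in $J\setminus C_i$, and the interiors of the blocks $G_q$ are pairwise edge-disjoint. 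Summing yields at least $6(f_z-f_i-f)+15(d_z-d_i-d)+3n_i+3|C_i|+1+3|J\setminus C_i|+|Q^+|$ free interior edges, which proves the claim; untouched interior blocks in $Q\setminus Q^+$ simply contribute nothing and need not be analyzed. Your ``alternative route'' in the last paragraph points in this direction, but as written it is a sketch: the two points you leave open---the fate of the edges on the $T_c$ and where the $+1$ comes from---are precisely where the bound is won.
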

\begin{proof}
  All children of $G_i$ are merged together with $G_i$ into $G_z$. As $G_z$ is
  an interior node of $\mathcal{B}'$, an edge of $T_i$ is flipped in this
  process and the new edge added by this flip is a free interior edge of
  $G_z$. In addition, all edges inside $T_i$ are free interior edges of
  $G_z$. The number of vertices inside $T_i$ is $n_i+\sum_{j\in C_i}n_j$,
  which by Lemma~\ref{p:ni} is at least $n_i+\sum_{j\in
    C_i}(2f_j+5d_j+1)$. Hence the number of free interior edges inside $T_i$
  is at least $3n_i+3|C_i|+\sum_{j\in C_i}(6f_j+15d_j)$.

  From the remaining nodes of $\mathcal{B}$ merged into $G_z$ we get by
  \ref{i:gen} and \ref{i:k4} an additional number of $\sum_{j\in J\setminus
    C_i}(6f_j+15d_j+3)+\sum_{q\in Q^+}(6f_q+15d_q+1)$ free interior edges.
  Summing up yields at least
  \begin{align}
    1&+3n_i+3|C_i|+\sum_{j\in C_i}(6f_j+15d_j)+\sum_{j\in J\setminus
      C_i}(6f_j+15d_j+3)+\sum_{q\in Q^+}(6f_q+15d_q+1)\nonumber\\
    &= 6(f_z-f_i-f)+15(d_z-d_i-d)+3n_i+3|C_i|+1+3|J\setminus C_i|+|Q^+|\nonumber
  \end{align}
  free interior edges in $G_z$.
\end{proof}

\minipar{Case analysis.} We now show that every step of the algorithm {\sc
  4Connect} maintains the invariants \ref{i:root}--\ref{i:k4}. We start with the
case that $\mathcal{T}_i$ forms a checkerboard and then consider the case that
$\mathcal{T}_i$ does not form a checkerboard.
\begin{restatable}{lemma}{propcountfour}\label{prop:count4}
  Suppose that $\mathcal{T}_i$ is a \textbf{checkerboard}. Then $G_z$ fulfills
  invariants \ref{i:root}--\ref{i:k4}.
\end{restatable}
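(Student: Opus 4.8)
The plan is to run the same edge-charging argument as in Lemmata~\ref{prop:count2} and~\ref{prop:count3}, after first pinning down the combinatorics of a checkerboard step precisely enough that the (tight) arithmetic closes. Here is what the checkerboard hypothesis buys us. Since every interior edge of $G_i$ then lies in a triangle of $\mathcal{T}_i$, $G_i$ has no free interior edge; as $G_i$ is an interior vertex of $\mathcal{B}$, invariant~\ref{i:k4} forces $f_i=d_i=0$. Moreover, by the bipartition of $G_i^{*}$ exhibited in the proof of Lemma~\ref{lem:predummy}, the $2n_i+2$ faces of $G_i$ split into two classes of size $n_i+1$, one of which is $\mathcal{T}_i$ or $\mathcal{T}_i\cup\{T_i\}$; hence $|C_i|=|\mathcal{T}_i|$ equals $n_i+1$ if the three faces around $T_i$ all lie in $\mathcal{T}_i$ (Case~A) and equals $n_i$ if they all lie in $\mathcal{F}_i$ (Case~B). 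In either case $|\mathcal{T}_i|\ge 3$, and in Case~A even $|\mathcal{T}_i|\ge 4$, since the triangle $H$ of Lemma~\ref{lem:predummy} is distinct from the three neighbours of $F$.

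Next I would count the operations of the step. It performs one dummy flip ($d=1$) on $F$; since distinct triangles share at most one edge, $F$ is adjacent to three \emph{distinct} triangles of $\mathcal{T}_i$, and by the checkerboard property flipping the three edges of $F$ destroys exactly those three. The step then flips, for each of the remaining $|\mathcal{T}_i|-3=|C_i|-3$ triangles $S\in\mathcal{T}_i$, the unique edge $D\cap S$; by the checkerboard property these edges are pairwise distinct and each such flip destroys exactly $S$. Hence $f=|C_i|-3$, all children $G_c$ ($c\in C_i$) merge into $G_z$, and by Lemmata~\ref{lem:simflip} and~\ref{prop:dummy-flip} the step yields a triangulation with no new separating triangle. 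Because $D$ meets $T_i$ in exactly one edge, at most one flipped edge lies on $T_i$: in Case~A the triangle $H$ lies in $\mathcal{T}_i$, is adjacent to $T_i$ but not to $F$, hence is processed, and $D\cap H=H\cap T_i$ is flipped, so $G_z$ is an interior node of $\mathcal{B}'$ (or, on full collapse, the only node); in Case~B no triangle of $\mathcal{T}_i$ touches $T_i$, so no edge of $T_i$ is flipped and $G_z$ is a leaf of $\mathcal{B}'$ (or the only node).

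It then remains to plug these values into the counting lemmata. If $G_z$ is an interior node (which only happens in Case~A), Lemma~\ref{prop:count3} with $f_i=d_i=0$, $f=|C_i|-3$, $d=1$, $n_i=|C_i|-1$ gives at least $6(f_z-f)+15(d_z-1)+3n_i+3|C_i|+1=6f_z+15d_z+1$ free interior edges, which is~\ref{i:k4}. If $G_z$ is a leaf other than the root, Lemma~\ref{prop:count2} applies with $|Q|=1$ and $n_i=|C_i|$ in Case~B, and with $|Q|\ge 2$ (the parent block, an interior node, merges in because an edge of $T_i$ was flipped) and $n_i=|C_i|-1$ in Case~A; in both cases the bound simplifies to at least $6f_z+15d_z+3$ free interior edges, which is~\ref{i:gen}. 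If $G_z$ is the only remaining node it is $4$-connected, so all of its edges are free; the same estimates give at least $6f_z+15d_z+1$ free interior edges, and together with the three free outer edges this is at least $6f_z+15d_z+3$ free edges, which is~\ref{i:root}. In each subcase the other two of \ref{i:root}--\ref{i:k4} are vacuous for $G_z$.

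The crux, and the likeliest source of error, is the tightness of these bounds together with the case split: the argument closes only because the checkerboard hypothesis is used twice, once to force $f_i=d_i=0$ and once to pin $|C_i|$ to $n_i$ or $n_i+1$, and because in Case~A the single flipped edge of $T_i$ drags the parent block into $G_z$, supplying exactly the extra credited edges needed to absorb the dummy flip's cost of fifteen units. The most tedious part is the operation-counting bookkeeping: verifying that the dummy flip destroys three triangles of $\mathcal{T}_i$ and the $|C_i|-3$ ordinary flips destroy the rest, with all flipped edges distinct and no new separating triangle ever created.
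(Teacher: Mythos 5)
Your proposal is correct and takes essentially the same route as the paper: the checkerboard property forces $f_i=d_i=0$, the step uses $d=1$ and $f=|C_i|-3$ with $|C_i|\in\{n_i,n_i+1\}$, and Lemmata~\ref{prop:count2} and~\ref{prop:count3} close the arithmetic in the same leaf/interior/only-node subcases, with the flipped edge of $T_i$ dragging the parent block into $G_z$ exactly as in the paper's Case~2. Two harmless blemishes: your parenthetical claim that in Case~A the node $G_z$ must be an interior node or the only node is not true (it can be a non-root leaf, a subcase you nevertheless handle in the accounting), and in the only-node subcase the cited estimate gives only $6f_z+15d_z$ (not $6f_z+15d_z+1$) free interior edges when $|Q|=1$, which together with the three free outer edges still yields the required $6f_z+15d_z+3$.
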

\begin{proof}
  In this case, the algorithm performs $d=1$ dummy flip and $f=|C_i|-3$
  flips. As $\mathcal{T}_i$ is a checkerboard, we have $f_i=d_i=0$ (any previous
  flip in $G_i$ would have created a free interior edge). Recall that $G_i$ has
  $2(n_i+1)$ faces, one of which is the outer face, and hence either
  $|C_i|=n_i$ or $|C_i|=n_i+1$ (see also Lemma~\ref{lem:predummy}). We
  distinguish these two cases.

  \minipar{Case~1: $|C_i|=n_i$.} Then $f=n_i-3$. No edge of $T_i$ is flipped in
  this step, and $G_i$ along with all its children is merged into a leaf $G_z$
  of $\mathcal{B}'$.
  %% Michael: I believe the reviewer is right -> this cannot happen
  % If $G_z$ is an interior node of $\mathcal{B}'$, then by
  % Lemma~\ref{prop:count3} we find at least
  % \[
  % 6(f_z-(n_i-3))+15(d_z-1)+3n_i+3n_i+1= 6f_z+15d_z+4
  % \]
  % free interior edges in $G_z$, which implies \ref{i:k4}.
  % Otherwise, $G_z$ is a leaf of $\mathcal{B}'$ and
  By Lemma~\ref{prop:count2} we find at least
  \[
  6(f_z-(n_i-3))+15(d_z-1)+3n_i+3|C_i|+3|Q|-3= 6f_z+15d_z+3|Q|
  \]
  free interior edges in $G_z$, which noting that $i\in Q$ proves \ref{i:gen}.

  \minipar{Case~2: $|C_i|=n_i+1$.} Then $f=n_i-2$ and $T_i$ is adjacent to three
  distinct triangles from $\mathcal{T}_i$. By Lemma~\ref{lem:predummy}, we have
  $H\in \mathcal{T}_i$, and $H$ is not adjacent to the triangle $F$ selected for
  the dummy flip in this step.  Consequently, the algorithm flips the common
  edge of $H$ and $T_i$.

  If the resulting graph $G_z$ is an interior node of $\mathcal{B}'$, then by
  Lemma~\ref{prop:count3} we find at least
  \[
  6(f_z-(n_i-2))+15(d_z-1)+3n_i+3(n_i+1)+1= 6f_z+15d_z+1
  \]
  free interior edges in $G_z$, which implies \ref{i:k4}.  Otherwise,
  $G_z$ is a leaf of $\mathcal{B}'$ and by Lemma~\ref{prop:count2} we
  find at least
  \[
  6(f_z-(n_i-2))+15(d_z-1)+3n_i+3(n_i+1)+3|Q|-3= 6f_z+15d_z+3(|Q|-1)
  \]
  free interior edges in $G_z$. If $G_z$ is the only vertex of $\mathcal{B}'$,
  then together with the three edges of $T_i$ and noting that $i\in Q$ we obtain
  \ref{i:root} for $G_z$. Otherwise, as an edge of $T_i$ is flipped, also the
  parent $G_p$ of $G_i$ is merged into $G_z$. Therefore $\{i,p\}\subseteq Q$ and
  \ref{i:gen} holds for $G_z$.
\end{proof}

The analysis for the case that $\mathcal{T}_i$ does not form a checkerboard is
split into two lemmata. Lemma~\ref{prop:count1} addresses the case that $G_i$
has two separating triangles that share an edge, whereas Lemma~\ref{prop:count5}
discusses the situation that the triangles in $\mathcal{T}_i$ are pairwise
edge-disjoint.
\begin{restatable}{lemma}{propcountone}\label{prop:count1}
  If $G_i$ contains a doubly trapped edge, then $G_z$ fulfills invariants
  \ref{i:root}--\ref{i:k4}.
\end{restatable}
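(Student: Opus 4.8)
The plan is to handle the case where $G_i$ has a doubly trapped edge $e$, meaning two triangles $T_a, T_b \in \mathcal{T}_i$ share $e$. Let me think about what the algorithm does here. Since $\mathcal{T}_i$ is not a checkerboard (that case is handled separately), the algorithm finds an optimal $4$-connector $M$ and flips its edges. The key structural fact I want to exploit is that flipping the single edge $e$ destroys *both* $T_a$ and $T_b$ simultaneously — so the $4$-connector $M$ can cover two triangles of $\mathcal{T}_i$ with one edge. More generally, an optimal $4$-connector has at most $\lceil |\mathcal{T}_i|/1 \rceil$... no — I need a bound of the form $|M| \le$ something like $|C_i| - (\text{number of doubly trapped edges})$, or at least $|M| \le |C_i| - 1$, reflecting that at least one flip is "saved."

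So the key steps are: (1) Establish that the optimal $4$-connector $M$ selected by the algorithm satisfies $f = |M| \le |C_i| - 1$ when a doubly trapped edge exists. This follows because $e$ itself belongs to some perfect dual matching $D_j$ (by Corollary~\ref{c:dualhittriangle}, every edge is in exactly one of $D_1, D_2, D_3$); the corresponding $4$-connector $M_j$ then covers the two triangles $T_a, T_b$ with the one edge $e$, so $|M_j| \le |C_i| - 1$, and since $M$ has minimum cardinality, $|M| \le |C_i| - 1$ as well. (2) Since $d = 0$ here (no dummy flip), feed $f \le |C_i| - 1$ into Lemma~\ref{prop:count2} or Lemma~\ref{prop:count3} depending on whether $G_z$ is a leaf or interior node of $\mathcal{B}'$. (3) Also account for $f_i, d_i$: as in the checkerboard case, I expect $f_i = d_i = 0$ here — any prior flip inside $G_i$ would have created a free interior edge and in particular would have changed the picture. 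Actually I should be careful: $G_i$ having a doubly trapped edge does not by itself force $f_i = d_i = 0$; but the invariants \ref{i:gen}/\ref{i:k4} already give a reserve of $6f_i + 15 d_i + (3 \text{ or } 1)$ free interior edges that I can carry along, so I do not need $f_i = d_i = 0$ — I just need the bookkeeping to go through.

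The verification is then a routine substitution. If $G_z$ is a leaf, Lemma~\ref{prop:count2} gives at least $6(f_z - f_i - f) + 15(d_z - d_i - d) + 3n_i + 3|C_i| + 3|Q| - 3$ free interior edges; with $d = 0$ and $f \le |C_i| - 1$, the term $-6f + 3|C_i| \ge -6(|C_i|-1) + 3|C_i| = -3|C_i| + 6$, and combined with $3n_i \ge 3$ (since $n_i \ge 1$) and $3|Q| \ge 3$ (since $i \in Q$) and the $6f_i + 15d_i$ carried from the invariant, one gets at least $6f_z + 15d_z + 3$ free interior edges — wait, I need to recheck the constant; the $3|C_i|$ does not obviously cancel. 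Let me reconsider: actually the bound should be read as $6(f_z - f_i - f) + 15(d_z - d_i) + 3n_i + 3|C_i| + 3|Q| - 3 \ge 6(f_z - f_i) - 6(|C_i| - 1) + 15(d_z - d_i) + 3 + 3|C_i| + 3 - 3 = 6(f_z - f_i) + 15(d_z - d_i) + 9 - 3|C_i| + 3|C_i|$... the $|C_i|$ terms cancel only because $-6f \ge -6|C_i| + 6$ and $+3|C_i|$ leaves $-3|C_i| + 6$, which does NOT cancel — so this crude bound fails and I must instead use a sharper relation, presumably $f = |M| \le |C_i| - (\text{number of doubly trapped edges in } \mathcal{T}_i)$ or handle each doubly trapped edge as saving enough. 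The main obstacle, then, is getting the *right* bound on $|M|$: I expect the correct statement is that each doubly trapped edge lets the $4$-connector drop by one below the naive $|C_i|$, and more precisely $|M| \le |\mathcal{T}_i|$ should be replaced by counting triangles net of shared edges — and one must check this is compatible with $M$ coming from one of the three fixed partitions $D_1, D_2, D_3$ rather than an arbitrary optimum. That compatibility check (that some $D_j$ simultaneously realizes all the "savings" from doubly trapped edges, or that we only need one saving) is the crux, and it likely rests again on Corollary~\ref{c:dualhittriangle}: put the doubly trapped edge into its matching $D_j$, and argue $|M_j|$ is small enough; the remaining constant slack is absorbed by the $3n_i \ge 3$ and $3|Q| \ge 3$ terms plus the invariant reserve.
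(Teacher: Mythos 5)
There is a genuine gap, and you in fact flag it yourself without resolving it. Your step (1), $|M|\le |C_i|-1$, is correct (the doubly trapped edge lies in one of the three matchings, so one of $M_1,M_2,M_3$ saves a flip, and $M$ has minimum cardinality), but the route you then take---feeding $f\le |C_i|-1$ into Lemma~\ref{prop:count2}/\ref{prop:count3}---does not close, exactly as your own arithmetic shows: the deficit $-6f+3|C_i|\ge -3|C_i|+6$ is not an absorbable constant, since $|C_i|$ can greatly exceed $n_i$ when many faces of $G_i$ are separating triangles. The ``sharper relation'' you then posit, $|M|\le |C_i|-s$ where $s$ is the number of doubly trapped edges, is false in general: the optimal $4$-connector is one of the three fixed sets $M_1,M_2,M_3$ and only contains the doubly trapped edges of its own matching class, so the correct guarantee is merely $|M|=|C_i|-|M\cap S|\le |C_i|-\lceil s/3\rceil$ (pigeonhole over $D_1,D_2,D_3$). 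You correctly identify this ``compatibility check'' as the crux, but leave it open, and even the correct bound does not rescue the Lemma~\ref{prop:count2}-based computation (it still falls short by a small constant in the worst configurations).

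The paper's proof does not go through Lemmata~\ref{prop:count2}/\ref{prop:count3} at all. Instead it counts free interior edges of $G_z$ directly from two sources: the reserves guaranteed by invariants \ref{i:gen} and \ref{i:k4} for all merged blocks, plus the at least $3|C_i|-s-t$ interior edges of $G_i$ incident to triangles of $\mathcal{T}_i$ (where $t\le 3$ counts trapped edges of $T_i$), all of which become free after the flips. This yields at least $6f_z+15d_z+\bigl(6\lceil s/3\rceil-s-t\bigr)$ free interior edges, and the remainder of the proof is a careful case analysis when this slack is too small: it splits on whether $|M|$ attains $|C_i|-\lceil s/3\rceil$, whether $M$ contains an edge of $T_i$ (exploiting the tie-breaking rule in the definition of an optimal $4$-connector, with a residue analysis of $s\bmod 3$ leading to contradictions), and, when $G_z$ is a leaf, uses the edge created by flipping an edge of $T_i$, the third edge of $T_i$, and a separate argument for $Q^+=\emptyset$. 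None of this machinery---the direct count of newly freed trapped edges, the $\lceil s/3\rceil$ bound, or the optimality/tie-break case analysis---appears in your proposal, so as it stands the proof is an unfinished sketch rather than a correct alternative argument.
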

\begin{proof}
  Let $S$ denote the set of doubly trapped edges in $G_i$, and put $s=|S|$.  As
  $s\ge 1$, we know that $\mathcal{T}_i$ is not a checkerboard and so the
  algorithm flips the edges of an optimal $4$-connector $M$.  As $M$ contains an
  edge of every triangle from $\mathcal{T}_i$, we have $|M|=|C_i|-|M\cap S|$.
  In particular, the choice of the optimal $4$-connector implies $|M\cap
  S|\ge\lceil s/3\rceil$.

  As we flip only edges that are incident to a separating triangle, all free
  interior edges of $G_j$ in $\mathcal{B}$, for $j\in J\cup Q$, remain free
  interior edges of $G_z$ in $\mathcal{B}'$. By \ref{i:gen} and \ref{i:k4} we
  obtain the following lower bound on the number of such edges for $j\in J\cup
  Q^+$ (but not for $j\in Q\setminus Q^+$, a detail that we will get back to
  later):
  \[\sum_{j\in J}(6f_j+15d_j+3)+\sum_{q\in Q^+}(6f_q+15d_q+1)=
  6(f_z-|M|)+15d_z+3|J|+|Q^+|\,.\]
  In addition, all interior edges of $G_i$ that are incident to some triangle in
  $\mathcal{T}_i$ become free in $G_z$ (some of them may have been
  flipped). Every triangle in $\mathcal{T}_i$ has three edges, but some of these
  edges are incident to $T_i$---denote the number of these edges by
  $t\in\{0,1,2,3\}$---or to two triangles of $\mathcal{T}_i$. Therefore, at
  least $3|C_i|-s-t$ interior edges of $G_i$ become free and so there are at
  least
  \begin{eqnarray}
    && 6(f_z-|M|)+15d_z+3|J|+|Q^+|+3|C_i|-s-t \nonumber\\
    &\!\ge\!&
    6f_z+15d_z+3(|J|-|C_i|)+|Q^+|+(6\lceil s/3\rceil-s-t)\label{eq:qp}\\
    &\!\ge\!& 6f_z+15d_z+(6\lceil s/3\rceil-s-t)\label{eq:qp1}
  \end{eqnarray}
  free interior edges in $G_z$, where the first inequality uses
  $|M|=|C_i|-|M\cap S|\le|C_i|-\lceil s/3\rceil$ and the second inequality uses
  $C_i\subseteq J$. If $|M|\le|C_i|-\lceil s/3\rceil-1$, then the last summand
  of \eqref{eq:qp1} becomes $6\lceil s/3\rceil-s-t+6$. Given that $t\le 3$, this
  is at least three and, therefore, the claim follows. Similarly, if $t=0$, then
  $6\lceil s/3\rceil-s-t\ge 3\lceil s/3\rceil\ge 3$, where the last inequality
  is due to $s\ge 1$. Again the claim follows. Hence suppose that $t\ge 1$ and
  $|M|=|C_i|-\lceil s/3\rceil$. We distinguish two cases.

  \textbf{Case 1:} $M$ does not contain an edge of $T_i$.  Let $M_1$, $M_2$, and
  $M_3$ denote the three $4$-connectors that $M$ was selected from. We need to
  show that the last summand in \eqref{eq:qp1} is at least three, for which we
  distinguish three subcases, depending on the residue of $s\!\!\mod 3$.

  If $s\equiv 0\mod 3$, then $|M|=|C_i|-s/3$, that is, $M$ contains exactly
  $s/3$ doubly trapped edges. Every doubly trapped edge appears in exactly one
  of $M_1$, $M_2$, or $M_3$. Therefore $|M|=|M_1|=|M_2|=|M_3|=|C_i|-s/3$.  As
  $t\ge 1$, there is at least one (singly) trapped edge $e$ of $T_i$. Given that
  $e$ is trapped, one of $M_1$, $M_2$, or $M_3$ contains it. Hence, by the
  definition of optimality, also $M$ contains an edge of $T_i$, in contradiction
  to our assumption that it does not.

  If $s\equiv 1\mod 3$, then $6\lceil s/3\rceil-s-t=6(s+2)/3-s-t\ge 5-t$ and so
  the claim holds unless $t=3$. If $t=3$, then all three edges of $T_i$ are
  (singly) trapped. Therefore, each of $M_1$, $M_2$, and $M_3$ and, in
  particular, $M$ contains an edge of $T_i$, in contradiction to our assumption
  that it does not.

  It remains to consider the case $s\equiv 2\mod 3$, which implies $s\ge 2$.
  Then $6\lceil s/3\rceil-s-t=6(s+1)/3-s-t\ge s+2-t\ge 4-t$ and so the claim
  holds unless $t\ge 2$. If $t=3$, then argue as in the preceding case and
  arrive at a contradiction. Hence suppose that $t=2$. Suppose without loss of
  generality that $M=M_1$. Given that $M$ does not contain an edge of $T_i$ and
  two edges of $T_i$ are (singly) trapped, both $M_2$ and $M_3$ contain an edge
  of $T_i$. By the optimality criteria it follows that
  $|M_2|,|M_3|\ge|M|+1=|C_i|-(s+1)/3+1=|C_i|-(s-2)/3$, that is, neither $M_2$
  nor $M_3$ contains more than $(s-2)/3$ doubly trapped edges. On the other
  hand, $M$ contains exactly $(s+1)/3$ doubly trapped edges, which leaves
  $(2s-1)/3$ doubly trapped edges for $M_2$ and $M_3$. But
  $2(s-2)/3=(2s-4)/3<(2s-1)/3$, a contradiction.

  \textbf{Case 2:} $M$ contains an edge $e$ of $T_i$. The last summand in
  \eqref{eq:qp} is $6\lceil s/3\rceil-s-t\ge 3\lceil s/3\rceil-t\ge 3-t\ge 0$.
  In our accounting from \eqref{eq:qp} none of the edges of $T_i$ is counted as
  an interior free edge of $G_z$. But the edge that $e$ is flipped into is a
  free interior edge of $G_z$. So we can raise our count by one. Therefore, if
  $G_z$ is an interior node of $\mathcal{B}'$, then \ref{i:k4} holds and the
  claim follows.

  It remains to consider the case that $G_z$ is a leaf of $\mathcal{B}'$. If the
  other two edges of $T_i$ (other than $e$) are both free interior edges of
  $G_z$, then the claim follows. Otherwise, at least one edge $g\ne e$ of $T_i$
  is not a free interior edge of $G_z$. As $G_z$ is a leaf of $\mathcal{B}'$
  (and so $G_z$ does not have a separating triangle), $g$ is an edge of the
  outer face $T_z$ of $G_z$. As any two triangles in a triangulation share at
  most one edge, it follows that the third edge of $T_i$ (other than $e$ and
  $g$) is a free interior edge of $G_z$. This increases our count by another
  edge.

  If $Q^+\ne\emptyset$, then the claim follows.  Otherwise, we have
  $Q^+=\emptyset$. In particular, for the parent $G_p$ of $G_i$ in $\mathcal{B}$
  we have $p\in Q\setminus Q^+$ and so none of the interior edges of $G_p$ have
  been counted in \eqref{eq:qp}. As every node has at least three interior edges
  and---$G_z$ being a leaf of $\mathcal{B}'$ all its interior edges are
  free---the claim follows.
\end{proof}

For the case that there are no doubly trapped edges in $G_i$ and some flips or
dummy flips have already been executed in $G_i$, the following lemma provides an
upper bound on $|M|$ using the invariants.
\begin{lemma}\label{prop:mci:m3}
  Let $M$ be an optimal $4$-connector for $G_i$. If the triangles in
  $\mathcal{T}_i$ are pairwise edge-disjoint but $\mathcal{T}_i$ is not a
  checkerboard, then $|M|\leq n_i-2f_i-5d_i$. Equality is possible only if $M$
  contains an edge of $T_i$.
\end{lemma}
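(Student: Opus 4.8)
The plan is to rewrite the target inequality $|M|\le n_i-2f_i-5d_i$ as a lower bound on the number of free interior edges of $G_i$, and then read that bound off the invariants \ref{i:gen}--\ref{i:k4}. First I would record that, since the triangles in $\mathcal{T}_i$ are pairwise edge-disjoint, no edge of $G_i$ is doubly trapped, so every $4$-connector of $G_i$ — in particular $M$, as well as each of the three candidates $M_1,M_2,M_3$ from which it is chosen — contains exactly one edge of each triangle in $\mathcal{T}_i$, and these $|C_i|$ edges are pairwise distinct (two triangles of $\mathcal{T}_i$ with the same representative edge would share that edge). Hence $|M|=|C_i|$, and moreover $|M_1|=|M_2|=|M_3|=|C_i|$, so the only thing the tie-breaking rule in the choice of an optimal $4$-connector can influence is \emph{which} edges are taken.

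Next I would set up a single global edge count inside $G_i$. Let $F$ be the number of free interior edges of $G_i$, and let $t\in\{0,1,2,3\}$ be the number of edges of $T_i$ that lie on some triangle of $\mathcal{T}_i$. By edge-disjointness the $3|C_i|$ edges incident to triangles of $\mathcal{T}_i$ are all distinct, and the $3(n_i+1)-3|C_i|$ remaining edges of $G_i$ are precisely the $F$ free interior edges together with the $3-t$ edges of $T_i$ not on any triangle of $\mathcal{T}_i$. This gives $3(n_i+1)-3|C_i|=F+3-t$, i.e.\ $|C_i|=n_i-(F-t)/3$; in particular $F\equiv t\pmod 3$. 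Substituting $|M|=|C_i|$, the inequality $|M|\le n_i-2f_i-5d_i$ becomes \emph{equivalent} to $F\ge 6f_i+15d_i+t$, with equality exactly when $F=6f_i+15d_i+t$.

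It then remains to verify $F\ge 6f_i+15d_i+t$, strictly so when $t=0$. The key point is that $6f_i+15d_i\equiv 0\pmod 3$, so the congruence $F\equiv t\pmod 3$ lifts a modest lower bound on $F$ to the required one. If $f_i+d_i>0$, then $F\ge 6f_i+15d_i+1$ by invariant \ref{i:gen} or \ref{i:k4}. If $f_i=d_i=0$, then $6f_i+15d_i=0$, and since $\mathcal{T}_i$ is not a checkerboard and its triangles are edge-disjoint there is an interior edge of $G_i$ lying on no triangle of $\mathcal{T}_i$, so $F\ge 1=6f_i+15d_i+1$. In both cases $F\ge 6f_i+15d_i+1$; combined with $F\equiv t\pmod 3$ this gives $F\ge 6f_i+15d_i+t$ for $t\in\{1,2\}$ and $F\ge 6f_i+15d_i+3$ for $t\in\{0,3\}$, which is exactly what is needed when $t=3$ and is strictly more than needed when $t=0$. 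For the equality clause, equality forces $t\ge 1$ by the strictness just established; then some edge $e$ of $T_i$ lies on a triangle $T_c\in\mathcal{T}_i$, so $e$ is the unique edge of $T_c$ contained in whichever of $D_1,D_2,D_3$ contains $e$, whence $M_j\ni e$ for that index $j$, and since $|M_1|=|M_2|=|M_3|=|C_i|$ the tie-breaking rule selects an optimal $4$-connector $M$ containing an edge of $T_i$.

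The modular arithmetic above is the clever ingredient but is short; the step I expect to need the most care is the edge count of the second paragraph — in particular justifying that an interior edge of $G_i$ missing every triangle of $\mathcal{T}_i$ really is free in $\mathcal{G}(\mathcal{B})$ (such an edge is not shared with any child block, so both faces incident to it are facial triangles of $\mathcal{G}(\mathcal{B})$, not separating ones), and keeping the parameter $t$ tracked consistently through both the count and the equality analysis. Everything else reduces to the arithmetic just described.
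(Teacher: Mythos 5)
Your argument is correct, but it reaches the bound by a different mechanism than the paper. The paper's proof is a pigeonhole over the three perfect dual matchings: when $f_i+d_i>0$, invariant \ref{i:k4} gives at least $6f_i+15d_i+1$ free interior edges, so some $D_j$ contains at least $2f_i+5d_i+1$ of them, none of which survives in $M_j$, whence $|M|\le |M_j|\le (n_i+1)-(2f_i+5d_i+1)$; the case $f_i=d_i=0$ is treated separately using a single free interior edge, and the equality clause comes from noting that equality forces $M_j$ to arise from $D_j$ by deleting free interior edges only (so the edge of $D_j$ on $T_i$ survives), respectively from a contradiction with the checkerboard property. You instead use edge-disjointness to get the exact identity $|M|=|C_i|=n_i-(F-t)/3$, where $F$ is the number of free interior edges and $t$ the number of edges of $T_i$ lying on triangles of $\mathcal{T}_i$, and then lift the invariant bound $F\ge 6f_i+15d_i+1$ to $F\ge 6f_i+15d_i+t$ via the congruence $F\equiv t\pmod 3$; equality then forces $t\ge 1$, and since all three candidates $M_1,M_2,M_3$ have the same size $|C_i|$ while one of them contains the $T_i$-edge witnessing $t\ge 1$, the tie-breaking rule gives the equality clause. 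Your version yields the exact value of $|M|$ and the slightly stronger facts that $M$ contains an edge of $T_i$ whenever $t\ge 1$ and that the inequality is strict whenever $t=0$, while the paper's pigeonhole is shorter and needs neither exactness nor divisibility. Two small points to tighten: the counting identity needs the fact that an interior edge of $G_i$ lying on no triangle of $\mathcal{T}_i$ is free, which is true (and is exactly what the paper asserts when defining trapped edges), but your parenthetical justification via the two incident faces is loose, since freeness concerns all separating triangles of $\mathcal{G(B)}$ --- the clean argument is that these separating triangles are precisely the block-tree triangles, and an interior edge of $G_i$ can lie on such a triangle only if it lies on $T_i$ or on some member of $\mathcal{T}_i$; and since $G_i$ is at the penultimate level it is never a leaf, so the invariant to invoke when $f_i+d_i>0$ is \ref{i:k4} rather than \ref{i:gen}.
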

\begin{proof}
  First assume that $f_i+d_i>0$. Then by \ref{i:k4}, there are at least
  $6f_i+15d_i+1$ free interior edges in $G_i$.  Therefore, at least one of the
  three perfect dual matchings $D_1$, $D_2$, or $D_3$
  (Theorem~\ref{thm:tait4c}), say, $D_1$ contains at least
  $\lceil(6f_i+15d_i+1)/3\rceil=2f_i+5d_i+1$ free interior edges. As none of
  these edges appears in the corresponding $4$-connector $M_1$, we have
  $|M|\le|M_1|\le(n_i+1)-(2f_i+5d_i+1)=n_i-2f_i-5d_i$. In case of equality,
  $M_1$ results from removing only free interior edges from a perfect dual
  matching $D_1$. In particular, as the edge of $D_1$ incident to $T_i$ is not
  interior, both $M_1$ and---by definition of optimality---$M$ contain an edge
  of $T_i$.

  It remains to consider the case $f_i=d_i=0$. As $\mathcal{T}_i$ is not a
  checkerboard, $G_i$ has at least one free interior edge. This edge appears in
  one of the three perfect dual matchings $D_1$, $D_2$, or $D_3$ and therefore
  $|M|\le n_i=n_i-2f_i-5d_i$. If $|M|=n_i$, then suppose contrary to our claim
  that $M$ does not contain any edge of $T_i$. Then $T_i$ is not adjacent to any
  triangle from $\mathcal{T}_i$. (Otherwise, the corresponding edge $e$ shared
  by $T_i$ and a triangle from $\mathcal{T}_i$ appears in one of the three
  $4$-connectors $M_1$, $M_2$, and $M_3$ that $M$ is selected from. As
  $|M_1|=|M_2|=|M_3|$, our optimality criterion selects $M$ to be the
  $4$-connector that contains $e$.)  Given that $|M|=n_i$, we conclude that each
  of the $3n_i$ interior edges of $G_i$ is incident to some triangle from
  $\mathcal{T}_i$. But then $\mathcal{T}_i$ is a checkerboard, contrary to our
  assumption that it is not.
\end{proof}
\begin{restatable}{lemma}{propcountfive}\label{prop:count5}
  Suppose that the triangles in $\mathcal{T}_i$ are pairwise edge-disjoint but
  $\mathcal{T}_i$ is \textbf{not} a \textbf{checkerboard}. Then $G_z$ fulfills
  invariants \ref{i:root}--\ref{i:k4}.
\end{restatable}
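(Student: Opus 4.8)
The plan is to mimic the bookkeeping of Lemma~\ref{prop:count1}, but now exploiting that the triangles in $\mathcal{T}_i$ are pairwise edge-disjoint, so there are no doubly trapped edges and hence $|M|=|C_i|$ unless $M$ drops a trapped edge of $T_i$. First I would split on whether any flipping has already occurred in $G_i$. If $f_i+d_i>0$, Lemma~\ref{prop:mci:m3} gives $|M|\le n_i-2f_i-5d_i$, with equality forcing $M$ to contain an edge of $T_i$; if $f_i=d_i=0$, the same lemma gives $|M|\le n_i$, again with equality forcing $M$ to contain an edge of $T_i$ (and in that edge-disjoint, non-checkerboard situation one in fact checks $|M|=|C_i|$ with $|C_i|\le n_i$). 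In all cases the number of \emph{new} free interior edges created inside $G_i$ is at least $3|C_i|-t$, where $t\in\{0,1,2,3\}$ counts the edges of $T_i$ that lie on some triangle of $\mathcal{T}_i$, since each of the $|C_i|$ pairwise edge-disjoint triangles contributes three interior edges except for the $t$ that touch the outer face.

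Next I would assemble the total count of free interior edges in $G_z$ exactly as in Lemma~\ref{prop:count1}: from the merged leaves and nonzero interior nodes we inherit $\sum_{j\in J}(6f_j+15d_j+3)+\sum_{q\in Q^+}(6f_q+15d_q+1)=6(f_z-|M|)+15d_z+3|J|+|Q^+|$ free interior edges, and adding the at least $3|C_i|-t$ newly freed edges of $G_i$ and using $C_i\subseteq J$ yields at least $6f_z+15d_z+3(|J|-|C_i|)+|Q^+|+(6|C_i|-6|M|-t)$. The arithmetic identity $6|C_i|-6|M|=6(|C_i|-|M|)$ is where the edge-disjointness pays off: since $\mathcal{T}_i$ is not a checkerboard, either $|C_i|<n_i$ or a free interior edge of $G_i$ exists, so Lemma~\ref{prop:mci:m3} forces $|C_i|-|M|\ge 0$ and, crucially, $|M|\le n_i-2f_i-5d_i$ — but $f_i$ and $f$ differ (the flips counted in $f$ are those of this step), so I have to be careful to route everything through $f_z$ and not confuse per-step with cumulative counters, exactly as the earlier lemmata do.

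With the bound $6f_z+15d_z+3(|J|-|C_i|)+|Q^+|+(6(|C_i|-|M|)-t)$ in hand, I finish by a short case distinction identical in spirit to the end of Lemma~\ref{prop:count1}. If $6(|C_i|-|M|)-t\ge 3$ we immediately get $\ge 6f_z+15d_z+3$, which suffices for \ref{i:root} (after adding the three edges of $T_i$) and for \ref{i:gen}. If $6(|C_i|-|M|)-t<3$, then since $|C_i|-|M|\ge 0$ this can only happen when $|C_i|=|M|$ and $t\le 2$, meaning $M$ drops no trapped edge of $T_i$ and at most two edges of $T_i$ are trapped; by the optimality criterion the dropped-nothing situation together with an existing trapped edge of $T_i$ would force $M$ to contain that edge of $T_i$, a contradiction — so one analyses the residual small cases ($t=0$: then $6(|C_i|-|M|)-t=6(|C_i|-|M|)\ge 0$ and if it is $0$ we have $|C_i|=|M|$ and no edge of $T_i$ is trapped, so no edge of $T_i$ is flipped, $G_z$ stays a leaf, and we pick up the extra free interior edges of $G_p\in Q\setminus Q^+$ as in Lemma~\ref{prop:count1}; $t=1,2$: then $M$ must contain the trapped edge of $T_i$ by optimality unless $|M|<|C_i|$, again splitting by whether $G_z$ is a leaf or interior node and harvesting the edge that the flipped edge of $T_i$ becomes). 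The main obstacle is not any single estimate but the careful matching of which triangles contribute how many interior edges once $t$ of them abut $T_i$, and keeping the per-step counters $f,d$ strictly separate from the cumulative $f_z,d_z$ and the pre-step $f_i,d_i$ throughout the three-way case split — the same delicate accounting already carried out in Lemmas~\ref{prop:count2}, \ref{prop:count3}, and \ref{prop:count1}.
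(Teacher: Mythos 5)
There is a genuine gap: the accounting you borrow from Lemma~\ref{prop:count1} does not produce enough free interior edges once there are no doubly trapped edges. In Lemma~\ref{prop:count1} the surplus comes from the term $6\lceil s/3\rceil$ created by the $s\ge 1$ doubly trapped edges forcing $|M|\le|C_i|-\lceil s/3\rceil$. Here, since the triangles of $\mathcal{T}_i$ are pairwise edge-disjoint, each of the three candidate $4$-connectors contains exactly one edge per triangle, so $|M|=|M_1|=|M_2|=|M_3|=|C_i|$ and your key summand $6(|C_i|-|M|)-t$ equals $-t\le 0$. Your total then reads $6f_z+15d_z+3(|J|-|C_i|)+|Q^+|-t$, which in the generic situation $J=C_i$, $Q^+\subseteq\{i\}$ falls short of the $+3$ needed for \ref{i:root}/\ref{i:gen} (and even of the $+1$ for \ref{i:k4} when $t\ge 2$). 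The patches you propose do not close this: when $t=0$ the connector $M$ contains no edge of $T_i$, hence no edge of $T_i$ is flipped, the parent $G_p$ is \emph{not} merged and $Q=\{i\}$, so there is no $G_p\in Q\setminus Q^+$ to harvest; and when $t\ge 1$ the single extra edge obtained from flipping the outer edge still leaves a deficit. Moreover, your invocation of Lemma~\ref{prop:mci:m3} never actually enters your final bound, because that lemma relates $|M|$ to $n_i$ (not to $|C_i|$), and $n_i$ does not appear in your count.

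The missing resource is exactly the gap $n_i-|M|\ge 2f_i+5d_i$ (with an extra $+1$ when $M$ avoids $T_i$) from Lemma~\ref{prop:mci:m3}: it guarantees $3(n_i-|M|)$ free interior edges inside $G_i$ beyond the trapped ones, and these are invisible to a scheme that only adds invariant-guaranteed free edges (which is $0$ for $G_i$ when $f_i=d_i=0$) to the $3|C_i|-t$ newly freed trapped edges. The paper avoids this pitfall by not re-counting at all in this case: it simply plugs $f=|M|$, $d=0$ into the vertex-based counts of Lemmas~\ref{prop:count2} and~\ref{prop:count3} (all $3n_z$ interior edges of a leaf $G_z$, respectively all edges strictly inside $T_i$ plus the flipped outer edge), and then uses $n_i\ge f+2f_i+5d_i$, splitting only according to whether $G_z$ is the sole node, an interior node, or a leaf, and in the leaf case whether $M$ contains an edge of $T_i$ (strict inequality in Lemma~\ref{prop:mci:m3} versus $|Q|\ge 2$ from the merged parent). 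To repair your argument you would have to add precisely this $3(n_i-|M|)$ term, i.e.\ in effect rederive the paper's route.
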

\begin{proof}
  By Lemma~\ref{prop:mci:m3} we have $|M|\leq n_i-2f_i-5d_i$. To conclude the
  analysis we distinguish four cases.

  \minipar{Case~1:} $G_z$ is the only node of $\mathcal{B}'$. Using
  Lemma~\ref{prop:count2} with $f=|M|$ and $d=0$, we find at least
  \[
  6(f_z-f_i-f)+15(d_z-d_i)+3n_i+3|C_i|+3|Q|-3 \ge
  6f_z+15d_z+3|Q|-3
  \]
  free interior edges in $G_z$, where the inequality uses $|C_i|=|M|=f$ and
  $n_i\ge f+2f_i+5d_i$. Given that $i\in Q$ and the three outer edges of $T_z$
  are not interior edges of $G_z$, \ref{i:root} follows.

  \minipar{Case~2:} $G_z$ is an interior vertex of $\mathcal{B}'$.  Then using
  Lemma~\ref{prop:count3} with $f=|M|$ and $d=0$ we find at least
  \begin{align*}
    \;&  6(f_z-f_i-f)+15(d_z-d_i)+3n_i+3|C_i|+1\\
    \ge\;& 6(f_z-f_i)+15(d_z-d_i)+3n_i-3f+1\\
    \ge\;& 6f_z+15d_z+1
  \end{align*}
  free interior edges in $G_z$, where the first inequality uses $|C_i|=|M|=f$
  and the second inequality uses $n_i\ge f+2f_i+5d_i$. This proves \ref{i:k4}.

  \minipar{Case~3:} $G_z$ is a leaf of $\mathcal{B}'$ (but not the only
  node). We distinguish two subcases. If $M$ does not contain an edge of $T_i$,
  then Lemma~\ref{prop:mci:m3} yields $f=|M|\leq n_i-2f_i-5d_i-1$. By
  Lemma~\ref{prop:count2}, we find at least
  \begin{align}
    \;& 6(f_z-f_i-f)+15(d_z-d_i)+3n_i+3|C_i|+3|Q|-3\nonumber\\
    \ge\;& 6f_z-6f_i-3f-3(n_i-2f_i-5d_i-1)+15(d_z-d_i)+3n_i+3f+3|Q|-3\nonumber\\
    =\;& 6f_z+15d_z+3|Q|\nonumber
  \end{align}
  free interior edges in $G_z$, where the inequality uses $f\le n_i-2f_i-5d_i-1$
  and $|C_i|=f$. Since $i\in Q$, we have $|Q|\geq 1$ and \ref{i:gen} follows.

  %% \minipar{Case~2b:}
  Otherwise, $M$ contains an edge of $T_i$. Then the parent $G_p$ of $G_i$ in
  $\mathcal{B}$ is merged into $G_z$. By Lemma~\ref{prop:count2} we find at
  least
  \begin{align}
    \;& 6(f_z-f_i-f)+15(d_z-d_i)+3n_i+3|C_i|+3|Q|-3\nonumber\\
    \ge\;& 6f_z-6f_i-3f-3(n_i-2f_i-5d_i)+15(d_z-d_i)+3n_i+3f+3|Q|-3\nonumber\\
    =\;& 6f_z+15d_z+3(|Q|-1)\nonumber
  \end{align}
  free interior edges in $G_z$, where the inequality uses $f\le n_i-2f_i-5d_i$
  and $|C_i|=f$. Since $\{i,p\}\subseteq Q$, we have $|Q|\geq 2$ and \ref{i:gen}
  follows.
\end{proof}

\minipar{Summary.} In all cases we have shown that the resulting $4$-block tree
$\mathcal{B}'$ satisfies our invariants. Thus the resulting $4$-connected graph
$G'$ has $n+d$ vertices and at least $6f+15d+3$ edges, where $f$ and $d$ denote
the number of flip and dummy flip operations, respectively, that were executed
during the algorithm. Being a maximal planar graph, $G'$ contains exactly
$3(n+d)-6$ edges. Therefore, $6f+15d+3\le 3(n+d)-6$ and so $2f+4d\le n-3$, as
required. This completes the proof of Lemma~\ref{lem:charging}.

\subsection{Second Step: Eliminate Dummy Vertices}\label{sec:algstep2}

At this stage we have a $4$-connected planar graph $G'$. By Tutte's Theorem such
a graph is Hamiltonian, so consider some Hamiltonian cycle $H'$ of $G'$. It
remains to argue how $G'$ and $H'$ can be used to obtain a short sequence of
edge flips that transform the original graph $G$ into a Hamiltonian graph
$G''$. The following lemma in combination with Lemma~\ref{lem:charging}
completes the proof for the first part of Theorem~\ref{thm:hamflip} (Hamiltonian
graph through flip sequence).

\begin{restatable}{lemma}{eliminate}\label{lem:eliminate}
  Suppose that $G'$ has been obtained from $G$ using $f$ flips and $d$
  dummy flips. Then $G$ can be transformed into a Hamiltonian maximal planar
  graph using at most $f+2d$ edge flips.
\end{restatable}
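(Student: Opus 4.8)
The plan is to pass from the $4$-connected graph $G'$—which is Hamiltonian by Tutte's Theorem—to a Hamiltonian maximal planar graph $G''$ on the original $n$ vertices by deleting the $d$ dummy vertices, re-triangulating the hexagonal holes they leave behind, and rerouting a Hamiltonian cycle $H'$ of $G'$ through those holes; and then to exhibit a flip sequence of length at most $f+2d$ from $G$ to $G''$.

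\textbf{The local picture.} Let $v$ be a dummy vertex, created by a dummy flip of a facial triangle $F=abc$. By the precondition of Lemma~\ref{prop:dummy-flip} each edge of $F$ was incident to a separating triangle just before the flip, and in the checkerboard step these three separating triangles are pairwise edge-disjoint, so after the dummy flip $v$ has degree exactly $6$ with a hexagonal link $a,d_1,b,d_2,c,d_3$ (cyclic order), where each $d_i$ lies in the interior of one of those separating triangles. By Observation~\ref{obs:dummy} the six triangles at $v$ are never modified again, so in $G'$ the closed disk $R_v$ bounded by this hexagon is still the fan of six triangles at $v$. Moreover no hexagon vertex is itself a dummy vertex: if, say, $d_1$ were a dummy vertex then the face $abd_1$ would be one of its protected six faces, yet its edge $ab$ is incident to a separating triangle, contradicting Observation~\ref{obs:dummy}; and $a,b,c$ cannot be dummy vertices because $F$ is about to be modified. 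From this one deduces that the disks $R_v$ over all dummy vertices have pairwise disjoint interiors and, using $n\ge 6$, that the reroutes below never interfere.

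\textbf{Rerouting $H'$.} The cycle $H'$ uses exactly two spokes at $v$, say $vx$ and $vy$ with $x,y$ on the hexagon. Since $H'$ already contains $vx$ and $vy$, the edge $xy$ is not on $H'$ (otherwise $v,x,y$ would form a $3$-cycle equal to $H'$, impossible for $n\ge 6$); hence, provided $xy$ can be realized as an edge of $G''$ without creating a parallel edge, replacing the subpath $x\,v\,y$ by $xy$ and deleting $v$ turns $H'$ into a Hamiltonian cycle of $G'-v$. If $x,y$ are consecutive on the hexagon, $xy$ is a hexagon edge, present in every triangulation of $R_v$. If $x$ and $y$ are antipodal on the hexagon, then either $xy\in E(G')$—necessarily drawn outside $R_v$, and we reroute through it—or not, and we triangulate $R_v$ by a hexagon triangulation containing the chord $xy$. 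The remaining possibility, $x,y$ skipping a single hexagon vertex $w$, cannot occur with $xy\in E(G')$: otherwise $w$, together with its two incident hexagon edges and $xy$, would bound a triangle on the side away from $R_v$; as the $4$-connected $G'$ has no separating triangle this triangle is a face, forcing $\deg_{G'}(w)=3$, which is impossible in a $4$-connected graph. So in this case $xy\notin E(G')$ and we again insert the chord $xy$ into a triangulation of $R_v$. Performing this surgery at every dummy vertex turns $H'$ into a Hamiltonian cycle $H''$ of the maximal planar graph $G''$ on $n$ vertices.

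\textbf{Counting the flips, and the main obstacle.} Replay the $f$ ordinary flips of the algorithm verbatim on $G$; by Observation~\ref{obs:dummy} none of them touches any disk $R_v$, so the replay is valid and leaves, in each disk, exactly the triangulation of the hexagon $a\,d_1\,b\,d_2\,c\,d_3$ by the three diagonals $ab,bc,ca$—the ``$F$ plus its three neighbouring triangles'' configuration. It remains to transform, inside each disk, this fixed triangulation into the one the reroute prescribes. When the reroute used an existing edge or a hexagon edge, nothing changes; otherwise we must re-triangulate the hexagon so that it contains a prescribed diagonal, and a short check (one flip reaches a fan triangulation, one further flip if needed inserts the desired chord) shows this costs at most two flips. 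Hence every dummy flip is simulated by at most two ordinary flips, giving a flip sequence of length at most $f+2d$ from $G$ to the Hamiltonian graph $G''$. The delicate part is the consistency bookkeeping: verifying that the disks $R_v$ are genuinely independent, that the replayed ordinary flips really avoid all of them, and that no reroute ever doubles an edge—including ruling out two overlapping disks that both want to insert the same chord, where the hypothesis $n\ge 6$ is used. The crux that makes a valid reroute available in every case is the degree-$3$ argument excluding ``short'' hexagon chords from $G'$; the hexagon flip-distance computation is then routine.
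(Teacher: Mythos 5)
Your proposal is correct and takes essentially the same route as the paper's proof: reuse the $f$ ordinary flips verbatim, freeze the hexagonal neighborhood of each dummy vertex via Observation~\ref{obs:dummy}, classify the two edges of $H'$ at the dummy vertex by their position on the hexagon, and realize the required chord by flipping at most two edges of the original triangle $T$. Your ``delete the dummy vertex and re-triangulate the hexagon'' packaging, together with the degree-$3$/separating-triangle check that the chord is not already present in $G'$, is just a more explicit rendering of the paper's three-case analysis (cf.\ \figurename~\ref{fig:4}), so no further comparison is needed.
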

\begin{proof}
  While it is obvious how to obtain those edges of
  $\mathrm{E}(G')\setminus\mathrm{E}(G)$ that were created using a
  flip (using exactly this flip), we need to argue a bit more for the
  vertices inserted by dummy flips.

  Consider a dummy vertex $v$ and let $T=abc$ denote the triangle of
  $G$ in which $v$ has been inserted
  (cf. Observation~\ref{obs:dummy}).  Obviously $H'$ uses only two of
  the six edges incident to $v$ along a path $uvw$. Our goal is to
  determine the graph $G''$ along with a Hamiltonian cycle $H''$ in
  $G''$. Depending on the relative position of the two edges $uv$ and
  $vw$ we distinguish three cases.

  \textbf{Case 1:} $uv$ and $vw$ are opposite in the circular order of
  edges incident to $v$ (\figurename~\ref{fig:4}(a)--(b)). Then
  exactly one of $uv$ or $vw$ crosses an edge, say $ab$, of $T$ and flipping
  $ab$ yields an edge $uw$ in $G''$. In $H''$ the edge $uw$ takes the
  role of the path $uvw$ in $H'$.

  \textbf{Case 2:} $uv$ and $vw$ are adjacent in the circular order of
  edges incident to $v$ (\figurename~\ref{fig:4}(c)--(d)). Then no
  flip is needed, because the edge $uw$ is an edge of $G$ already.

  \textbf{Case 3:} $uv$ and $vw$ are at distance two in the circular
  order of edges incident to $v$
  (\figurename~\ref{fig:4}(e)--(f)). Then the edges $uv$ and $vw$ each
  intersect an edge of $T$, say, $uv$ intersects the edge $bc$ and
  $vw$ intersects the edge $ab$ of $T$. Then flipping the edge $ab$ in
  $G$ yields the edge $cw$ and a subsequent flip of $bc$ yields the
  desired edge $uw$.

  In every case at most two flips are needed to simulate the sub-path
  of $H'$ passing through a dummy vertex $v$ using a corresponding
  edge in $G''$. Altogether we obtain a Hamiltonian cycle $H''$ in
  $G''$ corresponding to $H'$ in $G'$.
\end{proof}

\begin{figure}[htbp]
  \centering%
  \subfloat[Case~1: before]{\includegraphics[scale=\figscale]{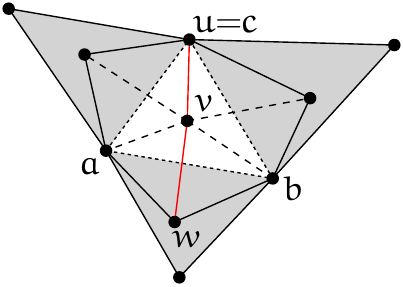}}\hfil
  \subfloat[Case~1: after]{\includegraphics[scale=\figscale]{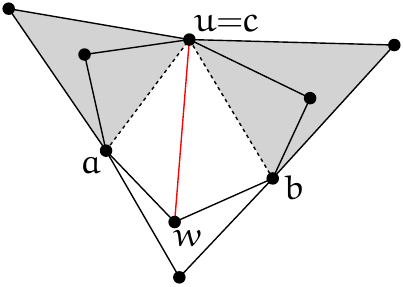}}\hfil\\
  \subfloat[Case~2: before]{\includegraphics[scale=\figscale]{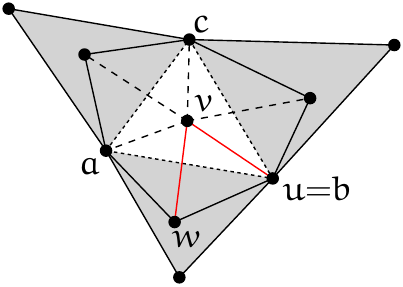}}\hfil
  \subfloat[Case~2: after]{\includegraphics[scale=\figscale]{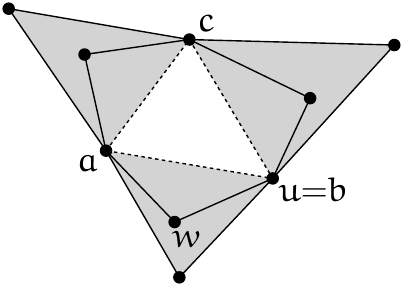}}\hfil\\
  \subfloat[Case~3: before]{\includegraphics[scale=\figscale]{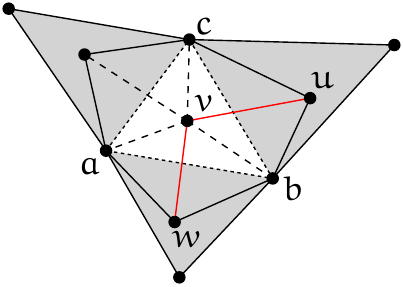}}\hfil
  \subfloat[Case~3: after]{\includegraphics[scale=\figscale]{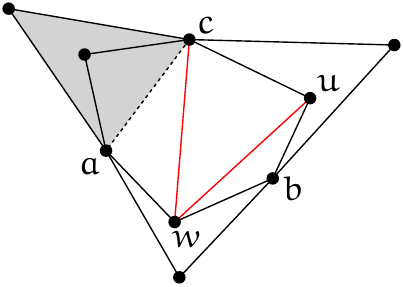}}\hfil
  \caption{Eliminating a dummy vertex $v$ using edge flips. The six edges
    incident to $v$ have been added in $G'$. In this process the three dotted
    edges of the triangle $T$ in the original graph $G$ have been flipped
    away. The three separating triangles adjacent to $T$ are shown shaded. The
    Hamiltonian path $H'$ of $G'$ visits $v$ along the edges shown solid
    red.\label{fig:4}}
\end{figure}

Regarding the second part of Theorem~\ref{thm:hamflip} (subhamiltonian graph
through edge subdivisions) we make a similar argument by translating both flips
and dummy flips into edge subdivisions.

\begin{restatable}{lemma}{subdivide}\label{lem:eliminate2}
  Suppose that $G'$ has been obtained from $G$ using $f$ flips and $d$ dummy
  flips. Then there is a set of at most $f+2d$ edges in $G$ such that subdiving
  them results in a subhamiltonian planar graph.
\end{restatable}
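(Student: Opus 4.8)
The plan is to mimic the strategy behind Lemma~\ref{lem:eliminate}, but instead of realizing the transformation via flips we realize it via subdivisions, and then invoke Theorem~\ref{thm:bernkaingeneral} to pass from subdivisions to biarcs-free subhamiltonicity. Concretely, I want to exhibit a set $S$ of at most $f+2d$ edges of $G$ whose subdivision produces a planar graph $G^\sharp$ that is subhamiltonian, i.e.\ can be augmented to a Hamiltonian planar graph by adding edges. The natural candidate for that Hamiltonian witness is the Hamiltonian cycle $H'$ in $G'$, transported back to $G^\sharp$.

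First I would recall that by Lemma~\ref{lem:charging} the graph $G'$ is $4$-connected, hence Hamiltonian, so fix a Hamiltonian cycle $H'$ of $G'$. The edges of $H'$ fall into three types: edges already present in $G$; edges created by an ordinary flip; and edges incident to a dummy vertex. For each flip used to produce $G'$, the new edge $e'$ of that flip crosses exactly one edge $e$ of $G$ (the flipped edge); if $H'$ uses $e'$, I put $e$ into $S$ and route the corresponding subdivision vertex of $e$ so that the two arcs along $H''$ mimic $e'$ — this is exactly the standard argument that a flip can be simulated by subdividing the flipped edge. For each dummy vertex $v$, sitting inside a facial triangle $T=abc$ of $G$ (by Observation~\ref{obs:dummy} the six faces around $v$ are never touched afterwards, so $T$ is well defined), the cycle $H'$ passes through $v$ along a path $uvw$; I reproduce the case analysis of Lemma~\ref{lem:eliminate} according to whether $uv,vw$ are opposite, adjacent, or at distance two in the rotation at $v$. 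In the opposite case one flipped edge of $T$ suffices, so subdividing that one edge of $T$ lets $H''$ route $u\to\text{(subdivision vertex)}\to w$; in the adjacent case $uw$ is already an edge of $G$ and nothing is added to $S$; in the distance-two case two edges of $T$ are flipped and I subdivide both, routing $H''$ through the two subdivision vertices. In every case the dummy vertex $v$ contributes at most two edges to $S$, and a flip contributes at most one, giving $|S|\le f+2d$.

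Having built $S$ and $H''$, I argue that $G^\sharp := (G \text{ with every edge of } S \text{ subdivided})$ is subhamiltonian. The point is that $H''$ is a cycle in $G^\sharp$ — all original edges of $H'$ survive, and each "detour" through a subdivided edge (for flips and for dummy cases 1 and 3) is a path in $G^\sharp$ — but it may fail to be spanning because a subdivision vertex $x_e$ of an edge $e\in S$ need not lie on $H''$. That is not a problem: $G^\sharp$ is planar, and $H''$ together with $G^\sharp$ shows that $G^\sharp$ has a Hamiltonian \emph{supergraph}, namely $G^\sharp$ plus a short chord at each unused $x_e$ splicing it into $H''$ (each such $x_e$ has its two neighbors consecutive on $H''$, so a single added edge between one of them and $x_e$ absorbs $x_e$), and this supergraph is still planar. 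Hence $G^\sharp$ is subhamiltonian. By Theorem~\ref{thm:bernkaingeneral} (with $k=|S|\le f+2d$), subdividing $S$ makes $G$ subhamiltonian, which is precisely the claim. Combined with Lemma~\ref{lem:charging}, this completes the second part of Theorem~\ref{thm:hamflip}, with $f+2d\le\lfloor(n-3)/2\rfloor$ and the subdivisions computable in $O(n^2)$ time.

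The step I expect to be the main obstacle is verifying, uniformly across the three dummy-vertex cases and the flip case, that the local rerouting of $H'$ into $H''$ really uses only the subdivision vertices of the edges placed into $S$, and that these local pictures are mutually compatible (Observation~\ref{obs:dummy} is what guarantees the six faces around each dummy vertex, and hence its host triangle $T$, remain frozen, so distinct dummy vertices and distinct flips interact with disjoint edge sets). Once that bookkeeping is pinned down, the passage to subhamiltonicity and the invocation of Theorem~\ref{thm:bernkaingeneral} are routine.
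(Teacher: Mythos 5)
There is a genuine gap, and it is exactly at the point you flagged as ``routine'': the treatment of the ordinary flips. You take the Hamiltonian cycle $H'$ in the flipped graph $G'$ and transport it back, simulating each flip-created edge $e'=cd$ used by $H'$ by a detour $c\,x\,d$ through the subdivision vertex $x$ of the flipped edge $e=ab$. This ``standard argument'' is valid when the two faces incident to $e$ at the moment of the flip are faces of the original graph $G$ (as in Lemma~\ref{lem:simflipbiarc}, where all flips happen simultaneously in $G$). But the algorithm produces a \emph{sequence} of flips: a flip performed in a later iteration takes place in an intermediate graph, and the two incident faces may be bounded by edges created by earlier flips, so $c$ and $d$ need not be the apexes of $G$-faces at $ab$, and $c$ (or $d$) need not even be adjacent to $a$ or $b$ in $G$. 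In the overlay with $G$, the edge $e'$ then crosses not only $ab$ but possibly several original edges that were flipped away earlier --- edges which are still present (subdivided or not) in your graph $G^{\sharp}$, since subdivision removes nothing. Consequently the arcs $cx$ and $xd$ of your detour may cross edges of $G^{\sharp}$, and the planarity of $G^{\sharp}\cup H''$, which is what subhamiltonicity requires, is not established; your claim that ``the new edge $e'$ of that flip crosses exactly one edge $e$ of $G$'' is precisely the unproved step. A secondary (non-load-bearing, but incorrect) point: your splice for a subdivision vertex $x_e$ not on $H''$ assumes its two neighbors are consecutive on $H''$, which fails in general since the subdivided edge $ab$ is not available to the cycle; in your construction this situation does not actually occur, so the remark should simply be dropped.

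The paper avoids this transport problem altogether by never choosing the Hamiltonian cycle in $G'$. Instead it builds an auxiliary plane graph $G^{+}$ directly from $G$: perform the $d$ dummy flips, but \emph{replace each ordinary flip by a subdivision} of that edge and re-triangulate the faces containing subdivision vertices with extra chords. Since every separating triangle of $G$ is incident to a flipped or subdivided edge and no new separating triangles arise, $G^{+}$ is $4$-connected, hence has a Hamiltonian cycle $H^{+}$ --- which automatically visits all $f$ flip-subdivision vertices because they are genuine vertices of $G^{+}$, and whose non-dummy edges lie inside faces of $G$, so no planarity issue with the original edges can occur no matter when in the sequence a flip was performed. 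Only the $d$ dummy vertices then need the local three-case elimination (using Observation~\ref{obs:dummy} to guarantee the six surrounding faces are frozen), costing at most two further subdivisions each, for $f+2d$ in total. If you want to salvage your route, you would have to prove that for every flip whose new edge is used by $H'$ the detour can be drawn crossing only the flipped edge --- which, for flips late in the sequence, is not true of the natural drawing; adopting the paper's auxiliary-graph construction is the cleaner fix. (Also note that the final appeal to Theorem~\ref{thm:bernkaingeneral} is unnecessary here: the lemma's conclusion is the subhamiltonicity of the subdivided graph itself; the biarc statement is a separate corollary.)
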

\begin{proof}
  Step~1 of our algorithm identified a set $S$ of $f+3d$ edges in a
  triangulation $G$ such that each separating triangle is incident to at least
  one edge in $S$.  The algorithm destroyed all separating triangles in $G$ by a
  sequence of $f$ flips and $d$ dummy flips to obtain a $4$-connected
  triangulation $G'$.

\begin{figure}[htbp]
  \centering%
  \subfloat[Case~1: before]{\includegraphics[scale=\figscale]{dummy-1}}\hfil
  \subfloat[Case~1: after]{\includegraphics[scale=\figscale]{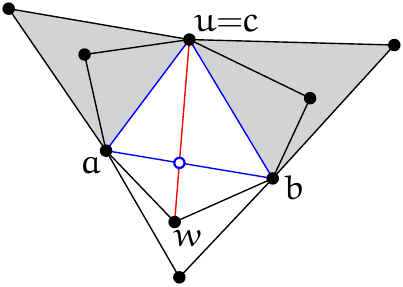}}\hfil\\
  \subfloat[Case~2: before]{\includegraphics[scale=\figscale]{dummy-3}}\hfil
  \subfloat[Case~2: after]{\includegraphics[scale=\figscale]{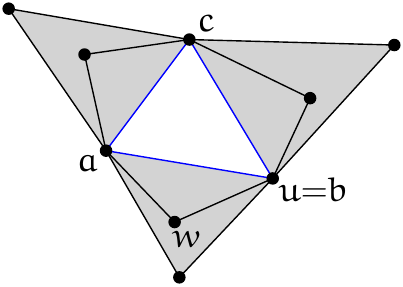}}\hfil\\
  \subfloat[Case~3: before]{\includegraphics[scale=\figscale]{dummy-5}}\hfil
  \subfloat[Case~3: after]{\includegraphics[scale=\figscale]{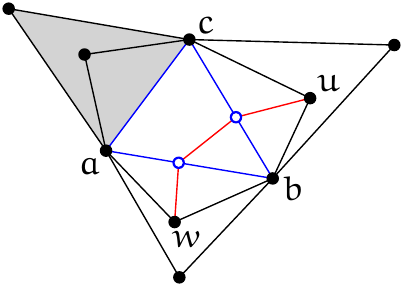}}\hfil
  \caption{Eliminating a dummy vertex $v$ using edge subdivisions. The six edges
    incident to $v$ have been added in $G^+$. In this process the three dotted
    edges of the triangle $T$ in the original graph $G$ have been flipped
    away. The three separating triangles adjacent to $T$ are shown shaded. The
    Hamiltonian path $H^+$ of $G^+$ visits $v$ along the edges shown solid
    red.\label{fig:subdivide}}
\end{figure}

  We now destroy all separating triangles of $G$ by a combination of dummy flips
  and edge subdivisions. Specifically, we construct a graph $G^+$ from $G$ as
  follows.  Perform all dummy flip operations specified by the algorithm;
  instead of each flip operation of the algorithm, subdivide the edge with a new
  vertex; finally insert new edges in the faces incident to subdivision
  vertices: if a face is incident to precisely one subdivision vertex, then
  connect it to the opposite vertex of the face; if it is incident to two or
  three subdivision vertices, then connect all subdivision vertices by an edge
  or a triangle. Since all separating triangles are destroyed and none created,
  the resulting graph $G^+$ is $4$-connected. By Tutte's Theorem $G^+$ contains
  a Hamiltonian cycle $H^+$. It remains to show how to use $G^+$ and $H^+$ to
  replace the $d$ dummy flips with up to $2d$ edge subdivisions.

  Consider a dummy vertex $v$ and let $T=abc$ denote the triangle of $G$ in
  which $v$ has been inserted. By Observation~\ref{obs:dummy} the six triangles
  incident to $v$ right after its insertion remain untouched throughout the
  remainder of the algorithm.  Clearly, $H^+$ uses only two of the six edges
  incident to $v$ along some path $uvw$. We now determine the graph $G^{++}$
  obtained from $G$ with $f+2d$ subdivision vertices along with a Hamiltonian
  cycle $H^{++}$ in $G^{++}$. Similarly to the proof of
  Lemma~\ref{lem:eliminate}, we distinguish three cases depending on the
  relative position of $uv$ and $vw$.

  \textbf{Case 1:} $uv$ and $vw$ are opposite in the circular order of edges
  incident to $v$ (\figurename~\ref{fig:subdivide}(a)--(b)). Then exactly one of
  $uv$ or $vw$ crosses an edge $ab$ of $T$ and subdividing $ab$ yields a new
  path from $u$ to $w$ in $G^{++}$. In $H^{++}$ this path takes the role of the
  path $uvw$ in $H^+$.

  \textbf{Case 2:} $uv$ and $vw$ are adjacent in the circular order of edges
  incident to $v$ (\figurename~\ref{fig:subdivide}(c)--(d)). Then no subdivision
  is needed, because the edge $uw$ is an edge of $G$ already.

  \textbf{Case 3:} $uv$ and $vw$ are at distance two in the circular order of
  edges incident to $v$ (\figurename~\ref{fig:subdivide}(e)--(f)). Then the
  edges $uv$ and $vw$ each intersect an edge of $T$, say, $uv$ intersects the
  edge $bc$ and $vw$ intersects the edge $ab$ of $T$. Subdividing both $ab$ and
  $bc$ yields a new path from $u$ to $w$ in $G^{++}$. In $H^{++}$ this path
  takes the role of the path $uvw$ in $H^+$.

  In every case at most two subdivision vertices are needed to simulate the
  sub-path of $H^+$ passing through a dummy vertex $v$ using a corresponding
  path in $G^{++}$. Consequently, we obtain a Hamiltonian cycle $H^{++}$ in
  $G^{++}$ corresponding to $H^+$ in $G^+$.
\end{proof}

\paragraph{Runtime Analysis.}
It remains to argue that for a maximal planar graph $G$ on $n$ vertices, the
algorithm {\sc 4Connect} can be implemented to run in $O(n^2)$ time.  The
bottleneck is computing an edge-partition into three dual perfect matchings,
which is equivalent to vertex $4$-coloring and can be done in quadratic
time~\cite{bm-gt-2008,rsst-fct-97}.
A $4$-block tree $\mathcal{B}$ of the input $G$ can be computed in $O(n)$
time~\cite{k-amcvr-97}; and we can identify $4$-blocks on the penultimate level
that are checkerboards in $O(n)$ time. Even though the $4$-block tree
$\mathcal{B}$ changes in the course of the algorithm, every new $4$-block is
created by a flip or dummy flip operation, and hence contains a free interior
edge. Checkerboards, however, do not contain free interior edges. Consequently,
every checkerboard $G_i$ that the algorithm encounters is already present in the
input graph at the penultimate level of $\mathcal{B}$. Therefore it is enough to
identify checkerboards during a preprocessing step. It is straightforward to
implement Step~(3) of {\sc 4Connect} in $O(n)$ time for all checkerboards.

Step~(2) considers a node $G_i$ that is not a checkerboard, and requires an
optimal $4$-connector $M$, which in turn depends on the edge partition of $G_i$
into three perfect dual matchings. If we call a $4$-coloring algorithm for each
node $G_i$ in the course of the algorithm, the running time would be
$O(n\cdot n^2)=O(n^3)$. However, it suffices to compute an edge partition into
three perfect dual matchings $D_1\cup D_2\cup D_3$ such that every separating
triangle is incident to precisely one edge from each matching
(Corollary~\ref{c:dualhittriangle}) at preprocessing in $O(n^2)$ time. Since
separating triangles are destroyed but never created by the algorithm, each
separating triangle is already present in the input graph $G$, and one edge is
contained in each of $D_1$, $D_2$, and $D_3$. Consequently, for each $4$-block
$G_i$ we can use the restrictions of the initial edge-partition
$D_1\cup D_2\cup D_3$ to the edges of $G_i$ rather than recomputing a new
edge-partition for $G_i$.  Finally, the maintenance of the $4$-block tree
$\mathcal{B}$ takes $O(n)$ time overall, since every flip and dummy flip merges
two nodes of $\mathcal{B}$ and the initial number of nodes is linear.

Our algorithm {\sc 4Connect} returns a $4$-connected graph $G'$ with $O(n)$
vertices. A Hamiltonian cycle $H'$ in $G'$ can be computed in $O(n)$ time and
space~\cite{cn-hcpls-89}. The local operations in Lemmata~\ref{lem:eliminate}
and~\ref{lem:eliminate2} that replace a dummy flip by ordinary flips or
subdivisions, respectively, can be implemented in $O(n)$ time.

\subsection{Lower bound}

The best lower bound we know can be obtained using the following standard
construction~\cite{ahk-tpst-08,g-cp-03}.
\hamlower* %
\begin{proof}
  Consider an arbitrary maximal planar graph $G_0=(V_0,E_0)$ on $t$ vertices, for
  $t>4$. Let $G=(V_0\cup V_1,E)$ be the graph obtained by inserting a vertex in
  every face of $G_0$ and connecting it to the vertices bounding the face. $G$
  has $n:=t+2t-4=3t-4$ vertices.

  Suppose that there exists a sequence of $k$ flips that transforms $G$ into a
  Hamiltonian graph $H$. Let $C$ be a Hamiltonian cycle in $H$. Since $|V_0|=t$
  and $|V_1|=2t-4$, there must be at least $2t-4-t=t-4$ vertices in $V_1$ that
  are followed by another vertex of $V_1$ in $C$. In $G$, however, $V_1$ forms
  an independent set. Note that every flip in $G$ creates at most one edge
  between vertices of $V_1$.  Hence, $k\geq t-4=(n+4)/3-4=(n-8)/3$.

  Analogously, suppose that we can obtain a subhamiltonian graph $H_0=(V_0\cup
  V_1\cup V_2, E')$ by performing $|V_2|=k$ edge subdivisions in $G$. Augment
  $H_0$ to a Hamiltonian graph $H$ and let $C$ be a Hamiltonian cycle in $H$.
  Since $|V_0|=t$, $|V_1|=2t-4$, and $|V_2|=k$, there must be at least
  $2t-4-t-k=t-k-4$ vertices in $V_1$ that are followed by another vertex of
  $V_1$ in $C$. In $G$ and $H$, however, $V_1$ forms an independent set. Hence,
  $k\geq t-4=(n+4)/3-4=(n-8)/3$.
\end{proof}

\section{Conclusions}

We conclude by listing some open problems. An obvious problem is to obtain a
tight bound for the maximum number of flips needed to transform any
triangulation on $n$ vertices into a Hamiltonian triangulation.
We proved an upper bound of roughly $n/2$ (Theorem~\ref{thm:hamflip})
and a lower bound of roughly $n/3$ (Theorem~\ref{thm:hamlower}).

We have shown that sometimes fewer flips suffice to reach a Hamiltonian
triangulation than a $4$-connected one. Is there a similar difference for
simultaneous flips?  We gave an almost tight bound of roughly $2n/3$ edges to
obtain a $4$-connected triangulation (Theorem~\ref{thm:upsim4}).  Can we get a
better bound on the number of edges in a simultaneous flip when we ask for a
Hamiltonian triangulation only? The lower bound to reach a Hamiltonian
triangulation is the same as in the case of flip sequences: $n/3$. Again, a gap
remains.

Another obvious task is to improve the upper bound for the diameter of the
combinatorial flip graph. In the current upper bound, $4n$ of $5n$ flips are
spent transforming Hamiltonian triangulations to the canonical one. Perhaps,
this is the most promising place to look for improvements. As for lower bounds,
a simple degree argument gives a lower bound of roughly $2n$
flips~\cite{k-dfts-96}. Recently, Frati~\cite{f-lbdfg-15} improved this bound to
roughly $7n/3$ flips.

Finally, there is the algorithmic question of how to efficiently compute a flip
sequence between two triangulations that meets the diameter bounds. We gave a
quadratic time algorithm whose bottleneck is the computation of a Tait partition
(equivalently, a 4-coloring). One perfect dual matching can be computed in
linear time~\cite{bbdl-eapmt-01}, but our averaging argument needs an edge
partition into perfect dual matchings. Is there a different way to do the
accounting that leads to a subquadratic algorithm?

A more ambitious goal is to settle the complexity of the flip distance
problem in the combinatorial setting: Given a positive integer $k$ and two
triangulations $T_1$ and $T_2$ on $n$ vertices, is there a sequence of at most $k$
combinatorial flips that transforms $T_1$ into $T_2$? In the geometric setting
this problem is known to be APX-hard~\cite{p-fdtppsa-14}.

\paragraph{Acknowledgements} This work began at the 12th Gremo's Workshop on
Open Problems (GWOP), June 30--July 4, 2014, in Val Sinestra (GR),
Switzerland. We thank all participants for the productive and positive
atmosphere, and in particular Radoslav Fulek, Anna Gundert, Malte Milatz,
Bettina Speckmann, Sebastian Stich, and Tibor Szab{\'o} for inspiring
discussions.

\bibliographystyle{mh-url} %
\bibliography{arcs}

\begin{thebibliography}{10}
\providecommand{\url}[1]{\texttt{#1}}
\providecommand{\urlprefix}{URL }
\expandafter\ifx\csname urlstyle\endcsname\relax
  \providecommand{\doi}[1]{doi:\discretionary{}{}{}#1}\else
  \providecommand{\doi}{doi:\discretionary{}{}{}\begingroup
  \urlstyle{rm}\Url}\fi
\input{babelbst.tex}
\newcommand{\Capitalize}[1]{\uppercase{#1}}
\newcommand{\capitalize}[1]{\expandafter\Capitalize#1}

\bibitem{aafrs-sdccn-14}
Bernardo~M. \'Abrego, Oswin Aichholzer, Silvia Fern\'andez-Merchant, Pedro
  Ramos, \bbland{} Gelasio Salazar,
  \href{http://dx.doi.org/10.1007/s00454-014-9635-0}{Shellable drawings and the
  cylindrical crossing number of ${K}_n$}. \emph{Discrete Comput. Geom.},
  \textbf{52}, 4, (2014), 743--753.

\bibitem{ahk-tpst-08}
Oswin Aichholzer, Clemens Huemer, \bbland{} Hannes Krasser,
  \href{http://dx.doi.org/10.1016/j.comgeo.2007.07.006}{Triangulations without
  pointed spanning trees}. \emph{Comput. Geom. Theory Appl.}, \textbf{40}, 1,
  (2008), 79--83.

\bibitem{aefklmtw-upsdpgca-14}
Patrizio Angelini, David Eppstein, Fabrizio Frati, Michael Kaufmann, Sylvain
  Lazard, Tamara Mchedlidze, Monique Teillaud, \bbland{} Alexander Wolff,
  \href{http://dx.doi.org/10.7155/jgaa.00324}{Universal Point Sets for Drawing
  Planar Graphs with Circular Arcs}. \emph{J. Graph Algorithms Appl.},
  \textbf{18}, 3, (2014), 313--324.

\bibitem{bk-btg-79}
Frank Bernhart \bbland{} Paul~C. Kainen,
  \href{http://dx.doi.org/10.1016/0095-8956(79)90021-2}{The book thickness of a
  graph}. \emph{J. Combin. Theory Ser. B}, \textbf{27}, (1979), 320--331.

\bibitem{bbdl-eapmt-01}
Therese~C. Biedl, Prosenjit Bose, Erik~D. Demaine, \bbland{} Anna Lubiw,
  \href{http://dx.doi.org/10.1006/jagm.2000.1132}{Efficient algorithms for
  {P}etersen's matching theorem}. \emph{J. Algorithms}, \textbf{38}, 1, (2001),
  110--134.

\bibitem{bk-mpccp-64}
Jaroslav Bla\v{z}ek \bbland{} Milan Koman, A minimal problem concerning
  complete plane graphs. \capitalize\bblin{} Miroslav Fiedler, \bbled{},
  \emph{Theory of graphs and its applications}, \bblpp{} 113--117, Czech. Acad.
  of Sci., 1964.

\bibitem{bm-gt-2008}
John~Adrian Bondy \bbland{} U.~S.~R. Murty,
  \href{http://www.springer.com/gp/book/9781846289699}{\emph{Graph Theory}},
  \bblvol{} 244 \bblof{} \emph{Graduate texts in Mathematics}. Springer,
  Berlin, 2008, {C}hapter~11.

\bibitem{bcgmw-sdfpt-07}
Prosenjit Bose, Jurek Czyzowicz, Zhicheng Gao, Pat Morin, \bbland{} David~R.
  Wood, \href{http://dx.doi.org/10.1002/jgt.20214}{Simultaneous diagonal flips
  in plane triangulations}. \emph{J. Graph Theory}, \textbf{54}, 4, (2007),
  307--330.

\bibitem{bh-fpg-09}
Prosenjit Bose \bbland{} Ferran Hurtado,
  \href{http://dx.doi.org/10.1016/j.comgeo.2008.04.001}{Flips in planar
  graphs}. \emph{Comput. Geom. Theory Appl.}, \textbf{42}, 1, (2009), 60--80.

\bibitem{bjrsv-mt4uf-14}
Prosenjit Bose, Dana Jansens, Andr{\'e} van Renssen, Maria Saumell, \bbland{}
  Sander Verdonschot,
  \href{http://dx.doi.org/10.1016/j.comgeo.2012.10.012}{Making triangulations
  4-connected using flips}. \emph{Comput. Geom. Theory Appl.}, \textbf{47}, 2,
  (2014), 187--197.

\bibitem{bv-hfct-12}
Prosenjit Bose \bbland{} Sander Verdonschot,
  \href{http://dx.doi.org/10.1007/978-3-642-34191-5_3}{A history of flips in
  combinatorial triangulations}. \capitalize\bblin{} \emph{Computational
  Geometry---XIV Spanish Meeting on Computational Geometry, EGC 2011},
  \bblvol{} 7579 \bblof{} \emph{LNCS}, \bblpp{} 29--44, Springer, Berlin, 2012.

\bibitem{chktw-adfdht-15}
Jean Cardinal, Michael Hoffmann, Vincent Kusters, Csaba~D. T\'{o}th, \bbland{}
  Manuel Wettstein, \href{http://dx.doi.org/10.4230/LIPIcs.STACS.2015.197}{Arc
  diagrams, flip distances, and {H}amiltonian triangulations}.
  \capitalize\bblin{} \emph{Proc. 32nd Sympos. Theoret. Aspects Comput. Sci.},
  \bblvol{}~30 \bblof{} \emph{LIPIcs}, \bblpp{} 197--210, Schloss Dagstuhl --
  Leibniz-Zentrum f{\"u}r Informatik, 2015.

\bibitem{cn-asla-85}
Norishige Chiba \bbland{} Takao Nishizeki,
  \href{http://dx.doi.org/10.1137/0214017}{Arboricity and subgraph listing
  algorithms}. \emph{SIAM J. Comput.}, \textbf{14}, 1, (1985), 210--223,
  \doi{10.1137/0214017}.

\bibitem{cn-hcpls-89}
Norishige Chiba \bbland{} Takao Nishizeki,
  \href{http://dx.doi.org/10.1016/0196-6774(89)90012-6}{The {Hamiltonian} cycle
  problem is linear-time solvable for 4-connected planar graphs}. \emph{J.
  Algorithms}, \textbf{10}, 2, (1989), 187--211.

\bibitem{cp-ltadp-95}
Marek Chrobak \bbland{} Thomas~H. Payne,
  \href{http://dx.doi.org/10.1016/0020-0190(95)00020-D}{A linear-time algorithm
  for drawing a planar graph on a grid}. \emph{Inform. Process. Lett.},
  \textbf{54}, (1995), 241--246.

\bibitem{Die16}
Reinhard Diestel, \href{http://diestel-graph-theory.com/}{\emph{Graph theory}},
  \bblvol{} 173 \bblof{} \emph{Graduate Texts in Mathematics}. Springer,
  Berlin, 5 \bbledn{}, 2016.

\bibitem{e-upsdpgca-14}
David Eppstein,
  \href{http://www.ics.uci.edu/~eppstein/pubs/AngEppFra-CCCG-13-slides.pdf}{Universal
  point sets for planar graph drawings with circular arcs}. Presentation at the
  25th Canadian Conference on Computational Geometry, Waterloo, Canada, 2013.

\bibitem{f-lbdfg-15}
Fabrizio Frati, \href{http://arxiv.org/abs/1508.03473}{A lower bound on the
  diameter of the flip graph}. \emph{CoRR}, \textbf{abs/1508.03473}.

\bibitem{fpp-hdpgg-90}
Hubert de~Fraysseix, J{\'a}nos Pach, \bbland{} Richard Pollack,
  \href{http://dx.doi.org/10.1007/BF02122694}{How to draw a planar graph on a
  grid}. \emph{Combinatorica}, \textbf{10}, 1, (1990), 41--51.

\bibitem{ddl-srd-13}
Emilio~Di Giacomo, Walter Didimo, \bbland{} Giuseppe Liotta, Spine and radial
  drawings. \capitalize\bblin{} Roberto Tamassia, \bbled{}, \emph{Handbook of
  Graph Drawing and Visualization}, \bblchap{}~8, \bblpp{} 247--284, CRC press,
  Boca Raton, FL, 2013.

\bibitem{ddlw-ccdpg-05}
Emilio~Di Giacomo, Walter Didimo, Giuseppe Liotta, \bbland{} Stephen~K.
  Wismath,
  \href{http://dx.doi.org/10.1016/j.comgeo.2004.04.002}{Curve-constrained
  drawings of planar graphs}. \emph{Comput. Geom. Theory Appl.}, \textbf{30},
  1, (2005), 1--23.

\bibitem{g-cp-03}
Branko Gr{\"u}nbaum,
  \href{http://dx.doi.org/10.1007/978-1-4613-0019-9}{\emph{Convex Polytopes}},
  \bblvol{} 221 \bblof{} \emph{Graduate texts in Mathematics}. Springer,
  Berlin, 2003.

\bibitem{hhs-s3pt-01}
Jochen Harant, Mirko Hor{\v{n}}{\'a}k, \bbland{} Zdislaw Skupie{\'n},
  \href{http://dx.doi.org/10.1016/S0012-365X(01)00047-4}{Separating 3-cycles in
  plane triangulations}. \emph{Discrete Math.}, \textbf{239}, 1, (2001),
  127--136.

\bibitem{k-amcvr-97}
Goos Kant, \href{http://dx.doi.org/10.1142/S0218195997000132}{A more compact
  visibility representation}. \emph{Int. J. Comput. Geometry Appl.},
  \textbf{7}, 3, (1997), 197--210.

\bibitem{kw-evpfb-02}
Michael Kaufmann \bbland{} Roland Wiese,
  \href{http://dx.doi.org/10.7155/jgaa.00046}{Embedding vertices at points: Few
  bends suffice for planar graphs}. \emph{J. Graph Algorithms Appl.},
  \textbf{6}, 1, (2002), 115--129.

\bibitem{k-dfts-96}
Hideo Komuro, \href{http://hdl.handle.net/10131/5673}{The diagonal flips of
  triangulations on the sphere}. \emph{Yokohama Math. J.}, \textbf{44}, (1997),
  115--122.

\bibitem{mno-dfhts-03}
Ryuichi Mori, Atsuhiro Nakamoto, \bbland{} Katsuhiro Ota,
  \href{http://dx.doi.org/10.1007/s00373-002-0508-6}{Diagonal Flips in
  {H}amiltonian Triangulations on the Sphere}. \emph{Graphs Combin.},
  \textbf{19}, 3, (2003), 413--418.

\bibitem{n-ppmnc-68}
T.~Alastair~J. Nicholson,
  \href{http://dx.doi.org/10.1049/piee.1968.0004}{Permutation procedure for
  minimising the number of crossings in a network}. \emph{Proc. IEE},
  \textbf{115}, 1, (1968), 21--26.

\bibitem{pw-epgfv-01}
J{\'a}nos Pach \bbland{} Rephael Wenger,
  \href{http://dx.doi.org/10.1007/PL00007258}{Embedding planar graphs at fixed
  vertex locations}. \emph{Graphs Combin.}, \textbf{17}, (2001), 717--728.

\bibitem{p-fdtppsa-14}
Alexander Pilz, \href{http://dx.doi.org/10.1016/j.comgeo.2014.01.001}{Flip
  distance between triangulations of a planar point set is {APX}-hard}.
  \emph{Comput. Geom. Theory Appl.}, \textbf{47}, 5, (2014), 589--604.

\bibitem{p-da-14}
Lionel Pournin, \href{http://dx.doi.org/10.1016/j.aim.2014.02.035}{The diameter
  of associahedra}. \emph{Adv. Math.}, \textbf{259}, (2014), 13--42.

\bibitem{rsst-fct-97}
Neil Robertson, Daniel Sanders, Paul Seymour, \bbland{} Robin Thomas,
  \href{http://dx.doi.org/10.1006/jctb.1997.1750}{The Four-Colour Theorem}.
  \emph{J. Combin. Theory Ser. B}, \textbf{70}, 1, (1997), 2--44.

\bibitem{s-mnicg-64}
Thomas~L. Saaty,
  \href{https://www.ncbi.nlm.nih.gov/pmc/articles/PMC300329/}{The minimum
  number of intersections in complete graphs}. \emph{Proc. Nat. Acad. Sci.
  USA}, \textbf{52}, (1964), 688--690.

\bibitem{stt-rdthg-88}
Daniel~D. Sleator, Robert~E. Tarjan, \bbland{} William~P. Thurston,
  \href{http://dx.doi.org/10.1090/S0894-0347-1988-0928904-4}{Rotation distance,
  triangulations, and hyperbolic geometry}. \emph{J. Amer. Math. Soc.},
  \textbf{1}, (1988), 647--682.

\bibitem{Tut56}
William~T. Tutte, \href{https://doi.org/10.1090/S0002-9947-1956-0081471-8}{A
  theorem on planar graphs}. \emph{Amer. Math. Soc.}, \textbf{82}, (1956),
  99--116.

\bibitem{Whi31}
Hassler Whitney, \href{http://dx.doi.org/10.2307/1968197}{A theorem on graphs}.
  \emph{Annals Math.}, \textbf{32}, 2, (1931), 378--390.

\bibitem{w-chcpm-82}
Avi Wigderson,
  \href{http://www.math.ias.edu/~avi/PUBLICATIONS/MYPAPERS/W82a/tech298.pdf}{The
  complexity of the {H}amiltonian circuit problem for maximal planar graphs}.
  Technical Report 298, Princeton University, 1982.

\end{thebibliography}

\end{document}